\newtheorem{theorem}{Theorem}
\newtheorem{lemma}{Lemma}
\newtheorem{definition}{Definition}
\newtheorem{remark}{Remark}
\title{Classical dynamical $r$-matrices for the Chern-Simons formulation of generalised 3d gravity}
\author{Juan Carlos Morales Parra{$\dagger$} \quad Bernd J Schroers{$\ddagger$} \\ $\dagger$ \small{Maxwell Institute of Mathematical Sciences and Department of Mathematics,} \\ \small{Heriot-Watt University, Edinburgh EH14 4AS, United Kingdom} \\ \small{\url{jcm2000@hw.ac.uk}} \\
$\ddagger$ \small{Maxwell Institute of Mathematical Sciences and School of Mathematics,} \\ \small{University of Edinburgh, Edinburgh EH9 3FD, United Kingdom} \\ \small{\url{b.schroers@ed.ac.uk}} \\ }
\date{\today}
\begin{document}

\maketitle
\begin{center}
\textbf{Abstract} 
\end{center}

Classical dynamical $r$-matrices arise naturally in the combinatorial description of the phase space of Chern-Simons theories, either through the inclusion of dynamical sources or through a gauge-fixing procedure involving two punctures. Here we consider classical dynamical $r$-matrices for the family of Lie algebras which arise in the Chern-Simons formulation of 3d gravity, for any value of the cosmological constant. We derive differential equations for classical dynamical $r$-matrices in this case, and show that they can be viewed as  generalised complexifications, in a sense which we define, of the equations governing dynamical $r$-matrices for $\mathfrak{su}(2)$ and $\mathfrak{sl}(2,\mathbb{R})$. We obtain explicit families of  solutions and relate them, via Weierstrass factorisation,  to solutions found by Feher, Gabor, Marshall, Palla and Pusztai in the context of chiral WZWN models.

\section{Introduction}\label{Sec1paper1}


\subsection{Motivation}

One important reason for studying classical dynamical $r$-matrices, and the one which motivates this paper,  is their role in the gauge-fixed Poisson structures of character varieties over punctured Riemann surfaces. Such  Poisson spaces  appear in particular as phase-spaces of lower dimensional gauge theories, like the Chern-Simons formulation of gravity in three dimensions (see e.g. \cite{schroers2001combinatorial}, \cite{Buffenoir:2005zi}, \cite{Meusburger_2011}, \cite{Meusburger:2011oew}, \cite{Meusburger:2012wc}). As we shall explain, the gauge fixing is interesting and natural classically. 
In addition, it offers a route to quantisation  of constraint systems which avoids some of the technical challenges of imposing constraints after quantisation. This is of particular interest in the Chern-Simons formulation of 3d gravity where the quantised theory is an interesting toy model for quantum gravity.

The goal of this paper is  to study the space of classical dynamical $r$-matrices  up to a naturally defined notion of gauge equivalence for some particular Lie algebras $\mathfrak{g}_\lambda$ that appear in the setting of 3d gravity,  namely the Lie algebras of the symmetry groups of maximally symmetric three dimensional Riemannian and pseudo-Riemannian manifolds. In this extended introduction we define classical dynamical $r$-matrices, illustrate their appearance in  gauge-fixed character varieties in a simple example and then prepare the ground for a full classification of  the  classical dynamical $r$-matrices which are relevant for 3d gravity in the remainder of this paper.

\subsection{Definitions}

A classical dynamical $r$-matrix is an equivariant solution of the Classical Dynamical Yang-Baxter Equation. More precisely, given a finite dimensional Lie algebra $\mathfrak{g}$, a Lie subalgebra $\mathfrak{h} \subseteq \mathfrak{g}$ and an element $K \in (S^2 \mathfrak{g})^{\mathfrak{g}}$, a classical dynamical $r$-matrix associated to the triple $(\mathfrak{g},\mathfrak{h},K)$ is a $\mathfrak{h}$-equivariant function $K + r \in C^1(\mathfrak{h}^*) \otimes (\mathfrak{g} \otimes \mathfrak{g})$ that solves the Classical Dynamical Yang-Baxter Equation (CDYBE)

\begin{equation}\label{CDYB}
\text{CDYB}(r) \equiv [[r,r]]+\text{Alt}(d r)=-[[K,K]].
\end{equation}

Explicitly, 
\begin{equation*}
[[r,r]]= \text{CYB}(r) \equiv [r_{12},r_{23}]+[r_{13},r_{23}]+[r_{12},r_{13}]
\end{equation*}

is the standard Schouten bracket and 
\begin{equation*}
\text{Alt}(dr) \equiv  
  h_i^{(1)} \frac{\partial r_{23}}{\partial h^i} - h_i^{(2)} \frac{\partial r_{13}}{\partial h^i} + h_i^{(3)} \frac{\partial r_{12}}{\partial h^i}, 
\end{equation*}
in terms of a basis $\{h_i\}_{i=1,\cdots, \dim \mathfrak{h}}$ of $\mathfrak{h}$ and its dual $\{h^i\}_{i=1,\cdots, \dim \mathfrak{h}}$ for $\mathfrak{h}^*$, where  

\begin{equation*}
    h_i^{(1)}\frac{\partial r_{23}}{\partial h^i}  =  \frac{\partial r^{ab}}{\partial h^i} h_i \otimes T_a \otimes T_b, \quad 
    h_i^{(2)}\frac{\partial r_{13}}{\partial h^i}  =  \frac{\partial r^{ab}}{\partial h^i} T_a \otimes h_i \otimes T_b , \quad h_i^{(3)}\frac{\partial r_{12}}{\partial h^i}  = \frac{\partial r^{ab}}{\partial h^i}  T_a \otimes T_b \otimes h_i,
\end{equation*}
with $r=\sum_{ab}r^{ab}T_a \otimes T_b$ and $\{T_a\}_{a=1,\cdots,\dim \mathfrak{g}}$ a basis of $\mathfrak{g}$. The Einstein summation convention will be used throughout the paper.\\\\
The $\mathfrak{h}$-equivariance condition means 
\begin{equation}\label{equiv}
\frac{d }{ds}\bigg|_{s=0} r( \text{Ad}^*(e^{sh}) x)  + [r(x),h \otimes 1 + 1 \otimes h] = 0
\end{equation}
holds for all $x \in \mathfrak{h}^*$ and $h \in \mathfrak{h}$, i.e. the coadjoint action of $\mathfrak{h}$ on the argument of $r$ equals to the adjoint action of $\mathfrak{h}$ on the Lie algebra part of $r$. \\\\
The CDYBE originally appeared in research  related to integrability in  conformal field theories (see e.g. \cite{Felder:1994pb}, \cite{BALOG1990227}). Since then, this equation, its solutions  and its quantum counterpart have been widely studied, especially its applications to the theory of integrable systems, (Quasi-) Poisson geometry and special functions (see e.g. \cite{etingof_schiffmann_2002}, \cite{Etingof2002OnTD}, \cite{alma9931324753902959}).\\\\
The definition of a classical dynamical $r$-matrix varies according to the reference under consideration,  and may include or omit the equivariance condition. For example, in literature on the origin and application of solutions of the CDYBE to (Quasi-)Poisson structures,  equivariance is part of the definition (see e.g. \cite{AIF_2001__51_3_835_0}), while in literature concerning their application to quantum integrable systems the equivariance condition is usually not considered (see e.g. \cite{sechin2021quadratic}). Here we make the distinction between solutions to the CDYBE and classical dynamical $r$-matrices, where the later are obtained from the former after imposing the equivariance condition (\ref{equiv}). \\\\
Let $G$ be a Lie group and denote by $\mathfrak{g}$ its Lie algebra. Given a Lie subalgebra $\mathfrak{h}$ and an element $K \in (S^2 \mathfrak{g})^{\mathfrak{g}}$, the set of classical dynamical ($\mathfrak{g},\mathfrak{h},K$) $r$-matrices is denoted by $\text{Dyn}(\mathfrak{g},\mathfrak{h},K)$. By taking the union of these spaces for all Lie subalgebras $\mathfrak{h}$ of $\mathfrak{g}$, we get the set of dynamical $r$-matrices associated to the pair $(\mathfrak{g},K)$, denoted by 
\begin{equation}\label{Dyn}
  \text{Dyn}(\mathfrak{g},K)= \bigcup_{\mathfrak{h} \leq \mathfrak{g}} \text{Dyn}(\mathfrak{g},\mathfrak{h},K),
\end{equation}
whose geometry and structure are studied e.g. in \cite{etingof1998geometry} and \cite{Etingof2000OnTM}. \\\\
If $\mathfrak{h}$ and $\mathfrak{h}'$ are conjugate-equivalent Lie subalgebras of $\mathfrak{g}$, say by $g \in G$, then the map
\begin{equation}\label{mapdyn}
\begin{split}
    \text{Dyn}(\mathfrak{g},\mathfrak{h},K)  & \to \text{Dyn}(\mathfrak{g},\mathfrak{h}',K)  \\
    r(x) & \mapsto \text{Ad}(g) \otimes \text{Ad}(g) r(\text{Ad}^*(g^{-1})x)
    \end{split}
\end{equation}
is well-defined and bijective (see Lemma A in Appendix A), allowing us to define an action of the Lie group $G$ over the space (\ref{Dyn}). 

In this paper, as is usual in the literature regarding the CDYBE, we focus in the case where the Lie subalgebras considered are Cartan subalgebras of $\mathfrak{g}$. Along the rest of the paper $\mathfrak{h}$ will be mainly used to name a Cartan subalgebra of a Lie algebra. \\\\ 
Let $G$ be a Lie group and denote by $\mathfrak{C}_\mathfrak{g}$ the set of Cartan subalgebras of the Lie algebra $\mathfrak{g}$ and by $\mathfrak{C}^{\text{Ad}}_\mathfrak{g}$ the set of conjugacy classes of Cartan subalgebras of $\mathfrak{g}$. Analogous to the construction of the set of classical dynamical $r$-matrices $\text{Dyn}(\mathfrak{g},K)$, we restrict to Cartan subalgebras and define the set of \textit{Cartan} classical dynamical $r$-matrices associated to $(\mathfrak{g},K)$ by 
\begin{equation}\label{dynmodcart}
\text{Dyn}^{\mathfrak{C}}(\mathfrak{g},K)=\bigcup_{\mathfrak{h}\in \mathfrak{C}_\mathfrak{g}} \text{Dyn}(\mathfrak{g},\mathfrak{h},K).
\end{equation}
By picking one representative in each conjugacy class of $\mathfrak{C}_\mathfrak{g}^{\text{Ad}}$ and collecting them in a set denoted by $[\mathfrak{C}^{\text{Ad}}_\mathfrak{g}]$, the set (\ref{dynmodcart}) cab be decomposed as  
\begin{equation*}
\text{Dyn}^{\mathfrak{C}}(\mathfrak{g},K) = \bigcup_{g \in G} \bigsqcup_{\mathfrak{h} \in [\mathfrak{C}_\mathfrak{g}^{\text{Ad}}]}  \text{Dyn}^g(\mathfrak{g},\mathfrak{h},K),
    \end{equation*}
where $\text{Dyn}^g(\mathfrak{g},\mathfrak{h},K)$ is a short notation for the image under the map (\ref{mapdyn}) for $g \in G$. \\\\  
Over the set of classical dynamical $r$-matrices $\text{Dyn}(\mathfrak{g},\mathfrak{h},K)$, for each $\mathfrak{h} \in [\mathfrak{C}_\mathfrak{g}^{\text{Ad}}]$, the group of smooth functions from $\mathfrak{h}^*$ to the stabilizer of $H$
\begin{equation}\label{dyngaugetrans}
    \mathcal{G}(\mathfrak{g},\mathfrak{h}) \equiv \{p \in \text{Fun}(\mathfrak{h}^*, G^H) \ | \ p \text{ is smooth} \},
\end{equation} 
the so-called \textit{group of dynamical $(\mathfrak{g},\mathfrak{h})$ gauge transformations}, where $H \leq G$ is the Lie subgroup of $G$ such that $\text{Lie}(H)=\mathfrak{h}$, defines an action given by 
\begin{equation}\label{dynact}
    p \rhd r = \text{Ad}(p) \otimes \text{Ad}(p)  \left( r + \overline{\eta}^p - \overline{\eta}^p_{21}  \right) \qquad \text{ for } r \in \text{Dyn}(\mathfrak{g},\mathfrak{h},K),
\end{equation}
where $\overline{\eta}^p:\mathfrak{h}^* \to \mathfrak{h} \otimes \mathfrak{g}^\mathfrak{h}$ is the dual of the $\mathfrak{g}^\mathfrak{h}$-valued 1-form $\eta^p=p^{-1}dp$, i.e. explicitly 
\begin{equation*}
    \overline{\eta}^p = \sum_{i=1}^{\dim \mathfrak{h}} h_i \otimes p^{-1} \frac{\partial p}{\partial h^{i}}
\end{equation*}
in terms of the dual bases $\{h_i\}_{i=1,\cdots,\dim \mathfrak{h}}$ and $\{ h^i\}_{i=1,\cdots,\dim \mathfrak{h}}$ for $\mathfrak{h}$ and $\mathfrak{h}^*$, respectively. \\\\
The moduli space of \textit{Cartan} classical dynamical $r$-matrices associated to $(\mathfrak{g},K)$ is defined by 
\begin{equation}\label{decomp}
\mathcal{M}^{\mathfrak{C}}(\mathfrak{g},K)= \bigsqcup_{\mathfrak{h} \in [\mathfrak{C}_\mathfrak{g}^{\text{Ad}}]} \mathcal{M}(\mathfrak{g},\mathfrak{h},K) \end{equation} 
where 
\begin{equation*}
\mathcal{M}(\mathfrak{g},\mathfrak{h},K) \equiv \text{Dyn} (\mathfrak{g},\mathfrak{h},K)/\mathcal{G}(\mathfrak{g},\mathfrak{h}), 
\end{equation*}
such that the full set of \text{Cartan} classical dynamical $r$-matrices (\ref{dynmodcart}) can be generated from it via (\textit{i}) dynamical $(\mathfrak{g},\mathfrak{h})$ gauge transformations (\ref{dynact}) over $\mathcal{M}(\mathfrak{g},\mathfrak{h},K)$ for each $\mathfrak{h} \in [\mathfrak{C}_g^{\text{Ad}}]$ and (\textit{ii}) the $G$-action (\ref{mapdyn}) over $\text{Dyn}(\mathfrak{g},\mathfrak{h},K)$ for each $g \in G$ and $\mathfrak{h} \in [\mathfrak{C}_g^{\text{Ad}}]$.

As will be explained below, the Poisson structures of gauge-fixed moduli spaces of $G$-flat connections (character varieties) over Riemann surfaces are in bijection with the moduli space of Cartan dynamical $r$-matrices $\mathcal{M}^\mathfrak{C}(\mathfrak{g},K)$. The decomposition in (\ref{decomp}) implies that in order to have a full-description of these Poisson structures, it is sufficient  to determine classical dynamical $r$-matrices associated to $(\mathfrak{g},\mathfrak{h},K)$ for representatives $\mathfrak{h}$ of each conjugacy class in $\mathfrak{C}^{\text{Ad}}_\mathfrak{g}$, up to dynamical $(\mathfrak{g},\mathfrak{h})$ gauge transformations. 
\subsection{Example}

 To illustrate the appearance of dynamical $r$-matrices in a simple setting, consider the character variety 

\begin{equation*}
\mathcal{P}^{0,4}_{\text{SL}(2,\mathbb{R}),\{\mathcal{C}_i\}} \equiv \{ A \in \text{Hom}(\pi_1(\Sigma_{0,4}),\text{SL}(2,\mathbb{R}))|A(\ell_i) \in \mathcal{C}_i \text{ for } i=1,2,3,4\} / \text{SL}(2,\mathbb{R})
\end{equation*}

where $\Sigma_{0,4} \equiv \mathbb{S}^2 - \{4 \text{pts}\}$ is the four punctured 2-sphere, $\{\ell_i\}_{i=1,\cdots,4}$ is the (homotopy type) set of generators of its fundamental group consisting of four loops (each going around one of the punctures once)  and the $\mathcal{C}_i$'s are fixed conjugacy classes of $\text{SL}(2,\mathbb{R})$. For simplicity we assume the conjugacy classes are two dimensional (over $\mathbb{R}$) generated by elements of the form $e^{s_iJ_0}$ for $i=1,2,3,4$, where $\{J_a\}_{a=0,1,2}$ is the standard basis of the real Lie algebra $\mathfrak{sl}(2,\mathbb{R})$ satisfying 
\begin{equation*}
[J_a,J_b]=\epsilon_{abc}J^c,
\end{equation*}
and the coefficients are raised or lowered using the Minkowskian metric $\text{diag}(1,-1,-1)$. \\\\
The canonical Poisson structure over $\mathcal{P}^{0,4}_{\text{SL}(2,\mathbb{R}),\{\mathcal{C}_i\}}$ can be seen as a reduction of the Poisson structure over the extended space 
\begin{equation*}
\mathcal{P}^{0,4}_{\text{SL}(2,\mathbb{R}),\text{ext}} \equiv \underbrace{\text{SL}(2,\mathbb{R})_{1} \times \cdots \times \text{SL}(2,\mathbb{R})_4}_{\text{4 times}},  
\end{equation*}
with Poisson bivector (see \cite{Fock:1998nu}) given by 
\begin{equation}\label{FR04}
\Pi^{0,4}_{\text{ext}}(r)  \equiv r^{ab}  \sum_{1 \leq i \leq j \leq 4}  (R_a^{i} + L_a^{i}) \wedge (R_b^{j} + L_b^{j}), 
\end{equation}
where $r=r^{ab}J_a \otimes J_b \in \mathfrak{sl}(2,\mathbb{R}) \otimes \mathfrak{sl}(2,\mathbb{R})$ is a classical $r$-matrix, i.e. a solution of the Classical Yang-Baxter Equation (CYBE) 
\begin{equation}
[[r,r]]=0,
\end{equation} 
such that its symmetric part $K \equiv r+r^{21}$ is $\mathfrak{sl}(2,\mathbb{R})$-invariant. \\\\
In the expression above we have adopted the notation of $R_a^i$ and $L_a^i$ to indicate the right and left fundamental vector fields generated by the Lie algebra element $J_a$ and associated to the $i$-th copy of $\text{SL}(2,\mathbb{R})$, respectively. \\\\
The Poisson space $\mathcal{P}^{0,4}_{\text{SL}(2,\mathbb{R}),\{\mathcal{C}_i\}}$ is obtained from $(\mathcal{P}_{\text{SL}(2,\mathbb{R}),\text{ext}}^{0,4},\Pi_{\text{ext}}^{0,4})$ by ($i$) restricting each copy of $\text{SL}(2,\mathbb{R})$ to the corresponding conjugacy class, ($ii$) imposing the \textit{topological} condition that the product of the elements in each $4$-tuple must be the identity element of $\text{SL}(2,\mathbb{R})$, and ($iii$) identifying conjugate-equivalent $4$-tuples. This approach, developed originally by Fock and Rosly (see e.g. \cite{Fock:1993ms}, \cite{Fock:1998nu}, \cite{Mouquin2016TheFP}), exhibits explicitly how the character variety $\mathcal{P}^{0,4}$ can be realized as a \textit{constrained system}.     \\\\
The dimension of $\mathcal{P}_{\text{SL}(2,\mathbb{R}),\text{ext}}^{0,4}$ is $3 \times 4=12$, while the dimension of $\mathcal{P}^{0,4}_{\text{SL}(2,\mathbb{R}),\{\mathcal{C}_i\}}$ is 
\begin{equation*}
2=12-10=12-\underbrace{4 \times (1)}_{(i)}-\underbrace{3}_{(ii)}-\underbrace{3}_{(iii)}
\end{equation*}
where the underbrace symbols indicate which of the three types of constraints indicated above is responsible of the corresponding dimensional reduction. Since the constraint functions over $\mathcal{P}^{0,4}_{\text{ext}}$  associated to ($i$) are Poisson functions with respect to $\Pi_{\text{ext}}^{0,4}$ (see \cite{meusburger2003poisson}), as an intermediate step we have the partially-constrained space given by            
\begin{equation*}
    \mathcal{P}_{\text{SL}(2,\mathbb{R}),\text{ext(cc)}}^{0,4} \equiv \mathcal{C}_1 \times \mathcal{C}_2 \times \mathcal{C}_3 \times \mathcal{C}_4 
\end{equation*}
of dimension $8=4 \times 2$, with Poisson structure given still by the bivector $\Pi_{\text{ext}}^{0,4}$, in such a way that the fully reduced space is obtained by imposing the remaining $6=3+3$ constraints (topological and conjugation equivalence constraints). The label $(\text{cc})$ is used to indicate each copy of $\text{SL}(2,\mathbb{R})$ has been restricted to the conjugacy class of the corresponding puncture. \\\\
The quantization of character varieties (see e.g. \cite{ben2018quantum}) has been an active topic of research in the recent years, both in mathematics and physics, since it provides (e.g.) quantum group representations of the mapping class group of Riemann surfaces and a quantization scheme for Chern-Simons theory. The fact that character varieties can be realized as constrained Poisson spaces implies one needs to deal with the constraints at some point along the way to quantization. Even thought the standard approach is to impose the constraints at the quantum level (see e.g. \cite{alekseev1995combinatorial}, \cite{alekseev1996combinatorial}), there are technical advantages to  incorporating the constraints at the classical level and then quantizing the reduced theory\footnote{The two approaches are widely assumed to be equivalent, but there is no proof of this assumption.}. \\\\
At the classical level the (first class) constraints of any constrained Poisson space can be \textit{gauge fixed}, which amounts to reducing the space to a sector of the original one with the help of auxiliary (second class) constraints and defining over it a new Poisson structure (the so-called \textit{Dirac brackets}, see \cite{dirac_1950} and \cite{dirac_1951}), in such a way that the constraints become Poisson functions (for details see \cite{9d50b89c-daf8-39a3-a319-7c19d071d0ae}, \cite{FigueroaOFarrill1989BRSTCA}, \cite{matschull1996dirac}). In the particular case of $\mathcal{P}^{0,4}_{\text{SL}(2,\mathbb{R}),\{\mathcal{C}_i\}}$, the three topological constraints ($ii$) can be gauge fixed \textit{\'{a} la} Dirac via three auxiliary constraint functions defined just on the part $\mathcal{C}_1 \times \mathcal{C}_2$ of the space (see \cite{Meusburger:2012wc} for details), getting a $5=1+2+2$ dimensional intermediate space 
\begin{equation*}
\mathcal{P}^{0,4}_{\text{SL}(2,\mathbb{R}),\{\mathcal{C}_i\},GF(1,2)}= H_0 \times \mathcal{C}_3 \times \mathcal{C}_4
\end{equation*}

where $H_0$ is the Cartan group generated by the Lie algebra generator $J_0$, with Poisson structure given explicitly for $F,\tilde{F} \in C^{\infty}(\text{SL}(2,\mathbb{R}) \times \text{SL}(2,\mathbb{R}))$ by
\begin{equation}\label{DFR04}
\begin{split}
\{F, \varphi \} & = \sum_{i=3,4}(R_0^i + L_0^i)F, \\
\{ F,\tilde{F} \}(\varphi) & = \Pi_{\text{ext}}^{0,4}(r(\varphi))(dF,d\tilde{F})= r^{ab}(\varphi) \sum_{3 \leq i \leq j \leq 4} (R^i_a + L_a^i) \wedge (R_{b}^j + L_b^j)(dF,d\tilde{F}),
\end{split}
\end{equation}
where $\varphi$ is a variable parametrizing the dual of the Lie subalgebra $\mathfrak{h}_0=\text{Lie}(H_0)$ and 
\begin{equation*}
r:\mathfrak{h}_0^* \to \mathfrak{sl}(2,\mathbb{R}) \otimes \mathfrak{sl}(2,\mathbb{R})
\end{equation*}
is now an $\mathfrak{h}_0$-equivariant solution of the CDYBE (\ref{CDYB})  such that $\text{Sym}(r(\varphi))=\text{Sym}(r) \equiv K$, i.e. a classical dynamical $r$-matrix for the triple $(\mathfrak{sl}(2,\mathbb{R}),\mathfrak{h}_0,K)$. In fact, the Jacobi identity for the brackets of the form $\{F,\{\tilde{F},\varphi\}\}$ and $\{F,\{\tilde{F},\overline{F}\}\}$ (for $F,\tilde{F},\overline{F} \in C^\infty(\text{SL}(2,\mathbb{R}) \times \text{SL}(2,\mathbb{R}))$) implies the $\mathfrak{h}_0$-equivariance of $r$ and $\text{CDYB}(r)=0$, respectively. \\\\
Hence after performing the gauge fixing, an explicit realization of the Poisson space $\mathcal{P}^{0,4}_{\text{SL}(2,\mathbb{R}),\{\mathcal{C}_i\}}$ is obtained by imposing the now Poisson constraint  
\begin{equation*}
g_4^{-1} g_3^{-1} = e^{ \varphi J_0},
\end{equation*}
obtaining in this way the two dimensional \textit{Atiyah-Bott} phase space of $\text{SL}(2,\mathbb{R})$-flat connections over the four-punctured 2-sphere (see e.g. \cite{Alekseev2000QuasiPoissonM}, \cite{Xu2014GeneralizedCD}). Analogously, if instead conjugacy classes generated by elements of the form $e^{s_iJ_1}$ are considered, the Poisson structure of the gauge-fixed phase space will be defined in terms of a classical dynamical $r$-matrix for the triple $(\mathfrak{sl}(2,\mathbb{R}),\mathfrak{h}_1,K)$, where $\mathfrak{h}_1$ is the Lie subalgebra generated by $J_1$. \\\\
This example helps to understand how the Poisson structures of the moduli space $\mathcal{P}^{0,4}$ are determined by the moduli space of Cartan classical dynamical $r$-matrices $\mathcal{M}^\mathfrak{C}(\mathfrak{sl}(2,\mathbb{R}),K)$. Indeed, due to (\ref{decomp}) and the fact any Cartan subalgebra of $\mathfrak{sl}(2,\mathbb{R})$ is either conjugate to $\mathfrak{h}_0$ or to $\mathfrak{h}_1$, the problem of describing the Poisson structures of $\mathcal{P}^{0,4}$ reduces to finding two classical dynamical $(\mathfrak{sl}(2,\mathbb{R}),\mathfrak{h},K)$ $r$-matrices, one for $\mathfrak{h}=\mathfrak{h}_0$ and for $\mathfrak{h}=\mathfrak{h}_1$.  
\subsection{Chern-Simons formulation of 3d gravity}
Character varieties of the type
\begin{equation*}
\mathcal{P}^{g,n}_{G,\{C_i\}}
 \equiv \{A \in \text{Hom}(\pi_1(\Sigma_{g,n}),G)| A(\ell_{i}) \in \mathcal{C}_i \text{ for } i=1,\cdots,n \} / G
\end{equation*}
appear in the Chern-Simons formulation of 3d gravity for 3-manifolds of the form $\mathcal{M}^3 \cong \mathbb{R} \times \Sigma_{g,n}$ (stationary spacetimes) and five possible Lie groups $G$. Depending on the signature (Euclidean or Lorentzian) and the sign of the cosmological constant $\Lambda_C$, the possible five Lie groups $G$ (local isometry groups of the possible spacetime models of General Relativity in three  dimensions) are
\begin{table}[h]
\begin{center}
\begin{tabular}{ |c|c|c| } 
 \hline
 $\Lambda_C$ & Euclidean & Lorentzian \\ 
\hline
 0 & $\text{SU}(2) \mathrel{\triangleright}\joinrel< \mathbb{R}^3$ & $\text{SL}(2,\mathbb{R}) \mathrel{\triangleright}\joinrel< \mathbb{R}^3$ \\ 
\hline
 $>0$ & $\text{SU}(2)\times \text{SU}(2)$  & $\text{SL}(2,\mathbb{C})_{\mathbb{R}}$ \\ 
\hline
$<0$ & $\text{SL}(2,\mathbb{C})_{\mathbb{R}}$ & $\text{SL}(2,\mathbb{R}) \times \text{SL}(2,\mathbb{R})$ \\
 \hline
\end{tabular}
\caption{\label{tab:LIG} Local isometry groups of pure 3d gravity.}
\end{center}
\end{table}

and so the possible associated Lie algebras $\mathfrak{g}$ are
\begin{table}[h]
\begin{center}
\begin{tabular}{ |c|c|c| } 
 \hline
 $\Lambda_C$ & Euclidean & Lorentzian \\ 
\hline
 0 & $\mathfrak{iso}(3)$ & $\mathfrak{iso}(2,1)$ \\ 
\hline
 $>0$ & $\mathfrak{su}(2)\oplus \mathfrak{su}(2) $  & $\mathfrak{so}(3,1)$ \\ 
\hline
$<0$ & $\mathfrak{so}(3,1)$ & $\mathfrak{sl}(2,\mathbb{R}) \oplus \mathfrak{sl}(2,\mathbb{R})$ \\
 \hline
\end{tabular}
\caption{\label{tab:LieAlg} Lie algebras of the previous local isometry groups.}
\end{center}
\end{table}

These five Lie algebras can be described in a unified  way as follows. They are six dimensional real Lie algebras generated by $\{J_0,J_1,J_2,P_0,P_1,P_2\}$ with commutation relations given by  
\begin{equation}\label{commut}
[J_a,J_b]=\epsilon_{abc}J^c, \qquad [J_a,P_b]=\epsilon_{abc}P^c, \qquad [P_a,P_c]=\lambda \epsilon_{abc}J^c
\end{equation}
where $\lambda=-c^2 \Lambda_C$ and the indices are raised or lowered using the metric 
\begin{equation*}
\eta_{ab}:= \text{diag} \left( 1,-\frac{|c|^2}{c^2}, -\frac{|c|^2}{c^2} \right)
\end{equation*}
with $c \in i\mathbb{R}$ and $c \in \mathbb{R}$ for the Euclidean and Lorentzian cases, respectively, where for the latter it is interpreted as the \textit{speed of light}. These Lie algebras are denoted by $\mathfrak{g}_\lambda$, since by picking the right metric, i.e. $\text{diag}(1,1,1)$ for the Euclidean or $\text{diag}(1,-1,-1)$ for the Lorentzian, the commutation relations depend only on the \textit{parameter} $\lambda$. Similarly,  we denote the corresponding Lie groups in Table \ref{tab:LIG}  by $G_\lambda$. \\\\
For any Lie group $G$, the Poisson structure of the moduli space $\mathcal{P}_{G,\{C_i\}}^{g,n}$ depends on the choice of a non-degenerate symmetric Ad-invariant bilinear form over the Lie algebra $\mathfrak{g}=\textit{Lie}(G)$. In  the case $G={SL}(2,\mathbb{R})$  considered  above, this dependence translates into the (equivalent) choice of the element $K$ in $(S^2 \mathfrak{sl}(2,\mathbb{R}))^{\mathfrak{sl}(2,\mathbb{R})}$. For the Lie groups we are interested in this paper $G_\lambda$, the space of non-degenerate symmetric Ad-invariant bilinear forms over the Lie algebra $\mathfrak{g}_\lambda$ has two generators \cite{WITTEN198846}
\begin{equation}\label{forms}
t(J_a,J_b)=0, \qquad t(J_a,P_a)=c^2 \eta_{ab}, \qquad t(P_a,P_b)=0,
\end{equation}
and 
\begin{equation}\label{formt}
s(J_a,J_b)=\eta_{ab}, \qquad s(J_a,P_a)=0, \qquad s(P_a,P_b)=\lambda \eta_{ab}.
\end{equation}
The bilinear form $t$ is usually called the \textit{standard (gravity) pairing}, since when it is used in the Chern-Simons formulation of three dimensional gravity the action reduces to the Einstein-Hilbert action; meanwhile $s$ is commonly referred as the \textit{exotic pairing}. Nevertheless, both actions provide the same phase space, i.e. the same equations of motions, but equipped with different symplectic/Poisson structures. \\\\
Following the same reductions presented above for $G=\text{SL}(2,\mathbb{R})$ and the four punctured 2-sphere, the canonical Poisson structure over the 
\begin{equation*}
4n + 12g - 12  = n \times (6)+2g \times (6) - \underbrace{n \times (2)}_{(i)} - \underbrace{6}_{(ii)} - \underbrace{6}_{(iii)}
\end{equation*}
dimensional space $\mathcal{P}^{g,n}_{G_\lambda,\{\mathcal{C}_i\}}$, for any of the Lie groups $G_\lambda$ presented above, can be recovered starting from the $6n+12g$ dimensional extended space 
\begin{equation*} 
\mathcal{P}_{G_\lambda, \text{ext}}^{g,n} \equiv \underbrace{(G_\lambda)_1 \times \cdots \times (G_\lambda)_n}_{\text{n-times}} \times \underbrace{(G_\lambda)_{A1} \times (G_\lambda)_{B1} \times \cdots \times (G_\lambda)_{Ag} \times (G_\lambda)_{Bg}}_{\text{2g-times}} 
\end{equation*}
equipped with the Fock-Rosly Poisson structure \cite{Fock:1998nu}
\begin{equation}\label{FR}
\begin{split}
\Pi^{g,n}_{\text{ext}}(r)= & \sum_{i=1}^{n}r^{ab} \left( \frac{1}{2}R_{a}^{i} \wedge R_b^{i} +\frac{1}{2} L_a^{i} \wedge L_b^{i} + R^{i}_a \wedge L_b^{i} \right) + \sum_{1 \leq i <j \leq n} r^{ab} (R_a^{i} + L_a^{i}) \wedge (R_b^{j}+L_b^{j}) \\
& + \sum_{1\leq i < j \leq g} r^{ab} \left (R_a^{A_i} + L_a^{A_i}+R_a^{B_i}+L_a^{B_i} \right) \wedge \left(R_b^{A_i} + L_b^{A_i}+R_b^{B_i}+L_b^{B_i} \right) \\
& + \sum_{i=1}^n \sum_{j=1}^g r^{ab} \left(R_a^{M_i} + L_a^{M_i} \right) \wedge \left(R_b^{A_j} + L_b^{A_j} + R_b^{B_j} + L_b^{B_j} \right) \\
& + \sum_{j=1}^g r^{ab} \bigg[ \frac{1}{2}(R_a^{A_j} \wedge R_b^{A_i} + L_a^{A_i} \wedge L_b^{A_i} + R_a^{B_i} \wedge R_{b}^{B_i} + L_a^{B_i} \wedge L_b^{B_i}) \\
& + R_a^{A_i} \wedge (R_b^{B_i} + L_b^{A_i} + L_{b}^{B_i}) + R_a^{B_i} \wedge (L_b^{A_i} + L_b^{B_i}) + L_a^{A_i} \wedge L_b^{B_i} \bigg], 
\end{split}
\end{equation}
where $r \in \mathfrak{g}_\lambda \otimes \mathfrak{g}_\lambda$ is a classical $r$-matrix such that $\text{Sym}(r)=K_{\alpha \beta}$, where  $K_{\alpha\beta} \in (S^2 \mathfrak{g}_\lambda)^{\mathfrak{g}_\lambda}$ is the symmetric element 
\begin{equation}\label{casimglambda}
   K_{\alpha \beta} \equiv \frac{\alpha}{\alpha^2-\lambda \beta^2}(J_a \otimes P^a + P_a \otimes J^a) - \frac{\beta}{\alpha^2-\lambda \beta^2} (\lambda J_a \otimes J^a + P_a \otimes P^a)
\end{equation}
associated to the $\mathfrak{g}_\lambda$ Ad-invariant symmetric bilinear form 
\begin{equation}\label{genbin}
(\cdot,\cdot)_{\alpha,\beta} = \alpha t(\cdot,\cdot)+\beta s(\cdot,\cdot)
\end{equation}
with $t$ and $s$ given by (\ref{forms}) and (\ref{formt}), respectively, and $\alpha,\beta \in \mathbb{R}$. The non-degeneracy of $(\cdot,\cdot)_{\alpha\beta}$ is equivalent to the condition $\alpha^2-\lambda \beta^2 \neq 0$ (see e.g \cite{Meusburger:2007ad}). \\\\
Exactly as before, given that the constraints reducing the first $n$-copies of $G_\lambda$ to conjugacy classes $\mathcal{C}_\lambda$\footnote{For simplicity we consider just conjugacy classes generated by elements of the form $e^{\mu J_0+ s P_0}$.} are Poisson functions with respect to $\Pi_{\text{ext}}^{g,n}$ (see \cite{meusburger2003poisson} again for details), we have the partially-constrained space given by

\begin{equation*} 
\mathcal{P}_{G_\lambda,\text{ext(cc)}}^{g,n} \equiv \underbrace{(\mathcal{C}_\lambda)_1 \times \cdots \times (\mathcal{C}_\lambda)_n}_{\text{n-times}} \times \underbrace{(G_\lambda)_{A1} \times (G_\lambda)_{B1} \times \cdots \times (G_\lambda)_{Ag} \times (G_\lambda)_{Bg}}_{\text{2g-times}}. 
\end{equation*}

In this case the gauge fixing procedure, via four auxiliary constraint functions defined again over the product of the first two conjugacy classes $(\mathcal{C}_\lambda)_1 \times (\mathcal{C}_\lambda)_2$, will provide the intermediate space 

\begin{equation*}
\mathcal{P}_{G_\lambda, \{\mathcal{C}_i\},GF(1,2)}^{g,n} = H_{\lambda,0} \times \underbrace{(\mathcal{C}_\lambda)_3 \times \cdots \times (\mathcal{C}_\lambda)_n}_{(n-2)\text{-times}} \times \underbrace{(G_\lambda)_{A1} \times (G_\lambda)_{B1} \times \cdots \times (G_\lambda)_{Ag} \times (G_\lambda)_{Bg}}_{2g\text{-times}}  
\end{equation*}

of dimension $4n+12g-6$, where $H_{\lambda,0}$ is the Cartan subgroup with Lie subalgebra $\mathfrak{h}_{\lambda,0}$ generated by $J_0$ and $P_0$ (see \cite{Meusburger:2012wc}). Similarly to the $\text{SL}(2,\mathbb{R})$ case, the reduced Poisson structure (i.e the Dirac brackets) will be given for $F,\tilde{F} \in C^\infty(G_\lambda^{n-2+2g})$ by 
\begin{equation}\label{gfixbra}
\begin{split}
\{\gamma,\psi\} & =0, \\
\{F,\gamma \} & = (R_{J_0}+L_{J_0})F, \\
\{F,\psi\} & = (R_{P_0}+L_{P_0})F, \\
\{F,\tilde{F}\}(\gamma,\psi) & =\Pi_\text{ext}^{g,n}(r(\gamma,\psi))(dF,d\tilde{F}),
\end{split}
\end{equation}
where $\mathfrak{h}_{\lambda,0}$ is parametrized via $\gamma J_0 + \psi P_0$ and 
\begin{equation*}
r: \mathfrak{h}_{\lambda,0}^* \to \mathfrak{g}_\lambda \otimes \mathfrak{g}_\lambda
\end{equation*}
is a $\mathfrak{h}$-equivariant solution of the CDYBE (\ref{CDYB}) with $\text{Sym}(r(\gamma,\psi))=\text{Sym}(r)=K_{\alpha \beta}$. \\\\
As shown in Appendix B, the Lie algebras $\mathfrak{g}_\lambda$ have at most four conjugacy classes of Cartan subalgebras: For $\mathfrak{so}(4)$, $\mathfrak{so}(3,1)$ and $\mathfrak{iso}(3)$ the set $\mathfrak{C}_{\mathfrak{g}_\lambda}^{\text{Ad}}$ is a singleton, being all the Cartan subalgebras conjugate to the algebra $\mathfrak{h}_{\lambda,0}$. For $\mathfrak{iso}(2,1)$ the set $\mathfrak{C}^{\text{Ad}}_{\mathfrak{g}_\lambda}$ has cardinality two, since any Cartan subalgebra is conjugated to one of the non-conjugate Cartan subalgebras $\mathfrak{h}_{\lambda,0}$ or $\mathfrak{h}_{\lambda,1}$ (generated by $J_1$ and $P_1$). Finally for $\mathfrak{so}(2,2)$, $\mathfrak{C}_{\mathfrak{g}_\lambda}^{\text{Ad}}$ has cardinality four since any Cartan subalgebra is conjugated to $\mathfrak{h}_{\lambda,0}$, $\mathfrak{h}_{\lambda,1}$, $\mathfrak{h}_{01}^{\pm}$ (generated by $J_0+P_0$ and $J_1-P_1$) or $\mathfrak{h}_{01}^{\mp}$ (generated by $J_0-P_0$ and $J_1+P_1$). Hence, depending on the Lie algebra $\mathfrak{g}_\lambda$ and the conjugacy classes considered $\{\mathcal{C}_i\}$, the Poisson structures of the gauge-fixed phase space are in correspondence with classical dynamical $r$-matrix for the triples $(\mathfrak{g}_\lambda , \mathfrak{h}_{\lambda} , K_{\alpha \beta})$  with $\mathfrak{h}_{\lambda}=\mathfrak{h}_{\lambda,0}$ and $\mathfrak{h}_{\lambda}=\mathfrak{h}_{\lambda,1}$ (and also  $\mathfrak{h}_{\lambda}=\mathfrak{h}^\pm_{01}$ and $\mathfrak{h}_{\lambda}=\mathfrak{h}^\mp_{01}$ for $\mathfrak{so}(2,2)$).

The rest of the paper is organized as follows: In Section \ref{Sec2paper1} we present a systematical treatment (following the same fashion of \cite{Osei:2017ybk}) of the CDYBE for the Lie algebras $\mathfrak{g}_\lambda$, using as a main tool the fact these Lie algebras can be realized as \textit{generalized complexifications} of $\mathfrak{su}(2)$ and $\mathfrak{sl}(2,\mathbb{R})$. In Section \ref{Sec3paper1} a full description of the set $\text{Dyn}^{\mathfrak{C}}(\mathfrak{g}_{\lambda},K_{\alpha \beta})$ and the moduli space $\mathcal{M}^{\mathfrak{C}}(\mathfrak{g}_{\lambda},K_{\alpha \beta})$ of Cartan classical dynamical $r$-matrices is presented, including also some dynamical generalizations of well-known solutions of the CYBE. Section \ref{Sec4paper1} is devoted to showing  that the classical dynamical $r$-matrices found in the previous section are gauge equivalent to a family of classical dynamical $r$-matrices studied by Feher, Gabor, Marshall, Palla and Pusztai in the setting of gauge-fixed WZNW models. Finally, in the Appendices we include some digressions regarding the action of $G$ over the sets of classical dynamical $r$-matrices, the Cartan subalgebras of the Lie algebras $\mathfrak{g}_\lambda$ and the Weierstrass factorization theorem.


\section{Structure of the  CDYBE for \texorpdfstring{$\mathfrak{g}_\lambda$}{}}\label{Sec2paper1}
\subsection{The Lie algebras \texorpdfstring{$\mathfrak{g}_\lambda$}{} as  generalised complexifications}
Here we  solve the CDYBE for the Lie algebras $\mathfrak{g}_\lambda$ (and any Lie subalgebra $\mathfrak{h}_\lambda$) by considering  the most general Ansatz in the CDYBE associated to $(\mathfrak{g}_\lambda,\mathfrak{h}_\lambda,K_{\alpha \beta})$ and solving the resulting equations.  \\\\
The realization of the Lie algebras $\mathfrak{g}_\lambda$ as the real form of a \textit{generalized complexification} of $\mathfrak{su}(2)$ or $\mathfrak{sl}(2,\mathbb{R})$, depending on the signature, has proved useful for describing the classical $r$-matrices and Poisson-Lie structures associated to the local isometry groups of 3d gravity (see e.g. \cite{Meusburger:2007ad}). The Lie algebras $\mathfrak{iso}(3)$, $\mathfrak{so}(3,1)$ and $\mathfrak{so}(4)$ can be constructed via generalized complexification of the real Lie algebra $\mathfrak{su}(2)$. Analogously, the Lie algebras $\mathfrak{iso}(2,1)$, $\mathfrak{so}(3,1)$ and $\mathfrak{so}(2,2)$ can be obtained from $\mathfrak{sl}(2,\mathbb{R})$ using the same construction. \\\\
This \textit{generalized complexification} requires the introduction of a formal parameter $\theta$ such that $\theta^2=\lambda$ and to set 
\begin{equation}\label{complex}
P_a= \theta J_a \qquad \text{for} \qquad a=0,1,2,
\end{equation}
where $\{J_a\}$ are precisely the generators of $\mathfrak{su}(2)$ or $\mathfrak{sl}(2,\mathbb{R})$, i.e. 
\begin{equation*}
[J_a,J_b]=\epsilon_{abc}J^c.
\end{equation*}
using the metrics $\text{diag}(1,1,1)$ or $\text{diag}(1,-1,-1)$ to raised or lowered indices in the $\mathfrak{su}(2)$ or $\mathfrak{sl}(2,\mathbb{R})$ cases, respectively. \\\\
Formally, the main ingredient of the construction is a ring denoted by $\mathbb{R}_\lambda$, which is obtained by adjoining a formal element $\theta$ such that $\theta^2=\lambda$. 
\begin{definition}\label{definitionRLAMBDA}
\cite{Meusburger:2006fn}\textbf{.} Let $\lambda \in \mathbb{R}$. Then we denote by
\begin{equation}\label{Rlambda}
\mathbb{R}_\lambda \equiv \{ (x+ \theta y) \ | \ x,y \in \mathbb{R}, \ \theta^2=\lambda  \}, 
\end{equation}
the ring homeomorphic to $(\mathbb{R}^2,+)$ as an abelian group and with product given by 
\begin{equation*}
    (x+ \theta y)\cdot(z + \theta w)=(xz+\lambda yw) + \theta (xw+yz).
\end{equation*}
for $x,y,z,w \in \mathbb{R}$.
\end{definition}
The construction of $\mathbb{R}_\lambda$ mimics the one of the complex numbers $\mathbb{C}$ by adjoining a formal element $i$ to $\mathbb{R}$ such that $i^2=-1$. Indeed it becomes a field in the case $\lambda<0$, isomorphic to $\mathbb{C}$. For the other cases, the rings posses zero divisors and are isomorphic to the dual and hyperbolic numbers in the cases $\lambda=0$ and $\lambda>0$, respectively. \\\\ 
Then for $\mathfrak{g}=\mathfrak{su}(2)$ or $\mathfrak{g}=\mathfrak{sl}(2,\mathbb{R})$, the \textit{generalized complexification} $\mathfrak{g} \otimes \mathbb{R}_\lambda$ is isomorphic to one of the five Lie algebras $\mathfrak{g}_\lambda$, via the identification established in (\ref{complex}): 
\begin{lemma}\label{LemmaCat} 
\cite{Meusburger:2006fn}\textbf{.} For $\lambda \in \mathbb{R}$ we have the following isomorphisms
\begin{equation*}
    \mathfrak{su}(2) \otimes \mathbb{R}_\lambda \cong \begin{cases} \mathfrak{iso}(3) & \quad \lambda=0 \\ \mathfrak{so}(4) & \quad \lambda>0 \\ \mathfrak{so}(3,1) & \quad \lambda<0  \end{cases}, \qquad \qquad \mathfrak{sl}(2,\mathbb{R}) \otimes \mathbb{R}_\lambda \cong \begin{cases} \mathfrak{iso}(2,1) & \quad \lambda=0 \\ \mathfrak{so}(2,2) & \quad \lambda>0 \\ \mathfrak{so}(3,1) & \quad \lambda<0  \end{cases}
\end{equation*}
\end{lemma}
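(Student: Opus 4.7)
The plan is first to verify that the defining relations of $\mathfrak{g}_\lambda$ emerge automatically from the $\mathbb{R}_\lambda$-bilinear structure, and then to handle the three cases $\lambda = 0$, $\lambda > 0$, $\lambda < 0$ separately, exhibiting in each one an explicit isomorphism from $\mathfrak{g}\otimes\mathbb{R}_\lambda$ (with $\mathfrak{g}=\mathfrak{su}(2)$ or $\mathfrak{sl}(2,\mathbb{R})$) onto the claimed target.

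The preliminary step is purely formal: setting $P_a \equiv \theta\otimes J_a$ and using $\mathbb{R}_\lambda$-bilinearity of the bracket together with $[J_a,J_b]=\epsilon_{abc}J^c$ and $\theta^2=\lambda$ immediately reproduces the three sets of brackets in (\ref{commut}). Hence $\mathfrak{g}\otimes\mathbb{R}_\lambda$, with generators $\{J_a,P_a\}$, is a six-dimensional real Lie algebra whose structure constants coincide with those of $\mathfrak{g}_\lambda$, the signature being fixed by the metric used in $\mathfrak{g}$. This reduces the lemma to identifying $\mathfrak{g}_\lambda$ with the named orthogonal or inhomogeneous algebra in each of the six cases.

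For $\lambda=0$ the subspace spanned by $\{P_a\}$ is an abelian ideal (since $[P_a,P_b]=0$) on which $\{J_a\}$ acts via its standard three-dimensional representation, so $\mathfrak{g}_0 \cong \mathfrak{g}\ltimes\mathbb{R}^3$, giving $\mathfrak{iso}(3)$ for $\mathfrak{g}=\mathfrak{su}(2)\cong\mathfrak{so}(3)$ and $\mathfrak{iso}(2,1)$ for $\mathfrak{g}=\mathfrak{sl}(2,\mathbb{R})\cong\mathfrak{so}(2,1)$. For $\lambda>0$ I would introduce the diagonal generators
\begin{equation*}
J_a^{\pm}=\tfrac{1}{2}\bigl(J_a\pm\tfrac{1}{\sqrt{\lambda}}P_a\bigr),
\end{equation*}
which from (\ref{commut}) satisfy $[J_a^{\pm},J_b^{\pm}]=\epsilon_{abc}(J^c)^{\pm}$ and $[J_a^{+},J_b^{-}]=0$. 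This exhibits $\mathfrak{g}_\lambda$ as a direct sum $\mathfrak{g}\oplus\mathfrak{g}$, yielding $\mathfrak{su}(2)\oplus\mathfrak{su}(2)\cong\mathfrak{so}(4)$ and $\mathfrak{sl}(2,\mathbb{R})\oplus\mathfrak{sl}(2,\mathbb{R})\cong\mathfrak{so}(2,2)$ via the classical exceptional isomorphisms of low-dimensional orthogonal algebras. For $\lambda<0$ I would fix the identification $\mathbb{R}_\lambda\cong\mathbb{C}$ via $\theta\mapsto i\sqrt{-\lambda}$; then $\mathfrak{g}\otimes\mathbb{R}_\lambda$ is precisely the realification of the complexification $\mathfrak{g}^{\mathbb{C}}$, and since $\mathfrak{su}(2)^{\mathbb{C}}=\mathfrak{sl}(2,\mathbb{R})^{\mathbb{C}}=\mathfrak{sl}(2,\mathbb{C})$, in both cases the result is $\mathfrak{sl}(2,\mathbb{C})_{\mathbb{R}}\cong\mathfrak{so}(3,1)$.

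The main obstacle is not any single computation but rather the invocation of the classical low-dimensional coincidences $\mathfrak{so}(4)\cong\mathfrak{su}(2)\oplus\mathfrak{su}(2)$, $\mathfrak{so}(2,2)\cong\mathfrak{sl}(2,\mathbb{R})\oplus\mathfrak{sl}(2,\mathbb{R})$, and $\mathfrak{so}(3,1)\cong\mathfrak{sl}(2,\mathbb{C})_{\mathbb{R}}$ that feed the nontrivial cases. These can either be cited from standard references or proved via the usual spinor/vector correspondence; once they are in place, the bulk of the lemma reduces to a linear change of basis and a check of structure constants.
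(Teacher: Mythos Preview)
Your proposal is correct. The paper itself does not supply a proof of this lemma: it is stated with a citation to \cite{Meusburger:2006fn} and no argument follows. That said, the ingredients you use are exactly those the paper deploys elsewhere --- the identification $P_a=\theta J_a$ and the resulting brackets are discussed just before the lemma, and your $J_a^{\pm}=\tfrac{1}{2}(J_a\pm\tfrac{1}{\sqrt{\lambda}}P_a)$ decomposition for $\lambda>0$ is precisely the one the paper invokes in Appendix~B when analysing Cartan subalgebras of $\mathfrak{so}(4)$ and $\mathfrak{so}(2,2)$. So your argument is both correct and in the same spirit as the paper's implicit treatment; you have simply made explicit what the paper outsources to the reference.
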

In the next subsection the utility of this algebraic observation will become apparent, and how it allows us to split the CDYBE into simpler pieces. Also, since the CDYBE involves partial derivatives it will result useful to introduce the notion of differentiability in the ring $\mathbb{R}_\lambda$, using the fact its construction resembles the one of the complex numbers $\mathbb{C}$. \\\\   
Over the ring $\mathbb{R}_\lambda$, the \textit{generalised complexified} variable is defined as 
\begin{equation}\label{z}
z=\psi+ \theta \gamma    
\end{equation}
and its conjugate is given by 
\begin{equation}\label{conjz}
\overline{z}=\psi- \theta \gamma.    
\end{equation}
The generalized differentials of these pair of conjugate variable are defined, analogous to the complex differentials, by
\begin{equation*}
    dz=d \psi + \theta d \gamma \qquad \text{and} \qquad d \overline{z}= d \psi- \theta d \gamma.
\end{equation*}
In the cases when $\lambda \neq 0$, the generalized analogues of the complex (anti-)holomorphic derivatives are given by
\[
\partial_z= \frac 12 \left( \frac{ \partial}{\partial \psi} + \frac 1 \theta \frac {\partial}{ \partial \gamma}\right) \qquad \text{and} \qquad 
\partial_{\bar z}= \frac 12 \left( \frac{ \partial}{\partial \psi} - \frac 1 \theta \frac {\partial}{ \partial \gamma}\right),
\]
satisfying
\[
dz(\partial_z)=1, \quad d\bar z(\partial_z)=0, \quad d\bar z(\partial_{\bar z})=1, \quad d z(\partial_{\bar z})=0.
\]
Explicitly, by considering a generalized function $w=b+ \theta c: \mathbb{R}_\lambda \to \mathbb{R}_\lambda$, its generalized partial derivatives are given by
\begin{equation}\label{partcompl}
\partial_{z} w =\frac 12 \left( \frac{\partial b}{\partial \psi}+ \frac{\partial c}{\partial \gamma}  +\theta \frac{\partial c}{\partial \psi}  + \frac 1 \theta \frac{\partial b}{\partial \gamma}\right) \qquad \text{and} \qquad 
\partial_{\bar z} w  = \frac 12 \left( \frac{\partial b}{\partial \psi}- \frac{\partial c}{\partial \gamma}  +\theta \frac{\partial c}{\partial \psi}  - \frac 1 \theta \frac{\partial b}{\partial \gamma}\right).
\end{equation}
Hence, a generalized function $w$ is $\mathbb{R}_\lambda$-holomorphic, i.e
\begin{equation}\label{holo1}
\partial_{\bar z} w  =0  
\end{equation}
if and only if  
\begin{equation*}
 \frac{\partial b}{\partial \psi} = \frac{\partial c}{\partial \gamma} \quad \text{and} \quad  \frac{\partial b}{\partial \gamma} = \lambda \frac{\partial c}{\partial \psi} 
 \end{equation*}
and consequently the generalized partial derivative of an $\mathbb{R}_\lambda$-holomorphic function becomes a generalized total derivative,  given by 
 \begin{equation}\label{holo2}
\frac{d w}{\partial z} \equiv \frac{\partial b}{\partial \psi} + \theta  \frac{\partial c}{\partial \psi}      
 \end{equation}
In the next section, when finding explicitly solutions of the CDYBE for $\mathfrak{g}_\lambda$, this generalized notion of holomorphicity for $\mathbb{R}_\lambda$ will play a key role. 
\begin{remark}
We have not formally defined   (anti-)holomorphic derivatives for functions of $\mathbb{R}_\lambda$ here, and  are aware that the presence of zero-divisors makes the definition via limits of difference-quotients tricky. However, we will derive all differential equations in this paper using conventional real calculus, and use the ring $\mathbb{R}_\lambda$ as a convenient tool for collecting pairs  of real equations into one generalised complex one. 
\end{remark}
 
\subsection{Casimirs and the CDYBE for \texorpdfstring{$\mathfrak{g}_\lambda$}{}} 
In terms of the generalized complexification (\ref{complex}), the Casimir (\ref{casimglambda}) can be written as 
\begin{equation}\label{casi}
K_{\alpha\beta}  = \frac{\alpha}{\alpha^2-\lambda \beta^2}(J_a \otimes \theta J^a + \theta J_a \otimes J^a) - \frac{\beta}{\alpha^2-\lambda \beta^2} (\lambda J_a \otimes J^a + \theta J_a \otimes \theta J^a)
\end{equation}
and consequently, by extending linearly the Classical Yang-Baxter map over the ring $\mathbb{R}_\lambda$, we obtain 
\begin{equation*}
\begin{split}
\Omega_{\alpha\beta} & = \text{CYB}(K_{\alpha \beta}) \\
& = \mu \epsilon_{abc}(\lambda J^a \otimes J^b \otimes J^c + J^a \otimes \theta J^b \otimes \theta J^c + \theta J^a \otimes J^b \otimes \theta J^c + \theta J^a \otimes \theta J^b \otimes J^c) \\
& \quad + \nu \epsilon_{abc} ( \theta J^a \otimes \theta J^b \otimes \theta J^c + \lambda \theta J^a \otimes J^b \otimes J^c + \lambda J^a \otimes \theta J^b \otimes J^c + \lambda J^a \otimes J^b \otimes \theta J^c) \\
& = [\mu (\lambda \text{id} \otimes \text{id} \otimes \text{id} + \text{id} \otimes \theta \otimes \theta + \theta \otimes \text{id} \otimes \theta + \theta \otimes \theta \otimes \text{id})] \epsilon_{abc}J^a \otimes J^b \otimes J^c \\
& \quad + [\nu (\theta \otimes \theta \otimes \theta) + \nu \lambda (\theta \otimes \text{id} \otimes \text{id} + \text{id} \otimes \theta \otimes \text{id} + \text{id} \otimes \text{id} \otimes \theta)] \epsilon_{abc}J^a \otimes J^b \otimes J^c
\end{split}
\end{equation*}
where 
\begin{equation}\label{munu}
\mu \equiv \frac{\alpha^2 + \lambda \beta^2}{(\alpha^2-\lambda \beta^2)^2}, \qquad \nu \equiv -\frac{2 \alpha \beta}{(\alpha^2-\lambda \beta^2)^2}.
\end{equation}
Here it is important that,  even though  the Lie algebras $\mathfrak{g}_\lambda$ can be recovered via the generalized complexifications $\mathfrak{su}(2) \otimes \mathbb{R}_\lambda$ and $\mathfrak{sl}(2,\mathbb{R}) \otimes \mathbb{R}_\lambda$, any tensor product of $\mathfrak{g}_\lambda$ with itself is taken over $\mathbb{R}$. \\\\
Since the goal is to obtain classical dynamical $(\mathfrak{g}_\lambda,\mathfrak{h}_\lambda,K_{\alpha \beta})$ $r$-matrices, we are looking for $r \in C^1(\mathfrak{h}_\lambda^*) \otimes (\mathfrak{g}_\lambda \wedge \mathfrak{g}_\lambda)$ such that   
\begin{equation}\label{sCDYB}
\text{CDYB}(r)=-\Omega_{\alpha \beta}.
\end{equation}
Analogously to \cite{Osei:2017ybk}, the most general antisymmetric classical dynamical $r$-matrix is of the form  
\begin{equation}\label{drmatrixans}
\begin{split}
r(\bm{\gamma},\bm{\psi}) & = A_{ab}(\bm{\gamma},\bm{\psi})J^a \otimes J^b + B_{ab}(\bm{\gamma},\bm{\psi})P^a \otimes J^b - B_{ab}(\bm{\gamma},\bm{\psi})J^b \otimes P^a + C_{ab}(\bm{\gamma},\bm{\psi})P^a \otimes P^b \\
& = J^a \otimes A(\bm{\gamma},\bm{\psi})J_a + P^a \otimes B(\bm{\gamma},\bm{\psi})J_a - B(\bm{\gamma},\bm{\psi})J_a \otimes P^a + P^a \otimes C(\bm{\gamma},\bm{\psi})P_a \\
& = (\text{id} \otimes A(\bm{\gamma},\bm{\psi}) + \theta \otimes B(\bm{\gamma},\bm{\psi}) - B(\bm{\gamma},\bm{\psi}) \otimes \theta + \theta \otimes \theta C(\bm{\gamma},\bm{\psi}) )J^a \otimes J_a
\end{split}
\end{equation}
where $A,B,C \in C^1(\mathfrak{h}^*_\lambda) \otimes  \text{L}(\mathfrak{g},\mathfrak{g})$, with $\mathfrak{g}=\mathfrak{su}(2)$ for Euclidean or $\mathfrak{g}=\mathfrak{sl}(2,\mathbb{R})$ for Lorentzian signature, such that $A$ and $C$ are skew-symmetric, i.e. $A_{ab}(\bm{\gamma},\bm{\psi})=-A_{ba}(\bm{\gamma},\bm{\psi})$ and $C_{ab}(\bm{\gamma},\bm{\psi})=-C_{ba}(\bm{\gamma},\bm{\psi})$. The skew-symmetry of these linear maps of $\mathfrak{g}$, implies there exist vector functions $v(\bm{\alpha},\bm{\psi})$ and $w(\bm{\alpha},\bm{\psi})$ in $C^1(\mathfrak{h}_{\lambda}^*, \mathbb{R}^3)$ (for $\mathfrak{g}=\mathfrak{su}(2)$) or in $C^1(\mathfrak{h}_{\lambda}^*,\mathbb{M}^{1,2})$ (for $\mathfrak{g}=\mathfrak{sl}(2,\mathbb{R})$), such that    
\begin{equation*}
A(\bm{\gamma},\bm{\psi})(\cdot)=[v(\bm{\gamma},\bm{\psi}),\ \cdot \ ] \qquad \text{and} \qquad C(\bm{\gamma},\bm{\psi})(\cdot)=[w(\bm{\gamma},\bm{\psi}), \ \cdot \ ], 
\end{equation*}
where the coordinates $(\bm{\gamma},\bm{\psi})$ parametrizes the dual Lie subalgebra $\mathfrak{h}_{\lambda}^*$ via 
\begin{equation}\label{paramet}
\gamma_0 J_0^* + \gamma_1 J_1^* + \gamma_2 J_2^* + \psi_0 P_0^* +\psi_1 P_1^* +\psi_2 P_2^*.
\end{equation}
Then, by plugging this general ansatz of $r(\bm{\gamma}, \bm{\psi})$ into the CDYBE (\ref{sCDYB}) and using some algebra, we obtain the first main result of the present paper, which consists of a \textit{set of four equations} for the linear maps $A,B,C$, which must be satisfied in order to conclude (\ref{drmatrixans}) is a solution of the Classical Dynamical Yang-Baxter equation (\ref{sCDYB}).    
\begin{theorem}\label{MainTheo1}
Let $\mathfrak{h}_{\lambda}$ be a Lie subalgebra of $\mathfrak{g}_\lambda$, with $\mathfrak{h}^*_\lambda$ parametrized like in (\ref{paramet}). A function $r \in C^1(\mathfrak{h}^*_\lambda) \otimes (\mathfrak{g}_\lambda \wedge \mathfrak{g}_\lambda)$ given by 
\begin{equation}\label{gendyn}
\begin{split}
r_{\text{d}}(\bm{\psi},\bm{\gamma}) & = K_{\alpha \beta} + r(\bm{\psi},\bm{\gamma}) \\
& = \frac{\alpha}{\alpha^2-\lambda \beta^2}(J_a \otimes P^a + P_a \otimes J^a) - \frac{\beta}{\alpha^2-\lambda \beta^2} (\lambda J_a \otimes J^a + P_a \otimes P^a) + J^a \otimes A(\bm{\psi},\bm{\gamma})J_a \\
& \qquad + P^a \otimes B(\bm{\psi},\bm{\gamma})J_a - B(\bm{\psi},\bm{\gamma})J_a \otimes P^a + P^a \otimes C(\bm{\psi},\bm{\gamma})P_a 
\end{split}
\end{equation}
is a solution of the CDYBE (\ref{sCDYB}) if and only if the linear maps $A,B,C \in C^1(\mathfrak{h}_\lambda^*) \otimes \text{L}(\mathfrak{g}_\lambda, \mathfrak{g}_\lambda$), with $A=[v^A(\bm{\psi},\bm{\gamma}),\cdot \ ]$ and $C=[v^C(\bm{\psi},\bm{\gamma}),\cdot \ ]$ skew-symmetric, satisfy the following set of four equations: 
\begin{equation}\label{main1}
-\frac{1}{2} \text{tr}(A^2)+\frac{\lambda}{2}[\text{tr}(B)^2-\text{tr}(B^2)] - \text{div}_{\bm{\gamma}}(v)  =-\mu \lambda
\end{equation}

\begin{equation}\label{main2}
-\text{tr}(CB) -  \text{div}_{\bm{\psi}}(w) = -\nu
\end{equation}

\begin{equation}\label{main3}
[B-\text{tr}(B)\text{id}](B+B^t)+\frac{1}{2}[\text{tr}(B)^2-\text{tr}(B^2)]\text{id}-CA + \lambda (C^2-\frac{1}{2}\text{tr}(C^2)\text{id}) +\text{curl}_{\bm{\psi}}(B) - \text{grad}_{\bm{\gamma}}(w)=-\mu \text{id}
\end{equation}

\begin{equation}\label{main4}
-A(B+B^t)+(B^t-\text{tr}(B)\text{id})(\lambda C -A)-\text{tr}(AB)\text{id} + \text{curl}_{\bm{\gamma}}(B^t) - \text{grad}_{\bm{\psi}}(v)=-\lambda \nu \text{id}
\end{equation}

where $[\text{grad}(v)]_{ab}=(\text{grad}(v_b))_a$ and $[\text{curl}(M)]_{ab}= \text{curl}(\text{Row}_a(M))_b$, such that the subindices $\bm{\gamma}$ and $\bm{\psi}$ indicate if the differential operators are computed with respect to the variables $(\gamma_0,\gamma_1,\gamma_2)$ or $(\psi_0,\psi_1,\psi_2)$, respectively.
\end{theorem}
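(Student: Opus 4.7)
The plan is to substitute the antisymmetric ansatz (\ref{drmatrixans}) directly into the CDYBE (\ref{sCDYB}) and group the resulting terms according to how the three tensor slots are populated by $J$'s and $P$'s. The generalized complexification of Lemma \ref{LemmaCat} furnishes an $\mathbb{R}_\lambda$-grading of $\mathfrak{g}_\lambda^{\otimes 3}$: since $P = \theta J$ and $\theta^2 = \lambda$, any slot with two $\theta$'s collapses to a $\lambda$-multiple of a slot with none, leaving exactly four inequivalent tensor structures. This is already visible in the explicit form of $\Omega_{\alpha\beta}$ computed just before the theorem, where one sees two scalar types (the coefficient of $\lambda\,\text{id}^{\otimes 3}$ and of $\theta\otimes\theta\otimes\theta$) and two "one-slot-distinguished" types. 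Matching slice-by-slice against $-\Omega_{\alpha\beta}$ will produce exactly the four equations (\ref{main1})--(\ref{main4}).

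Concretely, I would first compute the Schouten bracket $[[r,r]] = [r_{12},r_{13}] + [r_{12},r_{23}] + [r_{13},r_{23}]$ by expanding each $r$ into its four pieces $\text{id}\otimes A$, $\theta\otimes B$, $-B\otimes\theta$ and $\theta\otimes\theta\,C$. Each pairwise commutator then breaks into sixteen subproducts whose values follow from the structure relations (\ref{commut}) and the $\epsilon$-contraction identity $\epsilon_{abc}\epsilon^{ade} = \delta^d_b\delta^e_c - \delta^e_b\delta^d_c$. Using $A=[v^A,\cdot\,]$ and $C=[v^C,\cdot\,]$ to encode the skew-symmetry of $A$ and $C$, the double-bracket contributions reorganize into the precise combinations $\text{tr}(A^2)$, $CA$, $C^2 - \tfrac{1}{2}\text{tr}(C^2)\text{id}$ and so on that one sees in (\ref{main1}) and (\ref{main3}); the asymmetric placement of $B$ in the ansatz is what produces the $B+B^t$ and $B^t - \text{tr}(B)\text{id}$ combinations in (\ref{main3})--(\ref{main4}).

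Next I would compute $\text{Alt}(dr)$ by differentiating $A,B,C$ with respect to the coordinates $(\bm{\gamma},\bm{\psi})$ of $\mathfrak{h}_\lambda^*$ parametrized in (\ref{paramet}). The three contributions $h_i^{(k)}\partial r/\partial h^i$ naturally package the derivatives of $v^A$, $v^C$ and $B$ into divergences (where the differentiating index is contracted within the same slot), curls (where it is antisymmetrized against a free slot) and gradients (where the derivative mixes with the other coordinate set). Equating the total $[[r,r]]+\text{Alt}(dr)$ with $-\Omega_{\alpha\beta}$ and isolating each of the four tensor slices then gives the stated equations: the two totally $\epsilon$-contracted (scalar) slices produce (\ref{main1}) and (\ref{main2}), while the two slices with one free Lie-algebra slot produce the matrix equations (\ref{main3}) and (\ref{main4}).

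The main obstacle is pure bookkeeping. With four terms of distinct $\theta$-weight in $r$, the Schouten bracket alone expands into almost fifty tensor monomials, each of which must be normal-ordered via the commutation relations, contracted using the $\epsilon$-identity, and assigned to one of the four $\theta$-slices after applying $\theta^2=\lambda$. The identifications of trace parts versus traceless matrix parts, the sign bookkeeping from the antisymmetrization in $[[r,r]]$, and the consistent use of the skew-symmetry $A^t=-A$, $C^t=-C$ (so that $AA = A^2$ contributes a trace piece via the Jacobi-type identity for $[v^A,[v^A,\cdot\,]]$) all require care. Once the calculation is organized by $\theta$-weight from the outset, however, the four equations in the theorem emerge directly as the coefficients of the four independent tensor types appearing in $\Omega_{\alpha\beta}$.
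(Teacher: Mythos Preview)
Your overall strategy---expand the CDYBE on the ansatz, organise by the $\theta$-content of the three tensor slots, and read off the equations---is essentially the paper's, and it will work. But one piece of your reasoning is wrong and would confuse you in the execution: the claim that ``any slot with two $\theta$'s collapses to a $\lambda$-multiple of a slot with none'' is false. The paper stresses that $\mathfrak{g}_\lambda^{\otimes 3}$ is taken over $\mathbb{R}$, not over $\mathbb{R}_\lambda$, so $\theta J_a\otimes\theta J_b\otimes J_c$ is \emph{not} $\lambda\,J_a\otimes J_b\otimes J_c$; the $\theta$'s sit in different factors and cannot be multiplied. There are therefore eight independent $\theta$-types ($\text{id}$ or $\theta$ in each slot), not four. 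The reduction to four equations comes instead from the total antisymmetry of $\text{CDYB}(r)$ and $\Omega_{\alpha\beta}$ in the three slots: under the $S_3$ action the eight types fall into four orbits (one each for $\text{id}^{\otimes 3}$ and $\theta^{\otimes 3}$, and two three-element orbits for the mixed types), and antisymmetry forces the equations within each orbit to coincide.

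By comparison, the paper organises the same computation by contracting the CDYBE against a generic $X\otimes Y\otimes Z$ and separating into the eight $\theta$-types. For the Schouten part it imports the corresponding eight contracted identities from the non-dynamical CYBE analysis in \cite{Osei:2017ybk} (their (4.11)--(4.18)), computes only the eight $\text{Alt}(dr)$ contributions explicitly, and then observes that the combined system collapses to four independent PDEs in $X,Y$. The final passage to the trace/curl/div/grad form (\ref{main1})--(\ref{main4}) is done via adjugates, again following \cite{Osei:2013xra,Osei:2017ybk}. Your direct expansion would reproduce all of this, but the contraction device and the reuse of the earlier CYBE calculation save most of the fifty-term bookkeeping you anticipate.
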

\begin{proof}
In \cite{Osei:2013xra}, the contraction of the CYBE with a generic element $X \otimes Y \otimes Z \in \mathfrak{g}_\lambda^{\otimes 3}$ was found to be useful to determine classical $r$-matrices of $\mathfrak{g}_\lambda$ with symmetric part $K_{\alpha\beta}$ (equations (4.11)-(4.18)). Following the same approach, the contraction
\begin{equation*}
\langle \text{alt} [ d(r(\bm{\psi},\bm{\gamma}))], X \otimes Y \otimes Z \rangle = 0
\end{equation*}
can be decompose into the following eight independent terms in $(\mathbb{R}_\lambda)^{\otimes_{\mathbb{R}} 3}$
\begin{itemize}
\item $\text{id} \otimes \text{id} \otimes \text{id}$ 
\begin{equation}\label{ididid2}
\begin{split}
0 & = \langle J^b,X \rangle \langle J^a,Y \rangle \langle \partial_{\gamma_b}(A) J_a,Z \rangle - \langle J^a,X \rangle \langle J^b,Y \rangle \langle \partial_{\gamma_b} (A)J_a,Z \rangle + \langle J_a,X \rangle \langle \partial_{\gamma_b}(A)J_a,Y \rangle \langle J^b,Z \rangle  \\
& = \langle J^b,X \rangle \langle Y,\partial_{\gamma_b}(A^t)(Z) \rangle - \langle J^b,Y \rangle \langle X,\partial_{\gamma_b}(A^t)(Z) \rangle + \langle X,\partial_{\gamma_b}(A^t)(Y) \rangle \langle J^b,Z \rangle  \\
& = \langle \partial_{\gamma_a}(A)([J^a,[Y,X]]),Z  \rangle + \langle \langle X,\partial_{\gamma_a}(A^t)(Y) \rangle J^a,Z \rangle 
\end{split} 
\end{equation}

\item $\text{id} \otimes \theta \otimes \theta$
\begin{equation}
\begin{split}
0 & = \langle \partial_{\psi_b}(B) J^a,X  \rangle \langle J^b,Y \rangle \langle J_a,Z \rangle - \langle \partial_{\psi_b}(B) J^a,X  \rangle \langle J_a,Y \rangle \langle J^b,Z \rangle + \langle J^b,X \rangle \langle J^a,Y \rangle \langle \partial_{\gamma_b}(C) J_a,Z \rangle \\
& = \langle \partial_{\psi_b}(B^t) X , Z  \rangle \langle J^i,Y \rangle - \langle Y, \partial_{\psi_b}(B^t) X  \rangle \langle J^b,Z \rangle + \langle J^b,X  \rangle \langle Y,\partial_{\gamma_b}(C^t) Z \rangle \\
& = \langle [Y,[\partial_{\psi_b}(B^t) X,J^b]],Z \rangle + \langle \langle J^b,X \rangle \partial_{\gamma_b}(C) Y,Z \rangle
\end{split}
\end{equation}

\item $\theta \otimes \text{id} \otimes \theta$ 
\begin{equation}
\begin{split}
0 & = - \langle J^b,X  \rangle \langle \partial_{\psi_b}(B) J^a,Y \rangle \langle J_a,Z \rangle + \langle J^a,X \rangle \langle \partial_{\psi_b}(B) J_a , Y \rangle \langle J^b,Z \rangle - \langle J^a,X \rangle \langle J^b,Y \rangle \langle \partial_{\gamma_b}(C) J_a , Z \rangle \\
& = -\langle J^b,X \rangle \langle \partial_{\psi_b}(B^t) Y,Z \rangle + \langle X,\partial_{\psi_b}(B^t) Y \rangle \langle J^b,Z \rangle - \langle X,\partial_{\gamma_b}(C^t) Z \rangle \langle J^b,Y \rangle \\
& = \langle [X,[J^a,\partial_{\psi_a}(B^t) Y]],Z \rangle - \langle \langle J^b ,Y \rangle \partial_{\gamma_a}(C) X, Z \rangle 
\end{split}
\end{equation}

\item $\theta \otimes \theta \otimes \text{id}$
\begin{equation}
\begin{split}
0 & =  \langle J^b,X  \rangle \langle J^a,Y \rangle \langle \partial_{\psi_b}(B)J_a,Z \rangle - \langle J^a,X \rangle \langle J^b,Y \rangle \langle \partial_{\psi_b}(B) J_a, Z\rangle + \langle J^a,X \rangle \langle \partial_{\gamma_b}(C) J_a,Y \rangle \langle J^b,Z \rangle \\
& = \langle J^a,X \rangle \langle Y,\partial_{\psi_a}(B^t) Z \rangle + \langle J^a,Y \rangle \langle X, \partial_{\psi_a}(B^t) Z \rangle - \langle X,\partial_{\gamma_a}(C^t) Y \rangle \langle J^a,Z \rangle \\
& = \langle \partial_{\psi_a}(B) ([J^b,[Y,X]]),Z \rangle - \langle \langle X,\partial_{\gamma_a}(C^t) Y \rangle J^a, Z \rangle 
\end{split}
\end{equation}

\item $\theta \otimes \theta \otimes \theta$ 
\begin{equation}
\begin{split}
0 & =  \langle J^b,X  \rangle \langle J^a,Y \rangle \langle \partial_{\psi_b}(C) J_a,Z \rangle - \langle J^a,X \rangle \langle J^b,Y \rangle \langle \partial_{\psi_b}(C) J_a,Z \rangle + \langle J^a,X \rangle \langle \partial_{\psi_a}(C) J_a,Y \rangle \langle J^a,Z \rangle \\
& = \langle J^b,X \rangle \langle Y,\partial_{\psi_b}(C^t) Z \rangle - \langle J^b,Y \rangle \langle X,\partial_{\psi_b}(C^t) Z \rangle + \langle \langle X, \partial_{\psi_b}(C^t) Y \rangle J^b,Z \rangle \\
& = \langle \partial_{\psi_a}(C)([J^a,[Y,X]]), Z \rangle + \langle \langle X, \partial_{\psi_a}(C^t) Y \rangle J^a,Z \rangle 
\end{split}
\end{equation}

\item $\theta \otimes \text{id} \otimes \text{id}$ 
\begin{equation}
\begin{split}
0 & =  \langle J^b,X  \rangle \langle J^a,Y \rangle \langle \partial_{\psi_b}(A) J_a,Z \rangle - \langle J^a,X \rangle \langle J^b,Y \rangle \langle \partial_{\gamma_b}(B) J_a,Z \rangle + \langle J^a,X \rangle \langle \partial_{\gamma_b}(B) J_a,Y \rangle \langle J^b,Z \rangle \\
& = \langle J^a,X \rangle \langle Y,\partial_{\psi_a}(A^t) Z \rangle + \langle J^a,Y \rangle \langle X,\partial_{\gamma_a}(B^t) Z \rangle - \langle X,\partial_{\gamma_a}(B^t) Y \rangle \langle J^a,Z \rangle \\
& = \langle \langle J^a,X \rangle \partial_{\psi_a}(A) Y,Z \rangle - \langle [Y,[J^a,\partial_{\gamma_a}(B) X]],Z \rangle 
\end{split}
\end{equation}

\item $\text{id} \otimes \theta \otimes \text{id}$ 
\begin{equation}
\begin{split}
0 & = \langle J^a,X  \rangle \langle J^b,Y \rangle \langle \partial_{\psi_b}(A) J_a,Z \rangle + \langle J^b,X \rangle \langle J^a,Y \rangle \langle \partial_{\gamma_b}(B) J_a,Z \rangle - \langle \partial_{\gamma_b}(B) J^a,X \rangle \langle J_a,Y \rangle \langle J^b,Z \rangle \\
& = -\langle X,\partial_{\psi_a}(A^t) Z \rangle \langle J^a,Y \rangle + \langle \langle J^a,X \rangle Y,\partial_{\gamma_a}(B^t) Z\rangle - \langle Y,\partial_{\gamma_a}(B^t) X \rangle \langle J^a,Z \rangle \\
& = \langle [X,[\partial_{\gamma_a}(B) Y,J^a]],Z \rangle - \langle \langle J^a,Y \rangle \partial_{\psi_a}(A) X,Z \rangle 
\end{split}
\end{equation}

\item $\text{id} \otimes \text{id} \otimes \theta$ 
\begin{equation}\label{ididt2}
\begin{split}
0 & =  \langle J^a,X  \rangle \langle \partial_{\psi_b}(A) J_a,Y \rangle \langle J^b,Z \rangle - \langle J^b,X \rangle \langle \partial_{\gamma_b}(B) J^a,Y \rangle \langle J_a,Z \rangle + \langle \partial_{\gamma_b}(B) J^a,X \rangle \langle J^b,Y \rangle \langle J_a,Z \rangle \\
& = \langle X,\partial_{\psi_a}(A^t) Y \rangle \langle J^a,Z \rangle - \langle J^a,X \rangle \langle \partial_{\gamma_a}(B^t) Y,Z \rangle + \langle \partial_{\gamma_a}(B^t) X,Z \rangle \langle J^a,Y \rangle \\
& = \langle \partial_{\gamma_a}(B^t)([J^a,[Y,X]]), Z \rangle + \langle \langle X, \partial_{\psi_a}(A^t) Y \rangle J^a,Z \rangle
\end{split}
\end{equation}
\end{itemize}
Hence combining the corresponding terms among (4.11)-(4.18) in \cite{Osei:2017ybk} and (\ref{ididid2})-(\ref{ididt2}), we conclude the contraction 
\begin{equation}
\langle \text{CDYBE}(r(\bm{\psi},\bm{\gamma})), X \otimes Y \otimes Z \rangle = - \langle \Omega_{\alpha \beta}, X \otimes Y \otimes Z \rangle
\end{equation}
can be decomposed into a set of eight terms, which reduces to the following of \textit{four independent coupled partial differential equations} for the linear maps $A,B,C$
\begin{equation*}
\begin{split}
& [A(X),A(Y)]-A^2([X,Y])+\lambda (B([X,B^t(Y)]+[B^t(X),Y])+[B^t(X),B^t(Y)]) \\
& \qquad \qquad \qquad \qquad \qquad \qquad \qquad \qquad \qquad \quad - \partial_{\gamma_a}(A)([J^a,[X,Y]]) + \langle \partial_{\gamma_a}(A)(X),Y \rangle J^a  = - \mu \lambda [X,Y]
\end{split}
\end{equation*}

\begin{equation*}
\begin{split}
& B^tC([X,Y])+[B(X),C(Y)] + [C(X),B(Y)] - C([B(X),Y]+[X,B(Y)]) \\
& \qquad \qquad \qquad \qquad \qquad \qquad \qquad \qquad \qquad \quad - \partial_{\psi_a}(C)([J^a,[X,Y]]) + \langle \partial_{\psi_a}(C)(X), Y \rangle J^a =-\nu [X,Y]
\end{split}
\end{equation*}

\begin{equation*}
\begin{split}
& [B(X),B(Y)]-B([X,B(Y)]+[X,B(Y)])-AC([X,Y])+\lambda[C(X),C(Y)] \\
& \qquad \qquad \qquad \qquad \qquad \qquad \qquad \qquad \qquad \quad - \partial_{\psi_a}(B) ([J^a,[X,Y]]) + \langle \partial_{\gamma_a}(C)(X), Y \rangle J^a = -\mu [X,Y]
\end{split}
\end{equation*}

\begin{equation*}
\begin{split}
& B^t A([X,Y])-[B^t(X),A(Y)]-[A(X),B^t(Y)]+\lambda C([X,B^t(Y)]+[B^t(X),Y])\\
& \qquad \qquad \qquad \qquad \qquad \qquad \qquad \qquad \qquad \quad - \partial_{\gamma_a}(B^t) ([J^a,[X,Y]]) + \langle \partial_{\psi_a}(A)(X), Y \rangle J^a =-\lambda \nu [X,Y] 
\end{split}
\end{equation*}
Finally, using adjugates (exactly like in \cite{Osei:2013xra} and \cite{Osei:2017ybk}) we derive the 4 equations (\ref{main1})-(\ref{main4}).
\end{proof}
The equations (\ref{main1})-(\ref{main4}) are the \textit{dynamical generalization} of (4.39) in \cite{Osei:2017ybk}. It is immediate to notice that if we assume the linear maps are constant, i.e. $A,B,C \in \text{L}
(\mathfrak{g}_\lambda,\mathfrak{g}_\lambda) \subset C^1(\mathfrak{h}_\lambda^*) \otimes \text{L}(\mathfrak{g}_\lambda,\mathfrak{g}_\lambda)$, then the former set of equations reduces to the later, as expected. 

\section{Classical dynamical \texorpdfstring{$r$}{}-matrices for \texorpdfstring{$\mathfrak {g}_\lambda$}{}}\label{Sec3paper1}



\subsection{Classical dynamical \texorpdfstring{$(\mathfrak{g}_\lambda,\mathfrak{h}_\lambda,K_{\alpha \beta})$}{} \texorpdfstring{$r$}{}-matrices}
In the previous theorem we find a set of equations (\ref{main1})-(\ref{main4}), equivalent to the CDYBE (\ref{sCDYB}) for any Lie subalgebra $\mathfrak{h}_\lambda$ of $\mathfrak{g}_\lambda$, nevertheless, as mentioned in the introduction we are interested in the case when it is a Cartan subalgebra. Since any Cartan Lie subalgebra of $\mathfrak{g}_\lambda$ is conjugate-equivalent to $\mathfrak{h}_{\lambda,0}$ and/or $\mathfrak{h}_{\lambda,1}$  (also to $\mathfrak{h}^{\pm}_{\lambda,01}$ or  $\mathfrak{h}^{\mp}_{\lambda,01}$ only for $\mathfrak{so}(2,2)$), for our purposes it is enough to determine the set $\text{Dyn}(\mathfrak{g}_\lambda,\mathfrak{h}_\lambda,K_{\alpha \beta})$ for the cases where $\mathfrak{h}_\lambda$ is $\mathfrak{h}_{\lambda,0}$ and $\mathfrak{h}_{\lambda,1}$ (and also $\mathfrak{h}^{\pm}_{01}$ and $\mathfrak{h}^{\pm}_{10}$ for $\mathfrak{so}(2,2)$). This gives us the full moduli space $\mathcal{M}^{\mathfrak{C}}(\mathfrak{g}_\lambda,K_{\alpha \beta})$ and then, via dynamical gauge transformations, the full space $\text{Dyn}^{\mathfrak{C}}(\mathfrak{g}_\lambda,K_{\alpha \beta})$ can be generated. 

For all $\lambda \in \mathbb{R}$, the Cartan subalgebras $\mathfrak{h}_\lambda$ are Abelian subalgebras of $\mathfrak{g}_\lambda$. Therefore  the $\mathfrak{h}_\lambda$-equivariance condition (\ref{equiv}) of a classical dynamical $r$-matrix reduces to 
\begin{equation}\label{equivabel}
[r_\text{d}(x),h \otimes 1 + 1 \otimes h]=0
\end{equation}
for all $x \in \mathfrak{h}_{\lambda}^*$ and $h \in \mathfrak{h}_{\lambda}$. \\\\
Using  the notation introduced  for the antisymmetric part $r$ of $r_\text{d}$ in (\ref{gendyn}), we now determine  the implications of the $\mathfrak{h}_\lambda$-equivariance condition
for the linear maps $[v,\cdot]$, $B$ and $[w,\cdot]$ in \eqref{drmatrixans}, which we write more explicitly as 
\begin{equation}\label{rdynsa}
r(\psi,\gamma)  = \epsilon_{abc}v^c(\psi,\gamma) J^a \otimes J^b + B_{ab}(\psi,\gamma)(P^a \otimes J^b - J^b \otimes P^a) + \epsilon_{abc}w^c(\psi,\gamma) P^a \otimes P^b.
\end{equation}
As we shall see shortly,  this condition only has interesting solutions for the  Cartan subalgebras $\mathfrak{h}_{\lambda,0}$ and $\mathfrak{h}_{\lambda,1}$. 
In  the cases  $\mathfrak{h}_\lambda=\mathfrak{h}_{01}^{\pm}$ (arising in  $\mathfrak{g}_\lambda=\mathfrak{so}(2,2)$), the equivariance condition implies that all the coefficients in (\ref{rdynsa}) vanish except for $B_{01}$, $B_{10}$, $v^2$ and $w^2$, and additionally $B_{01}=B_{10}=v^2=-w^2$. This means that the most general $\mathfrak{h}^{\pm}_{01}$-invariant element in $\mathcal{F}(\mathfrak{h}^{\pm *}_{01}) \otimes \mathfrak{so}(2,2) \otimes \mathfrak{so}(2,2)$ is of the form 
\begin{equation}\label{h01plusminus}
    K_{\alpha \beta} + f(\psi,\gamma)(J_0 + P_0) \wedge (J_1-P_1)
\end{equation}
with $f \in \mathcal{F}(\mathfrak{h}^{\pm *}_{01})$ arbitrary. Analogously the most general $\mathfrak{h}^{\mp}_{01}$-invariant element in $\mathcal{F}(\mathfrak{h}^{\mp *}_{01}) \otimes \mathfrak{so}(2,2) \otimes \mathfrak{so}(2,2)$ is of the form 
\begin{equation}\label{h01minusplus}
    K_{\alpha \beta} + g(\psi,\gamma)(J_0 - P_0) \wedge (J_1+P_1)
\end{equation}
with $g \in \mathcal{F}(\mathfrak{h}^{\mp *}_{01})$ arbitrary.
However, neither of these lead to interesting solutions of the CDYBE.
 By plugging (\ref{h01plusminus}) or (\ref{h01minusplus}) into the CDYBE, the equations (\ref{main1}) and (\ref{main2}) reduce in both cases to $0=\mu \lambda$ and $0= \nu$, respectively, showing that $\text{Dyn}(\mathfrak{so}(2,2),\mathfrak{h}_\lambda,K_{\alpha \beta})= \emptyset$ for $\mathfrak{h}_\lambda=\mathfrak{h}^\pm_{01}$ and $\mathfrak{h}_\lambda=\mathfrak{h}^\mp_{01}$. 
Therefore, the discussion of $\text{Dyn}^{\mathfrak{C}}(\mathfrak{g}_\lambda,K_{\alpha \beta})$ is reduced to the determination of $\text{Dyn}(\mathfrak{g}_\lambda,\mathfrak{h}_{\lambda,0},K_{\alpha \beta})$ and $\text{Dyn}(\mathfrak{g}_\lambda,\mathfrak{h}_{\lambda,1},K_{\alpha \beta})$, to which we now turn.


\begin{lemma}\label{Lemma2} 
The most general $\mathfrak{h}_{\lambda,0}$-equivariant element $r \in \mathcal{F}(\mathfrak{h}_{\lambda,0}^*) \otimes (\mathfrak{g}_\lambda \otimes \mathfrak{g}_\lambda )$ is of the form
\begin{equation}\label{hequivdyn0}
r(\psi_0,\gamma_0)= K_{\alpha \beta} +     c(\psi_0,\gamma_0)(\lambda J^1 \wedge J^2 + P^1 \wedge P^2) + f(\psi_0,\gamma_0)(P^0 \wedge J^0)+b(\psi_0,\gamma_0)(P^1 \wedge J^2 - J^1 \wedge P^2)
\end{equation}
with $b,c,f \in \mathcal{F}(\mathfrak{h}_{\lambda,0}^*)$ arbitrary functions. \\\\
Similarly, the most general $\mathfrak{h}_{\lambda,1}$-equivariant element $r \in \mathcal{F}(\mathfrak{h}_{\lambda,1}^*) \otimes (\mathfrak{g}_\lambda \otimes \mathfrak{g}_\lambda )$ is of the form
\begin{equation}\label{hequivdyn1}
r(\psi_1,\gamma_1)= K_{\alpha \beta} +     c(\psi_1,\gamma_1)(\lambda J^2 \wedge J^0 + P^2 \wedge P^0) + f(\psi_1,\gamma_1)(P^1 \wedge J^1)+b(\psi_1,\gamma_1)(P^2 \wedge J^0 - J^2 \wedge P^0)
\end{equation}
with $b,c,f \in \mathcal{F}(\mathfrak{h}_{\lambda,1}^*)$ arbitrary. 
\end{lemma}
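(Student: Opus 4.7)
Since $\mathfrak{h}_{\lambda,0}$ is an abelian Lie subalgebra spanned by the commuting generators $J_0$ and $P_0$, the equivariance condition (\ref{equiv}) collapses to (\ref{equivabel}): $r$ must commute with both $J_0 \otimes 1 + 1 \otimes J_0$ and $P_0 \otimes 1 + 1 \otimes P_0$. Since $K_{\alpha\beta} \in (S^2 \mathfrak{g}_\lambda)^{\mathfrak{g}_\lambda}$ is automatically invariant, the content of the lemma reduces to determining which antisymmetric elements, written in the generic form (\ref{rdynsa}), are simultaneously annihilated by these two commutators.

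The plan is to exploit the fact that $\mathrm{ad}_{J_0}$ and $\mathrm{ad}_{P_0}$ preserve the splitting $\mathfrak{g}_\lambda = \mathfrak{h}_{\lambda,0} \oplus V$ with $V = \mathrm{span}\{J_1, J_2, P_1, P_2\}$. This induces the direct sum decomposition
\[
\mathfrak{g}_\lambda \wedge \mathfrak{g}_\lambda \;=\; (\mathfrak{h}_{\lambda,0} \wedge \mathfrak{h}_{\lambda,0}) \;\oplus\; (\mathfrak{h}_{\lambda,0} \otimes V) \;\oplus\; (V \wedge V),
\]
of dimensions $1+8+6=15$, and each summand can be analysed independently. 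The relations (\ref{commut}) show that $\mathrm{ad}_{J_0}$ acts as a non-zero rotation on each of the pairs $(J_1,J_2)$ and $(P_1,P_2)$, while $\mathrm{ad}_{P_0}$ intertwines the two pairs with a factor of $\lambda$; in particular $V$ itself contains no non-zero $\mathrm{ad}_{J_0}$-invariant, so the middle summand contributes nothing.

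The first summand, spanned by $P^0 \wedge J^0$, is automatically $\mathfrak{h}_{\lambda,0}$-invariant and accounts for the function $f$. For the six-dimensional summand $V \wedge V$ I would first diagonalise $\mathrm{ad}_{J_0}$: the basis vectors $J^1 \wedge J^2$ and $P^1 \wedge P^2$ are immediately invariant, while the four mixed wedges carry the direct sum of two planar rotations and contribute a two-dimensional $\mathrm{ad}_{J_0}$-invariant subspace. Imposing $\mathrm{ad}_{P_0}$ next glues $J^1 \wedge J^2$ and $P^1 \wedge P^2$ into the unique $P_0$-invariant combination $\lambda J^1 \wedge J^2 + P^1 \wedge P^2$, giving the function $c$, and singles out inside the mixed block the specific direction $P^1 \wedge J^2 - J^1 \wedge P^2$, giving the function $b$; the remaining directions acquire non-zero $\mathrm{ad}_{P_0}$ images and are eliminated. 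Adding the three contributions reproduces (\ref{hequivdyn0}).

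The $\mathfrak{h}_{\lambda,1}$ case follows from the same argument after the cyclic relabelling $(0,1,2)\mapsto(1,2,0)$ of the transverse indices, which is implemented by an inner automorphism of $\mathfrak{g}_\lambda$ carrying $\mathfrak{h}_{\lambda,0}$ to $\mathfrak{h}_{\lambda,1}$; the three invariants transport accordingly. The main bookkeeping obstacle will be the systematic evaluation of the brackets on $V \wedge V$, but organising the work in the two steps above --- first $\mathrm{ad}_{J_0}$, then $\mathrm{ad}_{P_0}$ --- keeps the linear algebra manageable.
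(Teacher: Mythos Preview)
Your overall strategy is sound and matches the paper's proof in spirit: both reduce the lemma to determining the subspace of $\mathfrak{g}_\lambda \wedge \mathfrak{g}_\lambda$ annihilated by $\mathrm{ad}_{J_0}$ and $\mathrm{ad}_{P_0}$. The paper carries this out by brute-force computation on the parametrisation (\ref{rdynsa}), extracting conditions on the coefficients $v^a$, $B_{ab}$, $w^a$ one generator at a time; your use of the splitting $\mathfrak{g}_\lambda = \mathfrak{h}_{\lambda,0} \oplus V$ and the induced decomposition of $\wedge^2\mathfrak{g}_\lambda$ is a cleaner way of organising the same linear algebra. One small imprecision: $\mathrm{ad}_{P_0}$ does not act within $\mathrm{span}\{J^1\wedge J^2,\,P^1\wedge P^2\}$ but sends both into the mixed block, so the combination $\lambda J^1\wedge J^2 + P^1\wedge P^2$ arises as a kernel vector of $\mathrm{ad}_{P_0}$ on the full four-dimensional $\mathrm{ad}_{J_0}$-invariant subspace of $V\wedge V$, not by ``gluing'' inside a two-dimensional block.

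There is, however, a genuine gap in your treatment of the $\mathfrak{h}_{\lambda,1}$ case. The cyclic relabelling $(0,1,2)\mapsto(1,2,0)$ is \emph{not} an automorphism of $\mathfrak{g}_\lambda$ in the Lorentzian signature, let alone an inner one: the metric $\mathrm{diag}(1,-1,-1)$ breaks the cyclic symmetry, and indeed Appendix~B shows that $\mathfrak{h}_{\lambda,0}$ and $\mathfrak{h}_{\lambda,1}$ lie in distinct conjugacy classes of Cartan subalgebras of $\mathfrak{iso}(2,1)$ and $\mathfrak{so}(2,2)$. Since in the Euclidean case all Cartans are conjugate, the $\mathfrak{h}_{\lambda,1}$ statement is non-redundant precisely where your automorphism argument fails. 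The fix is straightforward: simply rerun the same computation with $J_1,P_1$ in place of $J_0,P_0$ (as the paper does, remarking that the argument is ``completely analogous''); the action of $\mathrm{ad}_{J_1}$ and $\mathrm{ad}_{P_1}$ on the new complement $\mathrm{span}\{J_0,J_2,P_0,P_2\}$ has the same block structure up to signs coming from the signature, and the three invariants emerge with indices permuted.
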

\begin{proof}
For the $\mathfrak{h}_\lambda= \mathfrak{h}_{\lambda,0}$ case, by direct computation we have for the first term in the skew-symmetric part of (\ref{rdynsa})
\begin{equation*}
[J_0 \otimes 1+ 1 \otimes J_0, \epsilon_{abc}v^c J^a \otimes J^b] = v^i (J^0 \wedge J^i),
   \end{equation*}
while for the second
\begin{equation*}
\begin{split}
[J_0 \otimes 1 + 1 \otimes J_0, B_{ab}(P^a \otimes J^b - J^b \otimes P^a)] & =  \epsilon_{ij}B_{i0}(P^j \wedge J^0) + \epsilon_{ij}B_{0i}(P^0 \wedge J^j) \\
& \quad + (\epsilon_{ij}B_{ik}+ \epsilon_{\ell k}B_{j \ell})(P^j \wedge J^k),
\end{split}
\end{equation*}
and for the third
\begin{equation*}
\begin{split}
[J_0 \otimes 1 +1 \otimes J_0, \epsilon_{abc}w^c P^a \otimes P^b] & = w^i(P^0 \wedge P^i).
\end{split}
\end{equation*}
Thus, in order to have equivariance with respect to the generator $J_0$ of $\mathfrak{h}_{\lambda,0}$, we need 
\begin{equation}
0=v^i (J^0 \wedge J^i) +  \epsilon_{ij}B_{i0}(P^j \wedge J^0) + \epsilon_{ij}B_{0i}(P^0 \wedge J^j) + (\epsilon_{i j}B_{ i k}+ \epsilon_{\ell k}B_{j \ell})(P^j \wedge J^k) + w^i(P^0 \wedge P^i)  
\end{equation}
where the indices $i,j$ just take the values $1$ or $2$. Therefore, the equivariance with respect to $J_0$ requires:  
\begin{itemize}
    \item $v^i=w^i=0$ for $i=1,2$,
    \item $B_{i0}=B_{0i}=0$ for $i=1,2$,
    \item $B_{12}=-B_{21}$, 
    \item $B_{11}=B_{22}$.
\end{itemize}
Analogously, by straightforward computation we get for the first term in the skew-symmetric part of (\ref{rdynsa})
\begin{equation*}
  [P_0 \otimes 1 + 1 \otimes P_0, \epsilon_{abc}v^c J^a \otimes J^b] = v^0(P^1 \wedge J^1 + P^2 \wedge J^2) + v^1(J^0 \wedge P^1) + v^2 (J^0 \wedge P^2),
\end{equation*}
for the second
\begin{equation*}
[P_0 \otimes 1 + 1 \otimes P_0, B_{ab}(P^a \otimes J^b - J^b \otimes P^a)] = \epsilon_{ji}B_{0i}P^j \wedge P^0 + (B_{11}+B_{22})P^1 \wedge P^2 + \lambda \epsilon_{ij}B_{ib}J^j \wedge J^b,
\end{equation*}
and for the last one
\begin{equation*}
[P_0 \otimes 1 +1 \otimes P_0, \epsilon_{abc}w^c P^a \otimes P^b] = \lambda w^0 (J^1 \wedge P^1 + J^2 \wedge P^2) + \lambda w^1 (P^0 \wedge J^1) +\lambda w^2 (P^0 \wedge J^2) 
\end{equation*}
So, the equivariance with respect to the generator $P_0$ is equivalent to the equation
\begin{equation}
\begin{split}
0 & = (\lambda w^0 - v^0)(J^1 \wedge P^1 + J^2 \wedge P^2) + (\lambda w^i -v^i)(J^0 \wedge P^i) + \epsilon_{ji}B_{0i}P^j \wedge P^0 \\
& \quad   + (B_{11}+B_{22})P^1 \wedge P^2 + \lambda \epsilon_{ij}B_{ib}J^j \wedge J^b  
\end{split}
\end{equation}
where again the indices $i$, $j$ run from $1$ to $2$, while $b$ from $0$ to $2$. This equation translates into the following set of additional conditions: 
\begin{itemize}
    \item $v^a = \lambda w^a$ for $a=0,1,2$,
    \item $B_{11}+B_{22}=0$.
\end{itemize}
Consequently, by incorporating the set of constraints derived above and considering the $\mathfrak{g}_\lambda$-invariance of $K_{\alpha \beta}$, we conclude that an element $r \in \mathcal{F}(\mathfrak{h}_{\lambda,0}^*) \otimes (\mathfrak{g}_\lambda \otimes \mathfrak{g}_\lambda)$ results to be  $\mathfrak{h}_{\lambda,0}$-equivariant if and only if it is of the form (\ref{hequivdyn0}), with the identification 
\begin{equation*}
  f \equiv B_{00}, \qquad b \equiv B_{12},  \qquad c \equiv w^0. 
\end{equation*} 
The conditions for the equivariance for the case $\mathfrak{h}_\lambda=\mathfrak{h}_{\lambda,1}$ are derived in an completely analogous way, concluding an element $r \in \mathcal{F}(\mathfrak{h}_{\lambda,1}^*) \otimes (\mathfrak{g}_\lambda \otimes \mathfrak{g}_\lambda)$ results to be  $\mathfrak{h}_{\lambda,1}$-equivariant if and only if it is of the form (\ref{hequivdyn1}), but now with the identifications $f \equiv B_{11}$, $b \equiv B_{20}$ and $c \equiv w^1$. 
\end{proof}

Having at hand the form of the most general $\mathfrak{h}_\lambda$-equivariant element of $\mathcal{C}^1(\mathfrak{h}_\lambda^*) \otimes (\mathfrak{g}_\lambda \otimes \mathfrak{g}_\lambda)$, for both $\mathfrak{h}_\lambda=\mathfrak{h}_{\lambda,0}$ (\ref{hequivdyn0}) and $\mathfrak{h}_\lambda=\mathfrak{h}_{\lambda,1}$ (\ref{hequivdyn1}), we proceed to determine all the classical dynamical $(\mathfrak{g}_\lambda,\mathfrak{h}_\lambda,K_{\alpha \beta})$ $r$-matrices by solving the CDYBE (\ref{CDYB}) using it now as Ansatz, deriving in this way the second main result of this paper.  
\begin{theorem}\label{MainTheo2}
Let $(\psi_C,\gamma_C) \in \mathbb{R}^2$ constants. The classical dynamical $(\mathfrak{g}_\lambda,\mathfrak{h}_{\lambda},K_{\alpha \beta})$ $r$-matrices for $\mathfrak{h}_{\lambda}= \mathfrak{h}_{\lambda,0}$ and $\mathfrak{h}_{\lambda}= \mathfrak{h}_{\lambda,1}$ are necessarily elements in $C^1(\mathfrak{h}_\lambda^*) \otimes (\mathfrak{g}_\lambda \otimes \mathfrak{g}_\lambda)$ of the form  
\begin{equation*}
r_d(\psi_0,\gamma_0)=K_{\alpha \beta} +  f(\psi_0,\gamma_0)(P^0 \wedge J^0)+b(\psi_0,\gamma_0) (P^1 \wedge J^2 - P^2 \wedge J^1) + c(\psi_0,\gamma_0)(\lambda J^1 \wedge J^2 + P^1 \wedge P^2) 
\end{equation*}
and 
\begin{equation*}
r_d(\psi_1,\gamma_1)= K_{\alpha \beta} + f(\psi_1,\gamma_1)(P^1 \wedge J^1)+b(\psi_1,\gamma_1)(P^2 \wedge J^0 - J^2 \wedge P^0) + c(\psi_1,\gamma_1)(\lambda J^2 \wedge J^0 + P^2 \wedge P^0),
\end{equation*}
respectively, such that $f \in C^1(\mathfrak{h}_\lambda^*)$,
\begin{equation}\label{bcoeflong}
\begin{split}
    b(\psi,\gamma) & = \frac{\alpha}{\alpha^2-\lambda \beta^2} B \left( \frac{\alpha(\psi-\psi_C)-\beta(\gamma-\gamma_C)}{\alpha^2-\lambda \beta^2}, \frac{\alpha(\gamma-\gamma_C)-\lambda \beta (\psi-\psi_C)}{\alpha^2 - \lambda \beta^2} \right) \\
    & \quad -\lambda \frac{\beta}{\alpha^2-\lambda \beta^2} C \left( \frac{\alpha(\psi-\psi_C)-\beta(\gamma-\gamma_C)}{\alpha^2-\lambda \beta^2}, \frac{\alpha(\gamma-\gamma_C)-\lambda \beta (\psi-\psi_C)}{\alpha^2 - \lambda \beta^2} \right)
\end{split}
\end{equation}
and 
\begin{equation}\label{ccoeflong}
\begin{split}
    c(\psi,\gamma) & = \frac{\alpha}{\alpha^2-\lambda \beta^2} C \left( \frac{\alpha(\psi-\psi_C)-\beta(\gamma-\gamma_C)}{\alpha^2-\lambda \beta^2}, \frac{\alpha(\gamma-\gamma_C)-\lambda \beta (\psi-\psi_C)}{\alpha^2 - \lambda \beta^2} \right) \\ 
    & \quad - \frac{\beta}{\alpha^2-\lambda \beta^2} B \left( \frac{\alpha(\psi-\psi_C)-\beta(\gamma-\gamma_C)}{\alpha^2-\lambda \beta^2}, \frac{\alpha(\gamma-\gamma_C)-\lambda \beta (\psi-\psi_C)}{\alpha^2 - \lambda \beta^2} \right)
    \end{split}
\end{equation}
where for $\mathfrak{h}_{\lambda,0}$
\begin{equation}\label{coeffbc0} 
B(\Psi,\Gamma)=
\begin{cases}
\frac{\sin(2 \Psi)}{\cos (2 \Psi) + \cosh \left(2\sqrt{|\lambda|}\Gamma \right)} \\
 \tan \left(\Psi \right) \\
\frac{\sin(2 \Psi)}{\cos(2\Psi)+\cos \left(2\sqrt{\lambda}\Gamma \right)} 
\end{cases} \quad
C(\Psi,\Gamma)=
\begin{cases}
\frac{1}{\sqrt{|\lambda|}}\frac{\sinh \left(2\sqrt{|\lambda|}\Gamma \right)}{\cos (2\Psi) + \cosh \left(2 \sqrt{|\Lambda|} \Gamma \right)} \qquad & \lambda<0 \\
\frac{\Gamma}{2 \cos^2 \left(\Psi \right)} \qquad & \lambda=0 \\
\frac{1}{\sqrt{\lambda}} \frac{\sin \left(2 \sqrt{\lambda}\Gamma  \right)}{\cos (2 \Psi)+\cos(2 \sqrt{\lambda} \Gamma)} \qquad & \lambda>0
\end{cases},
\end{equation}
while for $\mathfrak{h}_{\lambda,1}$, we have two types of solutions:
\begin{enumerate}[(i)]
\item Non-constant coefficients:   
\begin{equation}\label{coeffbc1} 
B(\Psi,\Gamma)=
\begin{cases}
\frac{\sinh(2 \Psi)}{\cosh (2 \Psi) + \cos \left(2\sqrt{|\lambda|}\Gamma \right)} \\
 \tanh \left(\Psi \right) \\
\frac{\sinh(2 \Psi)}{\cosh(2\Psi)+\cosh \left(2\sqrt{\lambda}\Gamma \right)} 
\end{cases}
\qquad C(\Psi,\Gamma)=
\begin{cases}
\frac{1}{\sqrt{|\lambda|}}\frac{\sin \left(2\sqrt{|\lambda|}\Gamma \right)}{\cosh (2\Psi) + \cos \left(2 \sqrt{|\Lambda|} \Gamma \right)}, \qquad & \lambda<0 \\
\frac{\Gamma}{2 \cosh^2 \left(\Psi \right)}, \qquad & \lambda=0 \\
\frac{1}{\sqrt{\lambda}} \frac{\sinh \left(2 \sqrt{\lambda}\Gamma  \right)}{\cosh (2 \Psi)+\cosh(2 \sqrt{\lambda} \Gamma)}, \qquad & \lambda>0
\end{cases}
\end{equation}
\item or constant coefficients  
\begin{equation}\label{conscoeff}
    B(\Psi,\Gamma)= \pm 1 \quad \text{and} \quad C(\Psi,\Gamma)=0
\end{equation}
\end{enumerate}
\end{theorem}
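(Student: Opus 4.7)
The plan is to substitute the $\mathfrak{h}_\lambda$-equivariant ansatz from Lemma \ref{Lemma2} into the PDE system of Theorem \ref{MainTheo1} and solve the resulting equations, organised by the generalised complexification of Section \ref{Sec2paper1}.

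I would begin with the $\mathfrak{h}_{\lambda,0}$ case. By Lemma \ref{Lemma2} the antisymmetric part is parametrised by three scalar functions $b, c, f$ of $(\psi_0,\gamma_0)$, which in the language of (\ref{drmatrixans}) corresponds to $A = \lambda c\,\operatorname{ad}_{J_0}$, $C = c\,\operatorname{ad}_{J_0}$, and a matrix $B$ whose only nonzero entries are $B_{00}=f$, $B_{12}=b$, $B_{21}=-b$. Substituting these into (\ref{main1})--(\ref{main4}) and computing the relevant traces, products, $\operatorname{curl}$ and $\operatorname{grad}$ operations, the four matrix equations should collapse to a small coupled scalar PDE system in which $f$ drops out entirely, consistent with $B_{00}$ lying in the Cartan direction. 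Hence $f$ is an arbitrary smooth function and the problem reduces to a coupled system for $(b,c)$.

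I would then exploit the generalised complexification by introducing $z_0 = \gamma_0 + \theta\psi_0$ and the $\mathbb{R}_\lambda$-valued combination $w = c + \theta b$. The coupled real system should organise into a single Riccati-type equation in $\mathbb{R}_\lambda$, whose inhomogeneous term is $\mu + \theta\nu = (\alpha+\theta\beta)^{-2}$ by (\ref{munu}). The change of variables appearing in (\ref{bcoeflong})--(\ref{ccoeflong}) is then natural: it is precisely the affine map $z_0 \mapsto (\alpha-\theta\beta)(z_0-z_{0,C})/(\alpha^2-\lambda\beta^2)$, i.e.\ multiplication by the $\mathbb{R}_\lambda$-inverse of $\alpha+\theta\beta$, which absorbs the prefactor and reduces the problem to a canonical $(\alpha,\beta)=(1,0)$ equation, with $(\psi_C,\gamma_C)$ encoding its translation invariance. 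For $\mathfrak{h}_{\lambda,0}$ the $J_0$-direction exponentiates to a compact one-parameter subgroup, forcing trigonometric behaviour in $\Psi$, while the $\theta$-direction contributes hyperbolic, linear or trigonometric behaviour in $\Gamma$ according to the sign of $\lambda$; this recovers (\ref{coeffbc0}), which is then verified by direct substitution. The $\mathfrak{h}_{\lambda,1}$ analysis is parallel, with $J_0$ replaced by the noncompact $J_1$, swapping the roles of sine and sinh (respectively cosine and cosh) in $\Psi$ to produce (\ref{coeffbc1}); the additional branch (\ref{conscoeff}) appears as a degenerate limit where the denominator of the generic formula vanishes and must be treated separately.

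The main obstacle will be the bookkeeping in the reduction step: checking that the four matrix equations (\ref{main1})--(\ref{main4}) with the specific $A, B, C$ above really do consolidate into a single Riccati-type $\mathbb{R}_\lambda$-equation for $w = c + \theta b$ with the correct inhomogeneity, and in particular that cross terms such as $CA$, $BB^t$ and the various $\operatorname{curl}$ and $\operatorname{grad}$ contributions combine cleanly across the $\theta$-grading. A secondary technicality is the uniformity across the three ranges of $\lambda$: one should verify via Taylor expansion that the $\lambda\to 0$ limit of the $\lambda\neq 0$ formulas in (\ref{coeffbc0}) and (\ref{coeffbc1}) reproduces the listed $\lambda=0$ expressions, and argue that the generic and constant branches together exhaust all solutions up to the stated constants $(\psi_C,\gamma_C)$.
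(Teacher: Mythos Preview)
Your plan matches the paper's proof: substitute the equivariant ansatz of Lemma~\ref{Lemma2} into (\ref{main1})--(\ref{main4}), observe that $f$ decouples, combine the remaining scalar system for $(b,c)$ into a single $\mathbb{R}_\lambda$-Riccati equation with inhomogeneity $\mu+\theta\nu=(\alpha+\theta\beta)^{-2}$, and solve.

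One correction: your pairing is reversed. The reduced system reads $\partial_\psi b=\partial_\gamma c$ and $\partial_\gamma b=\lambda\,\partial_\psi c$, which are the $\mathbb{R}_\lambda$ Cauchy--Riemann equations for $w=b+\theta c$ as a function of $z=\psi+\theta\gamma$ (the paper's convention (\ref{z})), not for your $w=c+\theta b$ in $z=\gamma+\theta\psi$; with your convention the second CR equation fails except when $\lambda=1$, so the clean Riccati form would not emerge. With the correct pairing one gets $dw/dz=\pm w^2+(\mu+\theta\nu)$ (sign $+$ for $\mathfrak{h}_{\lambda,0}$, sign $-$ for $\mathfrak{h}_{\lambda,1}$), whose general solutions are $w=(\alpha+\theta\beta)^{-1}\tan\bigl((z-C)/(\alpha+\theta\beta)\bigr)$ and $w=(\alpha+\theta\beta)^{-1}\tanh\bigl((z-C)/(\alpha+\theta\beta)\bigr)$ respectively; the constant branch (\ref{conscoeff}) is the \emph{equilibrium} $w=\pm(\alpha+\theta\beta)^{-1}$ of the second ODE rather than a degenerate limit of the generic formula. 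The $\lambda$-dependent expressions (\ref{coeffbc0})--(\ref{coeffbc1}) then follow by expanding $\tan$ and $\tanh$ of an $\mathbb{R}_\lambda$-argument into real and $\theta$-parts for each sign of $\lambda$, so no separate $\lambda\to 0$ continuity argument is required.
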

\begin{proof}
By plugging (\ref{hequivdyn0}) in (\ref{main1})-(\ref{main4}) we conclude that classical dynamical ($\mathfrak{g}_\lambda,\mathfrak{h}_{\lambda,0},K_{\alpha \beta}$) $r$-matrices are elements in $C^1(\mathfrak{h}_{\lambda,0}^*) \otimes (\mathfrak{g}_\lambda \otimes \mathfrak{g}_\lambda)$ of the form  
\begin{equation*}
K_{\alpha \beta} + f(\psi_0,\gamma_0)(P^0 \wedge J^0) + b(\psi_0,\gamma_0) (P^1 \wedge J^2 - P^2 \wedge J^1) +c(\psi_0,\gamma_0)(\lambda J^1 \wedge J^2 + P^1 \wedge P^2)
\end{equation*}
where the coefficients satisfy the equations 
\begin{subequations}\label{CYBEFIN}
  \begin{align}
    \lambda c^2 + b^2 - \frac{\partial c }{\partial \gamma_0} & =-\mu , \label{eq:1} \\%
    2bc - \frac{\partial c}{\partial \psi_0} & =-\nu, \label{eq:2} \\
    \lambda c^2 + b^2 - \frac{\partial b}{\partial \psi_0} & =-\mu, \label{eq:3} \\
    \lambda \frac{\partial c}{\partial \psi_0} & = \frac{\partial b}{\partial \gamma_0}.
  \end{align}
\end{subequations}
These equations can be understood as the $\mathfrak{h}_{\lambda,0}$-equivariant reduction of the set (\ref{main1})-(\ref{main4}) for the $\mathfrak{h}_\lambda= \mathfrak{h}_{\lambda,0}$ case, and can be rewritten in the more compact way
\begin{equation}\label{DYBEred0}
\begin{split}
    \frac{\partial b}{\partial \psi_0} & = \lambda c^2 + b^2 + \mu=  \frac{\partial c}{\partial \gamma_0}, \\
    \frac{\partial b}{\partial \gamma_0} &= \lambda (2bc + \nu) = \lambda \frac{\partial c}{\partial \psi_0}.
    \end{split}
\end{equation}
Analogously, by inserting (\ref{hequivdyn1}) in (\ref{main1})-(\ref{main4}) we get the classical dynamical $r$-matrices associated to $(\mathfrak{g}_\lambda,\mathfrak{h}_{\lambda,1},K_{\alpha \beta})$ have the form 
\begin{equation*}
K_{\alpha \beta} + f(\psi_1,\gamma_1)(P^1 \wedge J^1)+b(\psi_1,\gamma_1)(P^2 \wedge J^0 - J^2 \wedge P^0) + c(\psi_1,\gamma_1)(\lambda J^2 \wedge J^0 + P^2 \wedge P^0),
\end{equation*}
such that
\begin{subequations}
  \begin{align}
    -\lambda c^2 - b^2 - \frac{\partial c }{\partial \gamma_1} & =-\mu , \label{eqL:1} \\%
    -2bc - \frac{\partial c}{\partial \psi_1} & =-\nu, \label{eqL:2} \\
    -\lambda c^2 - b^2 - \frac{\partial b}{\partial \psi_1} & =-\mu, \label{eqL:3} \\
    \lambda \frac{\partial c}{\partial \psi_1} & = \frac{\partial b}{\partial \gamma_1}
  \end{align}
\end{subequations}
or more compactly,
\begin{equation}\label{DYBEred1}
\begin{split}
    \frac{\partial b}{\partial \psi_1} & = -(\lambda c^2 + b^2) + \mu  =  \frac{\partial c}{\partial \gamma_1}, \\
    \frac{\partial b}{\partial \gamma_1} & = \lambda (-2bc +   \nu) = \lambda \frac{\partial c}{\partial \psi_1}.
    \end{split}
\end{equation}
If we consider the generalized complexified variable $z_0=\psi_0 + \theta \gamma_0$ (over $\mathbb{R}_\lambda$) in the sense of (\ref{z}) and construct the function over $\mathbb{R}_\lambda$ given by 
\begin{equation*}
    w(z_0)= b(z_0)+ \theta c(z_0),
\end{equation*}
then from (\ref{partcompl}), we deduce the equations (\ref{DYBEred0}) can be re-expressed over the ring $\mathbb{R}_\lambda$ as 
\begin{subequations}
\begin{align}
\partial_{\overline{z}_0} w & =0, \label{CDYBEcompl01}  \\
\partial_{z_0}w  & = w^2 + (\mu + \theta \nu). \label{CDYBEcompl02}
\end{align}
\end{subequations}
As indicated above in (\ref{holo1}), equation (\ref{CDYBEcompl01}) can be understood as a generalized holomorphy condition of the function $w$, i.e. it just depends on the generalized variable $z_0$ but not on $\overline{z}_0$ (precisely like holomorphic functions in complex analysis); and due to this property, as explained in (\ref{holo2}), the equation (\ref{CDYBEcompl02}) is simply
\begin{equation}
 \frac{ d w}{d z_0}  = w^2 + (\mu + \theta \nu)   
\end{equation}
Using the same argument, equations (\ref{DYBEred1}) can be re-written over $\mathbb{R}_\lambda$ as 
\begin{subequations}
\begin{align}
\partial_{\overline{z}_1} w & =0, \label{CDYBEcompl01h1} \\ \qquad \partial_{z_1}w & = -w^2 + (\mu + \theta \nu) \label{CDYBEcompl02h1}
\end{align}
\end{subequations}
but now $w(z_1)=b(z_1)+\theta c(z_1)$ with $z_1=\psi_1 + \theta \gamma_1$ and, again, the holomorphy condition implies  
\begin{equation}
\frac{dw}{dz_1} = -  w^2 + (\mu + \theta \nu).
\end{equation}
Taking into account 
\begin{equation*}
    \mu + \theta \nu = \frac{1}{(\alpha + \theta \beta)^2},
\end{equation*}
then solving the CDYBE for the triples $(\mathfrak{g}_\lambda,\mathfrak{h}_\lambda,K_{\alpha \beta})$ with $\mathfrak{h}_\lambda=\mathfrak{h}_{\lambda,0}$ and $\mathfrak{h}_\lambda=\mathfrak{h}_{\lambda,1}$,  amounts to solve the following two non-linear generalized complex ODEs 
\begin{equation}
    (\alpha + \theta \beta) \frac{d}{dz} ((\alpha + \theta \beta)w) = \pm((\alpha + \theta \beta)w)^2+1.
\end{equation}
For $\mathfrak{h}_{\lambda,0}$ the solution is given by
\begin{equation}\label{w0complsol}
    w(z_0)= \frac{1}{\alpha + \theta \beta} \tan \left( \frac{z_0-C}{\alpha + \theta \beta} \right),
\end{equation}
where
\begin{equation*}
    C= \psi_C + \theta \gamma_C
\end{equation*}
is a constant, while for $\mathfrak{h}_{\lambda,1}$ the solution could be dependent of $z_1$
\begin{equation}\label{w1complsol}
    w(z_1)= \frac{1}{\alpha + \theta \beta} \tanh \left( \frac{z_1-C}{\alpha + \theta \beta} \right)
\end{equation}
with $C \in \mathbb{R}_\lambda$ again a constant, or independent of the complexified variable $z_1$
\begin{equation}\label{constantsol}
    w(z_1)= \pm \frac{1}{\alpha + \theta \beta}= \pm \frac{\alpha - \theta \beta}{\alpha^2 - \lambda \beta^2}.
\end{equation}
Finally, separating (\ref{w0complsol}) into its real-part and $\theta$-part we get $b$ (\ref{bcoeflong}) and $c$ (\ref{ccoeflong}), respectively, with $B$ and $C$ as given in (\ref{coeffbc0}). Proceeding in the same way for (\ref{w1complsol}) or (\ref{constantsol}), we get a similar result but now with B and C as given by (\ref{coeffbc1}) or by (\ref{conscoeff}), respectively. See the appendix D to see in detail how this splitting into real and $\theta$ parts works for every $\lambda \in \mathbb{R}$. 
\end{proof}
In particular, for $\beta=0$, the most general classical dynamical $(\mathfrak{g}_\lambda, \mathfrak{h}_{\lambda,0},K_{\alpha 0})$ $r$-matrix is given by  
\begin{equation}\label{Cathshar0}
r_d^{\beta=0}(\psi_0,\gamma_0)=K_{\alpha 0} + f(\psi_0,\gamma_0)(P^0 \wedge J^0) + b(\psi_0,\gamma_0) (P^1 \wedge J^2 - P^2 \wedge J^1)  + c(\psi_0,\gamma_0)(\lambda J^1 \wedge J^2 + P^1 \wedge P^2)  
\end{equation}
where $f \in C^1(\mathfrak{h}_{\lambda,0}^*)$ is an arbitrary function,
\begin{equation}\label{coeffb0beta0} 
b(\psi_0,\gamma_0)= \frac{1}{\alpha}
\begin{cases}
\frac{\sin \left( \frac{2}{\alpha}(\psi_0-\psi_C )\right)}{\cos \left(\frac{2}{\alpha}(\psi_0-\psi_C)\right) + \cosh \left(\frac{2\sqrt{|\lambda|}}{\alpha} \left(\gamma_0-\gamma_C \right) \right)}, \qquad & \lambda<0 \\
 \tan \left(\frac{\psi_0-\psi_C}{\alpha} \right), \qquad & \lambda=0 \\
\frac{\sin \left( \frac{2}{\alpha}(\psi_0-\psi_C ) \right)}{\cos \left( \frac{2}{\alpha}(\psi_0-\psi_C ) \right)+\cos \left(\frac{2\sqrt{\lambda}}{\alpha}\left(\gamma_0-\gamma_C \right) \right)}, \qquad & \lambda>0
\end{cases}
\end{equation}
and 
\begin{equation}\label{coeffc0beta0}
c(\psi_0,\gamma_0)= \frac{1}{\alpha}
\begin{cases}
\frac{1}{\sqrt{|\lambda|}}\frac{\sinh \left(\frac{2\sqrt{|\lambda|}}{\alpha} \left(\gamma_0-\gamma_C \right) \right)}{\cos \left(\frac{2}{\alpha}(\psi_0-\psi_C) \right) + \cosh \left(\frac{2\sqrt{|\lambda|}}{\alpha} \left(\gamma_0-\gamma_C \right) \right)}, \qquad & \lambda<0 \\
\frac{\gamma_0-\gamma_C}{2 \cos^2 \left(\frac{\psi_0-\psi_C}{\alpha}\right)}, \qquad & \lambda=0 \\
\frac{1}{\sqrt{\lambda}} \frac{\sin \left(\frac{2\sqrt{|\lambda|}}{\alpha} \left(\gamma_0-\gamma_C \right) \right)}{\cos \left( \frac{2}{\alpha}(\psi_0-\psi_C ) \right)+\cos \left(\frac{2\sqrt{\lambda}}{\alpha}\left(\gamma_0-\gamma_C \right) \right)}, \qquad & \lambda>0
\end{cases}.
\end{equation}
Similarly, for $\beta=0$, the most general classical dynamical $(\mathfrak{g}_\lambda, \mathfrak{h}_{\lambda,1},K_{\alpha 0})$ $r$-matrix is given by  
\begin{equation}\label{Cathshar1}
r_d^{\beta=0}(\psi_1,\gamma_1)= K_{\alpha \beta} + f(\psi_1,\gamma_1)(P^1 \wedge J^1)+b(\psi_1,\gamma_1)(P^2 \wedge J^0 - J^2 \wedge P^0) + c(\psi_1,\gamma_1)(\lambda J^2 \wedge J^0 + P^2 \wedge P^0),
\end{equation}
where $f \in C^1(\mathfrak{h}_{\lambda,1}^*)$ is an arbitrary function,
\begin{equation}\label{coeffb1beta0} 
b(\psi_1,\gamma_1)= \frac{1}{\alpha}
\begin{cases}
\frac{\sinh \left( \frac{2}{\alpha}(\psi_1-\psi_C )\right)}{\cosh \left(\frac{2}{\alpha}(\psi_1-\psi_C)\right) + \cos \left(\frac{2\sqrt{|\lambda|}}{\alpha} \left(\gamma_1-\gamma_C \right) \right)}, \qquad & \lambda<0 \\
 \tanh \left(\frac{\psi_1-\psi_C}{\alpha} \right), \qquad & \lambda=0 \\
\frac{\sinh \left( \frac{2}{\alpha}(\psi_1-\psi_C ) \right)}{\cosh \left( \frac{2}{\alpha}(\psi_1-\psi_C ) \right)+\cosh \left(\frac{2\sqrt{\lambda}}{\alpha}\left(\gamma_1-\gamma_C \right) \right)}, \qquad & \lambda>0
\end{cases}
\end{equation}
and 
\begin{equation}\label{coeffc1beta0}
c(\psi_0,\alpha_0)= \frac{1}{\alpha}
\begin{cases}
\frac{1}{\sqrt{|\lambda|}}\frac{\sin \left(\frac{2\sqrt{|\lambda|}}{\alpha} \left(\gamma_1-\gamma_C \right) \right)}{\cosh \left(\frac{2}{\alpha}(\psi_1-\psi_C)\right) + \cos \left(\frac{2\sqrt{|\lambda|}}{\alpha} \left(\gamma_1-\gamma_C \right) \right)}, \qquad & \lambda<0 \\
\frac{\gamma_1-\gamma_C}{2 \cosh^2 \left(\frac{\psi_1-\psi_C}{\alpha}\right)}, \qquad & \lambda=0 \\
\frac{1}{\sqrt{\lambda}} \frac{\sinh \left(\frac{2\sqrt{|\lambda|}}{\alpha} \left(\gamma_1-\gamma_C \right) \right)}{\cosh \left( \frac{2}{\alpha}(\psi_1-\psi_C ) \right)+\cosh \left(\frac{2\sqrt{\lambda}}{\alpha}\left(\gamma_1-\gamma_C \right) \right)}, \qquad & \lambda>0
\end{cases}.
\end{equation}
These correspond to extensions, for both signatures and any value of the cosmological constant $\Lambda_C$, of the classical dynamical $r$-matrices found in \cite{Meusburger:2012wc} (Lemma 4.11) for $\Lambda_C=0$ in the Lorentzian case (with $\psi_C=\gamma_C=0$, $\alpha=2$ and $f \equiv 0$). \\\\
Theorem \ref{MainTheo2} provides all the elements in $\text{Dyn}(\mathfrak{g}_\lambda,\mathfrak{h}_\lambda,K_{\alpha \beta})$ for $\mathfrak{h}_\lambda=\mathfrak{h}_{\lambda,0}$ and $\mathfrak{h}_\lambda=\mathfrak{h}_{\lambda,1}$, parametrized in terms of $(\psi_C,\gamma_C) \in \mathbb{R}^2$ and $f \in C^1(\mathfrak{h}_\lambda^*)$ in both cases. In order to achieve a full description of the moduli space of classical dynamical $r$-matrices $\mathcal{M}^{\mathfrak{C}}(\mathfrak{g},K_{\alpha, \beta})$ we first need to consider the quotients  
\begin{equation*}
\text{Dyn}(\mathfrak{g}_\lambda,\mathfrak{h}_\lambda,K_{\alpha \beta})/\mathcal{G}(\mathfrak{g}_\lambda,\mathfrak{h}_\lambda)  
\end{equation*}
for both $\mathfrak{h}_{\lambda,0}$ and $\mathfrak{h}_{\lambda,1}$ (if it is the case).
\begin{lemma}\label{Lemma3}
The space of orbits of classical dynamical ($\mathfrak{g}_\lambda$,$\mathfrak{h}_\lambda$,$K_{\alpha \beta}$) $r$-matrices with respect to the action of dynamical ($\mathfrak{g}_\lambda$,$\mathfrak{h}_\lambda$) gauge transformations is parametrized by pairs $(\psi_C,\gamma_C) \in \mathbb{R}^2$, for both $\mathfrak{h}_{\lambda}=\mathfrak{h}_{\lambda,0}$ and $\mathfrak{h}_{\lambda}=\mathfrak{h}_{\lambda,1}$. 
\end{lemma}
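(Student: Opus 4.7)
Theorem \ref{MainTheo2} parametrises $\text{Dyn}(\mathfrak{g}_\lambda,\mathfrak{h}_\lambda,K_{\alpha\beta})$ for $\mathfrak{h}_\lambda\in\{\mathfrak{h}_{\lambda,0},\mathfrak{h}_{\lambda,1}\}$ by a pair $(\psi_C,\gamma_C)\in\mathbb{R}^2$ together with a free function $f\in C^1(\mathfrak{h}_\lambda^*)$. The plan is to show that the dynamical gauge action (\ref{dynact}) absorbs all freedom in $f$ while fixing $(\psi_C,\gamma_C)$, so the quotient is precisely $\mathbb{R}^2$.

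First I would observe that for $p$ taking values in the Cartan subgroup $H_\lambda$, the factor $\text{Ad}(p)\otimes\text{Ad}(p)$ in (\ref{dynact}) acts trivially on every term of the ansatz. Indeed, by Lemma \ref{Lemma2} the four tensors appearing in the general equivariant $r$ (the Casimir $K_{\alpha\beta}$, the transverse bivectors, and $P^0\wedge J^0$) are all $\mathfrak{h}_\lambda$-invariant, and since the Cartan subgroups $H_{\lambda,0},H_{\lambda,1}$ are each connected abelian Lie groups generated by the exponential of $\mathfrak{h}_\lambda$, $\mathfrak{h}_\lambda$-invariance automatically upgrades to $H_\lambda$-invariance. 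Consequently the gauge action reduces to
\begin{equation*}
p\rhd r \;=\; r \;+\; \overline{\eta}^p - \overline{\eta}^p_{21}.
\end{equation*}

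Next I would compute $\overline{\eta}^p-\overline{\eta}^p_{21}$ explicitly for $\mathfrak{h}_{\lambda,0}$. Writing $p(\psi_0,\gamma_0)=\exp(u(\psi_0,\gamma_0)J_0+v(\psi_0,\gamma_0)P_0)$ and using abelianness to compute $p^{-1}\partial_{\gamma_0}p=(\partial_{\gamma_0}u)J_0+(\partial_{\gamma_0}v)P_0$ and similarly for $\partial_{\psi_0}$, expansion of $\overline{\eta}^p=J_0\otimes p^{-1}\partial_{\gamma_0}p + P_0\otimes p^{-1}\partial_{\psi_0}p$ and antisymmetrisation yields
\begin{equation*}
\overline{\eta}^p - \overline{\eta}^p_{21} \;=\; (\partial_{\psi_0}u - \partial_{\gamma_0}v)\,(P^0\wedge J^0).
\end{equation*}
Hence the gauge action affects only the $f$-slot: $(f,b,c,\psi_C,\gamma_C)\mapsto(f+\partial_{\psi_0}u-\partial_{\gamma_0}v,\,b,\,c,\,\psi_C,\,\gamma_C)$.

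Finally, for any prescribed $f\in C^1(\mathfrak{h}_{\lambda,0}^*)$ one can kill the $f$-term by choosing $v\equiv 0$ and $u(\psi_0,\gamma_0)=-\int_0^{\psi_0}f(t,\gamma_0)\,dt$, which gives $\partial_{\psi_0}u-\partial_{\gamma_0}v=-f$. Thus every gauge orbit contains a representative with $f\equiv 0$. Injectivity of the correspondence orbit $\leftrightarrow(\psi_C,\gamma_C)$ follows by inspection: the formulas (\ref{coeffbc0}) for $b$ and $c$ depend non-trivially on $(\psi_C,\gamma_C)$ and are untouched by the gauge, so distinct pairs $(\psi_C,\gamma_C)$ yield distinct $(b,c)$ and hence distinct orbits. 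The argument for $\mathfrak{h}_{\lambda,1}$ proceeds by exactly the same three steps, with $(J_1,P_1,\psi_1,\gamma_1)$ replacing $(J_0,P_0,\psi_0,\gamma_0)$ and (\ref{coeffbc1}) replacing (\ref{coeffbc0}).

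The main step to verify carefully is the triviality of $\text{Ad}(H_\lambda)$ on the two transverse bivectors $\lambda J^1\wedge J^2+P^1\wedge P^2$ and $P^1\wedge J^2-P^2\wedge J^1$ (and their $\mathfrak{h}_{\lambda,1}$-counterparts). The cleanest route is the general representation-theoretic principle that $\mathfrak{h}$-invariance is equivalent to $H$-invariance on a connected Lie group, but an explicit check using $[J_0,\cdot]$ and $[P_0,\cdot]$ on the four-dimensional transverse subspace would make the uniform treatment of $\lambda<0$, $\lambda=0$, $\lambda>0$ transparent and confirm that all three signatures behave identically.
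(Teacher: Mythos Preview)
Your proposal is correct and follows essentially the same approach as the paper's proof: identify that the dynamical gauge action shifts only the $f$-coefficient (with $\text{Ad}(p)\otimes\text{Ad}(p)$ acting trivially on the $\mathfrak{h}_\lambda$-invariant ansatz), show any $f$ can be gauged to zero, and conclude that the orbit space is labelled by $(\psi_C,\gamma_C)$. The only cosmetic difference is that you give an explicit antiderivative to kill $f$, whereas the paper phrases the same step as an application of the Poincar\'e Lemma; your explicit integration is marginally more elementary but the content is identical.
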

\begin{proof} 
Theorem \ref{MainTheo2} states that any classical dynamical ($\mathfrak{g}_\lambda$,$\mathfrak{h}_{\lambda,i}$,$K_{\alpha \beta}$) $r$-matrix are of the form  
\begin{equation*}
r_d(\psi_0,\gamma_0)=K_{\alpha \beta} +  f(\psi_0,\gamma_0)(P^0 \wedge J^0)+b(\psi_0,\gamma_0) (P^1 \wedge J^2 - P^2 \wedge J^1) + c(\psi_0,\gamma_0)(\lambda J^1 \wedge J^2 + P^1 \wedge P^2) 
\end{equation*}
and 
\begin{equation*}
r_d(\psi_1,\gamma_1)= K_{\alpha \beta} + f(\psi_1,\gamma_1)(P^1 \wedge J^1)+b(\psi_1,\gamma_1)(P^2 \wedge J^0 - J^2 \wedge P^0) + c(\psi_1,\gamma_1)(\lambda J^2 \wedge J^0 + P^2 \wedge P^0),
\end{equation*}
for $i=0$ and $i=1$, respectively, with $f: \in \mathcal{F}(\mathfrak{h}_{\lambda,i}^*)$ an arbitrary function and $b,c: \mathfrak{h}_{\lambda,i}^* \to \mathbb{R}$ given by (\ref{bcoeflong}) and (\ref{ccoeflong}).\\\\ 
All the elements of $\mathcal{G}(\mathfrak{g}_\lambda,\mathfrak{h}_{\lambda,i})$ are of the form    
\begin{equation*}
g(\psi_i,\gamma_i)=\exp(m(\psi_i,\gamma_i)J_i+s(\psi_i,\gamma_i)P_i)
\end{equation*}
with $m,s: \mathfrak{h}_{\lambda,i}^* \to \mathbb{R}$ smooth maps. \\\\ 
Hence the action of any $g \in \mathcal{G}(\mathfrak{g}_\lambda, \mathfrak{h}_{\lambda,i})$ over any element $r_d \in \text{Dyn}(\mathfrak{g}_\lambda,\mathfrak{h}_{\lambda,i},K_{\alpha \beta})$ is explicitly given by 
\begin{equation*}
\begin{split}
  g \rhd r_d  & = (\text{Ad}(g) \otimes \text{Ad}(g))\left[r_{d}- \left( \frac{\partial s}{\partial \gamma_i} - \frac{\partial m}{\partial \psi_i}   \right) P^i \wedge J^i \right] \\
  & = r_{d}- \left( \frac{\partial s}{\partial \gamma_i} - \frac{\partial m}{\partial \psi_i}   \right) P^i \wedge J^i. 
  \end{split}
\end{equation*}
Since, by the  Poincar\'{e} Lemma, for every function $f \in \mathcal{F}(\mathfrak{h}_{\lambda,i}^*)$ there exist smooth functions $s,m:\mathfrak{h}_{\lambda,i}^* \to \mathbb{R}$ such that 
\begin{equation*}
 \frac{\partial s}{\partial \gamma_i}-\frac{\partial m}{\partial \psi_i} = f, 
\end{equation*}
we conclude that any classical dynamical ($\mathfrak{g}_\lambda$,$\mathfrak{h}_{\lambda,i}$,$K_{\alpha \beta}$) $r$-matrix is gauge equivalent to
\begin{equation}\label{gaugeequiv0}
r_d(\psi_0,\gamma_0)=K_{\alpha \beta} +b(\psi_0,\gamma_0) (P^1 \wedge J^2 - P^2 \wedge J^1) + c(\psi_0,\gamma_0)(\lambda J^1 \wedge J^2 + P^1 \wedge P^2) 
\end{equation}
and 
\begin{equation}\label{gaugeequiv1}
r_d(\psi_1,\gamma_1)= K_{\alpha \beta} +b(\psi_1,\gamma_1)(P^2 \wedge J^0 - J^2 \wedge P^0) + c(\psi_1,\gamma_1)(\lambda J^2 \wedge J^0 + P^2 \wedge P^0),
\end{equation}
for $i=0$ and $i=1$, respectively, with $b$ and $c$ as given in Theorem 2. \\\\
Hence, since the functions $b$ and $c$ above are uniquely determined by the constants $\psi_C$ and $\gamma_C$, it follows each of the quotients  $\mathcal{M}^\mathfrak{C}(\mathfrak{g}_\lambda,\mathfrak{h}_{\lambda,i},K_{\alpha \beta})$ is in bijection with $\mathbb{R}^2$. 
\end{proof}
Theorem \ref{MainTheo2} and Lemma \ref{Lemma3} provides a parametrization and complete description of all the elements in the moduli space of Cartan classical dynamical $r$-matrices associated to $(\mathfrak{g}_\lambda,K_{\alpha \beta})$
\begin{equation*}
  \mathcal{M}^{\mathfrak{C}}(\mathfrak{g}_\lambda,K_{\alpha \beta})    =  \mathcal{M}^\mathfrak{C}(\mathfrak{g}_\lambda,\mathfrak{h}_{\lambda,0},K_{\alpha \beta}) \sqcup \mathcal{M}^\mathfrak{C}(\mathfrak{g}_\lambda,\mathfrak{h}_{\lambda,1},K_{\alpha \beta})
\end{equation*}
for every $\lambda, \alpha, \beta \in \mathbb{R}$. This amounts, following the explanation in Subsection 1.4, to a full description of all the Poisson structures over the gauge-fixed space of $G_\lambda$-flat connections over Riemann surfaces. \\\\
Finally, the set of \textit{all} the Cartan classical dynamical $r$-matrices associated to the pair $(\mathfrak{g}_\lambda,K_{\alpha \beta})$ is then  given by   
\begin{equation*}
    \text{Dyn}^{\mathfrak{C}}(\mathfrak{g}_\lambda,K_{\alpha \beta})= \left[ \bigcup_{g \in G}\text{Dyn}^g(\mathfrak{g}_\lambda,\mathfrak{h}_{\lambda,0},K_{\alpha \beta}) \right] \sqcup \left[ \bigcup_{g \in G}\text{Dyn}^g(\mathfrak{g}_\lambda,\mathfrak{h}_{\lambda,1},K_{\alpha \beta}) \right]. 
\end{equation*}

\subsection{Dynamical generalizations of classical \texorpdfstring{$r$}{}-matrices for \texorpdfstring{$\mathfrak{g}_\lambda$}{}}
A systematic algebraic analysis of the CYBE for $\mathfrak{g}_\lambda$ and the derivation of some particular solutions were considered in \cite{Osei:2017ybk}. In this spirit now we examine some particular solutions of the CDYBE for $\mathfrak{g}_\lambda$, using for this purpose the equivalent description derived in Theorem \ref{MainTheo1}. We recall that even though we are mainly interested in classical dynamical $r$-matrices (where the equivariance condition is required), since these appear in the Poisson structure of gauge-fixed character varieties, the space of solutions of the CDYBE has been studied too (as mentioned  in passing  in our Introduction) and also, it is  interesting to see what dynamical generalizations of some well-known solutions of the CYBE for $\mathfrak{g}_\lambda$ look like. \\\\ 
Following the presentation in \cite{Osei:2017ybk}, we consider solutions of the CDBYE (\ref{CDYB}) associated to $(\mathfrak{g}_\lambda,\mathfrak{h}_{\lambda,0},K_{\alpha \beta})$ 
\begin{equation*}
r(\psi_0,\gamma_0)  = K_{\alpha \beta} + A_{ab}(\psi_0,\gamma_0)J^a \otimes J^b + B_{ab}(\psi_0,\gamma_0)(P^a \otimes J^b - J^b \otimes P^a) + C_{ab}(\psi_0,\gamma_0)P^a \otimes P^b
\end{equation*}
for the cases where (I) $B$ is diagonal and $C \equiv 0$, (II) $A=\lambda C$ and $B$ is skew-symmetric and (III) $A=C \equiv 0$ and B is skew-symmetric, obtaining in this way dynamical generalizations of the so-called (see Table 1 in \cite{Osei:2017ybk} for the terminology) \textit{classical doubles}, \textit{generalised complexifications} and \textit{kappa-Poincar\'e} $r$-matrices, respectively.\\\\
Here we focus on the $\mathfrak{h}_{\lambda}=\mathfrak{h}_{\lambda,0}$ case, since for $\mathfrak{h}_{\lambda,1}$ analogous solutions of the CDYBE are obtained (as we observed in the previous subsection) and then via the $G_\lambda$-action (\ref{mapdyn}) all the solutions presented in this section can be derived for any Cartan subalgebra of $\mathfrak{g}_\lambda$.
\textbf{(I) Solutions with $B$ diagonal and trivial $C$.} If we consider solutions with $C \equiv 0$, the equation (\ref{main2}) forces $\nu=0$, while the other 3 equations (\ref{main1}), (\ref{main3}) and (\ref{main4}) reduce to \begin{equation}\label{Czero}
\begin{split}
\frac{1}{2}\text{tr}(A^2)-\frac{\lambda}{2}\left[ \text{tr}(B)^2 - \text{tr}(B^2) \right]+\text{div}_{\bm{\alpha}} (v^A) & = \mu \lambda \\
[B-\text{tr}(B)\text{id}](B+B^t) + \frac{1}{2}[\text{tr}(B)^2-\text{tr}(B^2)]\text{id} + \text{curl}_{\bm{\psi}}(B) & = - \mu \text{id} \\
-A(B+B^t)- (B^t-\text{tr}(B)\text{id})A - \text{tr}(AB)\text{id} + \text{curl}_{\bm{\alpha}} (B^t) - \text{grad}_{\bm{\psi}}(v^A)& = 0 
\end{split}
\end{equation}
Denoting $v^{A}(\psi_0,\gamma_0)$ by $(a_0(\psi_0,\gamma_0),a_1(\psi_0,\gamma_0),a_2(\psi_0,\gamma_0))$ and taking 
\begin{equation*}
  B(\alpha_0,\psi_0)=\text{diag}(b_0(\psi_0,\gamma_0),b_1(\psi_0,\gamma_0),b_2(\psi_0,\gamma_0)), 
\end{equation*}
(\ref{Czero}) leads to the following system of PDEs    
\begin{subequations}
    \begin{align}
- \sum_{i=0}^2 a_i^2 - \lambda (b_0b_1+b_0b_2+b_1b_2) +  \frac{\partial a_0}{\partial \gamma_0} & = \mu \lambda, \label{doub1} \\
\begin{pmatrix} b_1b_2-b_0(b_1+b_2) & 0 & 0 \\ 0 & b_0b_2-b_1(b_0+b_2) & \frac{\partial b_1}{\partial \psi_0} \\ 0 & -\frac{\partial b_2}{\partial \psi_0} & b_0b_1-b_2(b_0+b_1) \end{pmatrix} & = -\mu \text{id}, \label{doub2} \\  
\begin{pmatrix} -\frac{\partial a_0}{\partial \psi_0} & a_2(b_1-b_2)  -\frac{\partial a_1}{\partial \psi_0} & -a_1(b_1-b_2) - \frac{\partial a_2}{\partial \psi_0}\\ -a_2(b_0-b_2)  & 0 & a_0(b_0-b_2) + \frac{\partial b_1}{\partial \gamma_0} \\ a_1(b_0-b_1) & -a_0(b_0-b_1)-\frac{\partial b_2}{\partial \gamma_0} & 0
\end{pmatrix} & = 0. \label{doub3}
\end{align}
\end{subequations}
From equations (\ref{doub2}) and (\ref{doub3}) we conclude
\begin{equation}
B= \pm \sqrt{\mu} \text{id}
\end{equation}
and 
\begin{equation}
A(\psi_0,\gamma_0)=A(\gamma_0),
\end{equation}
while equation (\ref{doub1}) reduces to a constraint over the vector $v^A$ associated to the anti-symmetric matrix $A$, 
\begin{equation}
\langle v^A(\gamma_0) , v^A(\gamma_0) \rangle - \frac{d a_0}{d \gamma_0}= -4 \mu \lambda.
\end{equation}
Hence for diagonal $B$ and trivial $C$, we have found there exist solutions to the CDYBE for $(\mathfrak{g}_\lambda,\mathfrak{h}_{\lambda,0},K_{\alpha \beta})$ of the form  
\begin{equation}\label{dyndoub1}
\begin{split}
r^{\beta=0}_D(\gamma_0) & =\frac{1}{\alpha}(J_a \otimes P^a + P_a \otimes J^a) + \epsilon^{abc} v_a(\gamma_0) J_b \otimes J_c \pm \sqrt{\mu} (P^a \otimes J_a - J_a \otimes P^a) \\
& = \left(\frac{1}{\alpha} \mp \sqrt{\mu} \right)J_a \otimes P^a + \left(\frac{1}{\alpha} \pm \sqrt{\mu} \right)P_a \otimes J^a + \epsilon^{abc}v_a(\gamma_0) J_b \otimes J_c
\end{split}
\end{equation}
if $\beta=0$, or 
\begin{equation}\label{dyndoub2}
\begin{split}
r^{\alpha=0}_D(\gamma_0) & =\frac{1}{\lambda \beta} (\lambda J_a \otimes J^a + P_a \otimes P^a) + \epsilon^{abc} v_a(\gamma_0) J_b \otimes J_c \pm \sqrt{\mu} (P^a \otimes J_a - J_a \otimes P^a) 
\end{split}
\end{equation}
if $\alpha=0$, where the choice of sing depends if we are dealing with the Euclidean or Lorentzian case, respectively. \\\\
The family of classical dynamical $r$-matrices (\ref{dyndoub1}) is a dynamical generalization of the well-known classical $r$-matrices associated to the standard double bialgebra structures over $\mathfrak{g}_\lambda$. Similarly, the family (\ref{dyndoub2}) is the dynamical generalization of the classical $r$-matrices related to \textit{exotic} double structures over $\mathfrak{g}_\lambda$. 
    

 
\textbf{(II) Solutions with $A=\lambda C$ and $B$ skew-symmetric.}  If now we demand for solutions of the CDYBE such that $A= \lambda C$, the set (\ref{main1})-(\ref{main4}) reduces to the following system of coupled PDEs
\begin{equation}\label{alambdac}
\begin{split}
\frac{\lambda}{2}\text{tr}(C^2)-\frac{1}{2}[\text{tr}(B)^2-\text{tr}(B^2)]+\text{div}_{\bm{\gamma}}(v^C) & =\mu, \\
\text{tr}(BC)+\text{div}_{\bm{\psi}}(v^	C) & =\nu, \\
[B-\text{tr}(B)\text{id}](B+B^t)+\text{div}_{\bm{\gamma}}(v^C)\text{id} + \text{curl}_{\bm{\psi}}(B) - \text{grad}_{\bm{\gamma}}(v^C) & = 0, 	\\
-\lambda C (B+B^t) + \lambda[ \text{div}_{\bm{\psi}}(v^C)\text{id}-\text{grad}_{\bm{\psi}}(v^C)] + \text{curl}_{\bm{\gamma}}(B^t) & = 0.
\end{split}
\end{equation}
This system is still complicated enough to solve in a general way, reason why we focused on some specific solutions corresponding to a particular choice of the maps $C$ and $B$: For $B=[v^B(\psi_0,\gamma_0), \ \cdot \ ]$ and $C=[v^C(\psi_0,\gamma_0), \ \cdot \ ]$, with $v^B(\psi_0,\gamma_0)$ and $ v^C(\psi_0,\gamma_0)$ in $C^1(\mathfrak{h}_{\lambda,0}^*) \otimes \mathbb{R}^3$ or in $C^1(\mathfrak{h}_{\lambda,0}^*) \otimes \mathbb{M}^{1,2}$, given by 
\begin{equation*}
v^B(\psi_0,\gamma_0)=(b_0(\psi_0,\gamma_0),b_1(\psi_0,\gamma_0),b_2(\psi_0,\gamma_0) \text{ and } v^C(\psi_0,\gamma_0)=(c_0(\psi_0,\gamma_0),c_1(\psi_0,\gamma_0),c_2(\psi_0,\gamma_0)), 
\end{equation*}
the set of equations (\ref{alambdac}) reduces to 
\begin{equation}\label{CDYBECath}
\begin{split}
-\lambda \langle v^C,v^C \rangle - \langle v^B,v^B \rangle+  \frac{\partial c_0}{\partial \gamma_0} & =\mu, \\
-2 \langle v^B,v^C \rangle +  \frac{\partial c_0}{\partial \psi_0} & = \nu, \\
\frac{\partial b_i}{\partial \psi_0} - \frac{\partial c_i}{\partial \gamma_0} & = 0 \text{ for } i=0,1,2, \\
\lambda \frac{\partial c_i}{\partial \psi_0} - \frac{\partial b_i}{\partial \gamma_0} & = 0 \text{ for } i=0,1,2   
\end{split}
\end{equation}
which precisely, reduces to the system (\ref{CYBEFIN}) associated to classical dynamical $r$-matrices in the case $v^B \parallel \bm{e}_0$ and $v^C \parallel \bm{e}_0$.   \\\\
\textbf{(III) Solutions with $A$ and $C$ trivial and $B$ skew-symmetric.} If in the previous case, we additionally consider $C=0$, then (\ref{CDYBECath}) reduces to
\begin{equation}
\begin{split}
\langle v^B,v^B \rangle & = - \mu \\
\frac{\partial b_0}{\partial \gamma_0}=\frac{\partial b_1}{\partial \gamma_0}=\frac{\partial b_2}{\partial \gamma_0} & =0 \\
\frac{\partial b_0}{\partial \psi_0}=\frac{\partial b_1}{\partial \psi_0}=\frac{\partial b_2}{\partial \psi_0} & =0 \end{split}
\end{equation}
In other words, $v^B$ is a \textit{constant} vector function (just an element in $\mathbb{R}^3$ or $\mathbb{M}^{1,2}$) whose (pseudo)norm is given by $-\mu$. Therefore, we find there are not dynamical generalizations of the classical \textit{Kappa-Poincar\'e} $r$-matrices. Explicitly, since $\nu=0$, there exist just two possible \textit{dynamical} $r$-matrices of this type given by 
\begin{equation}\label{kappa1}
r_{\kappa P}^{\beta=0} = \frac{1}{\alpha}(J_a \otimes P^a + P_a \otimes J^a) + \epsilon_{abc}v^a (P^b \otimes J^c -  J^b \otimes P^c) \quad \text{with} \quad \langle v,v \rangle = - \frac{1}{\alpha^2}
\end{equation}
or
\begin{equation}\label{kappa2}
r_{\kappa P}^{\alpha =0} = \frac{1}{\lambda \beta}(\lambda J_a \otimes J^a + P_a \otimes P^a) + \epsilon_{abc}v^a (P^b \otimes J^c -  J^b \otimes P^c) \quad \text{with} \quad \langle v,v \rangle = - \frac{1}{\lambda \beta^2}
\end{equation}
which are precisely the 2 possible cases described in \cite{Meusburger:2008rt} (equations (5.5) and (5.6)). \\\\
Therefore $r_{\kappa P}^{\beta=0}$ and $r_{\kappa P}^{\alpha=0}$ correspond to classical dynamical $(\mathfrak{g}_\lambda,\mathfrak{h}_{\lambda,0},K_{\alpha 0})$ and $(\mathfrak{g}_\lambda,\mathfrak{h}_{\lambda,0},K_{0 \beta})$ r-matrices, respectively, if and only if $v \in \text{Span}\{\bm{e}_0\}$. In other words, the $\mathfrak{h}_{\lambda,0}$ equivariance condition constrains the vector defining the classical dynamical $r$-matrix to have the same direction of the generators ($J_0$ and $P_0$) of $\mathfrak{h}_{\lambda,0}$.

\section{The FGMPP dynamical \texorpdfstring{$r$}{}-matrices for \texorpdfstring{$\mathfrak{g}_\lambda$}{}}\label{Sec4paper1}
Theorem \ref{MainTheo2} shows that \textit{dynamical generalized complexifications}, i.e. with coefficients solving (\ref{CDYBECath}), are the only solutions of the CDYBE that actually are classical dynamical $r$-matrices and therefore have a natural origin in the setting of gauge-fixed character varieties (see e.g. \cite{Meusburger:2011oew} and \cite{Meusburger:2012wc}). In this section we show how these solutions are related (indeed coincide up to dynamical gauge transformations) with a family of classical dynamical $r$-matrices found by Feher, Gabor, Marshall, Palla and Pusztai when studing the (quasi-)Poisson structures of the chiral WZNW model (see \cite{Feher:2001th}). \\\\
Alekseev and Meinrenken (AM) \cite{Alekseev1999TheNW} found that for any \textit{self-dual} Lie algebra $\mathfrak{g}$, i.e. equipped with a non-degenerate $Ad$-invariant symmetric bilinear form $\langle \cdot, \cdot \rangle$, there exists a \textit{canonical} classical dynamical $(\mathfrak{g},\mathfrak{g},K)$ $r$-matrix $r_{\text{AM}}: \mathfrak{g}^* \to \mathfrak{g} \otimes \mathfrak{g}$, where $K$ is the element in $S^2(\mathfrak{g})^{\mathfrak{g}}$ associated to $\langle \cdot, \cdot \rangle$ and $\mathfrak{g}^*$ is the dual of $\mathfrak{g}$ with respect to the bilinear form, given by 
\begin{equation}\label{AMFull}
r_{\text{AM}}(x)= K + \rho_{\text{AM}}(x) 
\end{equation}
such that
\begin{equation}\label{AMF}
\begin{split}
\rho_{\text{AM}}: \mathfrak{g}^* & \to \mathfrak{g} \wedge \mathfrak{g} \\
x & \mapsto f(1 \otimes \text{ad}[K^{\lor}(x)])(K)
\end{split}
\end{equation}
with $K^{\lor}$ the linear map $\mathfrak{g}^* \to \mathfrak{g}$ associated to $K$ and 
\begin{equation}
f(z)=\coth(z) - \frac{1}{z}.
\end{equation}
The adjective \textit{canonical} was introduced by \cite{Etingof2000OnTM} and extensively used since then (see e.g. \cite{pusztai2001note}), given that any classical dynamical $(\mathfrak{g},\mathfrak{g},K)$ is dynamical gauge equivalent to $r_{\text{AM}}(x-x_0)$ for a shift $x_0 \in \mathfrak{g}^*$.  \\\\
Analogously, as explained for gauge-fixed character varieties (\ref{gfixbra}), this classical dynamical ($\mathfrak{g},\mathfrak{g},K$) $r$-matrix defines a Poisson structure over 
\begin{equation*}
    \mathfrak{g}^* \times G
\end{equation*}
given by   
\begin{subequations}\label{PoissF}
    \begin{align}
\{t^a,t^b\} & = -f^{ab}_{c}t^c, \\
\{F,t^a\} & = (R_{t_a}+L_{t_a})F, \\
\{F, \tilde{F} \}(t)& = \Pi_{\text{ext}}^{0,1}(r_{\text{AM}}(t))(dF,d\tilde{F}), 
    \end{align}
\end{subequations}
where $F,\tilde{F} \in C^\infty(G)$, $\{t^a\}_{a=1,\cdots,\text{dim}\mathfrak{g}}$ is a basis of $\mathfrak{g}$ and $\{t_a\}_{a=1,\cdots,\dim \mathfrak{g}}$ its dual in $\mathfrak{g}^*$. \\\\
Then we consider a self-dual Lie subalgebra of $\mathfrak{h}$ of $\mathfrak{g}$ (i.e. such that the restriction $\langle \cdot, \cdot \rangle \big|_{\mathfrak{h}}$ is non-degenerate) and decompose (even adapting) the basis $\{t^a\}_{a=1,\cdots,\dim \mathfrak{g}}$ of $\mathfrak{g}$ into a basis $\{h^a\}_{a=1,\cdots, \dim \mathfrak{h}}$ for $\mathfrak{h}$ and $\{\tilde{h}^a\}_{a=1,\cdots, \dim \mathfrak{h}^\perp}$ for $\mathfrak{h}^\perp$, since $\mathfrak{g}=\mathfrak{h} + \mathfrak{h}^{\perp}$. Feher, Gabor, Marshall, Palla and Pusztai in a series of papers (see (\cite{BALOG1990227}, \cite{balog1999chiral}, \cite{Fehr2001OnDR}, \cite{Feher:2001th}, \cite{Feher:2001uu}, \cite{feher2004non}) considered the reduced subspace of $\mathfrak{g}^* \times G$ obtained by imposing the first class constraints $\tilde{h}^a \approx 0$ for $a=1,\cdots,\dim \mathfrak{h}^\perp$. In other words, by seeing $\{\tilde{h}^a\}_{a=1,\cdots, \dim \mathfrak{h}^\perp}$ as functions over $\mathfrak{g}^* \times G$, the Poisson structure (\ref{PoissF}) is modified in such a way these functions become Poisson functions with respect to the new (Dirac) bracket. \\\\
Exactly as before, these new Poisson structure is obtained via gauge fixing, getting so a Poisson structure  
\begin{equation*}
\{F,\tilde{F}\}_{D}=\{F,\tilde{F}\}-\{F_1,e^a\}C_{ab}^{-1}(t_{\mathfrak{h}})\{e^b,\tilde{F}\}
\end{equation*}
over 
\begin{equation*}
    \mathfrak{h}^* \times G
\end{equation*}
(after strongly imposing the constraints $\tilde{h}^a=0$ for $a=1,\cdots,\dim \mathfrak{h}^\perp$) where
\begin{equation*}
C^{ab}(t_{\mathfrak{h}})=\{t^a,t^b\}|_{t_{\mathfrak{h}^{\perp}}=0} = C(t_{\mathfrak{h}}) = \text{ad} (t_{\mathfrak{h}})|_{\mathfrak{h}^\perp}.
\end{equation*}
Thus, this gauge fixing amounts to modify (\ref{PoissF}) by changing the skew-symmetric part of the AM classical dynamical $r$-matrix by   
\begin{equation}
\rho_{\text{FGMPP}}(x) = \rho_{\text{AM}}(x) + \frac{1}{\text{ad}(x)}, 
\end{equation}
getting in this way a Poisson structure over $\mathfrak{h}^* \times G$ with Dirac brackets given by
\begin{subequations}
        \begin{align}
\{h^a,h^b\}_D & = 0, \\
\{F,h^a\}_D & = (R_{h^a}+L_{h^a})F, \\
\{F, \tilde{F} \}_D(h)& = \Pi_{ext}^{0,1}(r_{\text{FGMPP}}(h))(dF,d\tilde{F}),
    \end{align}
\end{subequations}
where the \textit{canonical} classical dynamical ($\mathfrak{g},\mathfrak{h},\Omega$) $r$-matrix (denoted here by FGMPP) is given by 
\begin{equation}\label{FGMPP}
r_{\text{FGMPP}}(x)= K + \rho_{\text{FGMPP}}(x) 
\end{equation}
such that
\begin{equation}\label{FGP}
\begin{split}
\rho_{\text{FGMPP}}: \mathfrak{h}^* & \to \mathfrak{h}^{\perp} \wedge \mathfrak{h}^{\perp} \\
x & \mapsto g(1 \otimes \text{ad}[K^{\lor}(x)])(K)
\end{split}
\end{equation}
with
\begin{equation}\label{gfixg}
g(z)= \coth(z).
\end{equation}
In this section we compute the FGMPP classical dynamical $r$-matrices associated to  $(\mathfrak{g}_\lambda,\mathfrak{h}_{\lambda,i},K_{\alpha \beta})$ for $i=0,1$ and any $\alpha, \beta \in \mathbb{R}$, proving by direct computation the following. 
\begin{theorem}\label{MainTheo3}
The \textit{generalized complexified} classical dynamical ($\mathfrak{g}_\lambda,\mathfrak{h}_{\lambda,i}, K_{\alpha \beta}$) $r$-matrices (\ref{gaugeequiv0}) and (\ref{gaugeequiv1}) are gauge equivalent to the FGMPP classical dynamical $r$-matrices (\ref{FGMPP}) associated to ($\mathfrak{g}_\lambda,\mathfrak{h}_{\lambda,i}, K_{\alpha \beta}$).    
\end{theorem}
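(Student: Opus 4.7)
The plan is to compute $r_{\text{FGMPP}}$ directly from the defining formula (\ref{FGP}) for the triple $(\mathfrak{g}_\lambda,\mathfrak{h}_{\lambda,i},K_{\alpha\beta})$, and then to recognise the resulting coefficients as the $B$ and $C$ in Theorem \ref{MainTheo2} up to a shift of the origin in $\mathfrak{h}_{\lambda,i}^*$. A shortcut is available in principle: $r_{\text{FGMPP}}$ is by construction a classical dynamical $(\mathfrak{g}_\lambda,\mathfrak{h}_{\lambda,i},K_{\alpha\beta})$ $r$-matrix, so by Theorem \ref{MainTheo2} and Lemma \ref{Lemma3} it must lie in the gauge orbit of (\ref{gaugeequiv0}) or (\ref{gaugeequiv1}) for some $(\psi_C,\gamma_C)$; but since the statement is phrased as a direct verification, I will exhibit the match explicitly.

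First, I would fix $i=0$ (the case $i=1$ is analogous) and use $\mathfrak{h}_{\lambda,0}^\perp = \mathrm{Span}\{J_1,J_2,P_1,P_2\}$ with respect to $K_{\alpha\beta}$. Parametrising $x \in \mathfrak{h}_{\lambda,0}^*$ by $(\psi_0,\gamma_0)$ as in (\ref{paramet}), the dual map $K_{\alpha\beta}^\lor$ inverts the restriction of $(\cdot,\cdot)_{\alpha\beta}$ to $\mathfrak{h}_{\lambda,0}$, giving
\[
K_{\alpha\beta}^\lor(x) \;=\; \frac{\alpha \gamma_0 - \beta \psi_0}{\alpha^2-\lambda\beta^2}\,J_0 \;+\; \frac{\alpha\psi_0 - \lambda\beta\gamma_0}{\alpha^2-\lambda\beta^2}\,P_0.
\]
Using the generalised complexification $P_0 = \theta J_0$ and the abbreviation $z_0=\psi_0+\theta\gamma_0$, this collapses to $K^\lor_{\alpha\beta}(x) = \overline{(\alpha+\theta\beta)^{-1}}\,\theta^{-1}z_0 \cdot J_0$ inside $\mathfrak{g}_\lambda \otimes \mathbb{R}_\lambda$, so that $\mathrm{ad}[K^\lor(x)]$ acts on each of the two copies of $\mathfrak{su}(2)\otimes\mathbb{R}_\lambda$ (or $\mathfrak{sl}(2,\mathbb{R})\otimes\mathbb{R}_\lambda$) with the same scalar eigenvalue in $\mathbb{R}_\lambda$.

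Next I would apply $g(\cdot)=\coth(\cdot)$ as a matrix function to this operator, working inside $\mathbb{R}_\lambda$. On the two-dimensional Cartan-perpendicular block in each complexified copy, $\mathrm{ad}[K^\lor(x)]$ has eigenvalues $\pm z_0/(\alpha+\theta\beta)$, so $g(\mathrm{ad}[K^\lor(x)])$ contributes a factor $\coth(z_0/(\alpha+\theta\beta))$, and a short calculation identifies $\rho_{\text{FGMPP}}(x)$ with the real and $\theta$-parts of
\[
w_{\text{FGMPP}}(z_0) \;=\; \frac{1}{\alpha+\theta\beta}\coth\!\left(\frac{z_0}{\alpha+\theta\beta}\right)(P^1\wedge J^2 - P^2\wedge J^1) + (\text{partner term in } \lambda J^1\wedge J^2 + P^1\wedge P^2),
\]
precisely the form of the ansatz that reduces the CDYBE to (\ref{CDYBEcompl01})--(\ref{CDYBEcompl02}). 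Since $\coth$ and $\tan$ differ by the imaginary shift $z \mapsto z - i\pi/2$, the generalised shift $C = \psi_C + \theta\gamma_C \in \mathbb{R}_\lambda$ needed to identify $w_{\text{FGMPP}}(z_0)$ with the $\tan$-solution (\ref{w0complsol}) is a purely $\theta$-valued constant, which I would read off by comparing the identity $\coth(u)=\tan(u - i\pi/2)$ in its $\mathbb{R}_\lambda$ analogue (using $\theta$ in place of $i$, adjusted by $\sqrt{|\lambda|}$ in the real cases $\lambda>0$ and $\lambda<0$). Splitting into real and $\theta$ parts across the three sign regimes of $\lambda$ then reproduces the formulas (\ref{coeffbc0}) for specific $(\psi_C,\gamma_C)$, and the arbitrary function $f$ in (\ref{hequivdyn0}) is absorbed by the dynamical gauge transformation of Lemma \ref{Lemma3}.

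The main obstacle is bookkeeping across the three signs of $\lambda$: one must handle the field case $\mathbb{R}_\lambda\cong\mathbb{C}$ ($\lambda<0$), the dual-number case ($\lambda=0$) where $\coth$ has a genuine singularity at the origin that is matched by the pole of $\tan$ in (\ref{coeffbc0}), and the split case ($\lambda>0$) where $\mathbb{R}_\lambda$ has zero divisors and $\coth$ must be interpreted component-wise on the idempotent decomposition. In each case the relation $\coth(u) = \tan(u - \tfrac{\pi}{2}\theta/\sqrt{|\lambda|})$ (suitably interpreted) supplies the required shift $(\psi_C,\gamma_C)$ and establishes gauge equivalence. The $\mathfrak{h}_{\lambda,1}$ case proceeds identically after replacing the generalised complexified variable by $z_1$ and using the $\tanh$-solution (\ref{w1complsol}); the constant-coefficient branch (\ref{conscoeff}) corresponds to the limit $|z_1|\to\infty$ where $\coth(z_1/(\alpha+\theta\beta)) \to \pm 1$, completing the identification.
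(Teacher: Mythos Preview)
Your overall strategy matches the paper's: compute $r_{\text{FGMPP}}$ directly from (\ref{FGP}) and identify the result with one of the Theorem~\ref{MainTheo2} solutions for a specific shift $(\psi_C,\gamma_C)$. Where you diverge is in the mechanism used to evaluate $g(\mathrm{ad}[K^\lor_{\alpha\beta}(x)])$. The paper writes the $4\times 4$ matrix of $\mathrm{ad}[K^\lor_{\alpha\beta}(x_0)]|_{\mathfrak{h}^\perp_{\lambda,0}}$ explicitly in the basis $\{J_1,J_2,P_1,P_2\}$, expands $\coth(z)$ via the Weierstrass product as $\tfrac{1}{z}+\sum_n \tfrac{2z}{(\pi n)^2+z^2}$, inverts and sums the resulting rational matrices term by term, and only then resums the series over $\mathbb{R}_\lambda$ to obtain closed-form coefficients. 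Your route via eigenvalues and direct functional calculus in $\mathbb{R}_\lambda$ is more elegant in principle, and the paper's series manipulation is arguably what your outline would become once made rigorous. However, two concrete points in your execution need correction.

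First, your formula for $K_{\alpha\beta}^\lor(x)$ has the $J_0$ and $P_0$ coefficients swapped relative to the paper's $\xi,\zeta$ (recall $\gamma$ pairs with $J^*$ and $\psi$ with $P^*$ in (\ref{paramet})). Second, and more importantly, your eigenvalue claim is off by a factor of $i$: for $\mathfrak{h}_{\lambda,0}$ the operator $\mathrm{ad}(J_0)$ acts as a \emph{rotation} on $\mathrm{Span}\{J_1,J_2\}$, so the eigenvalues of $\mathrm{ad}[K^\lor(x_0)]$ on $\mathfrak{h}^\perp$ are $\pm i(\xi+\theta\zeta)$, not $\pm z_0/(\alpha+\theta\beta)$. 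Consequently $g(\mathrm{ad})$ produces $\cot(\xi+\theta\zeta)$ rather than $\coth$, which is exactly what the paper finds in (\ref{eqFF:1})--(\ref{eqFF:2}). The relevant identity is then the real shift $-\cot(u)=\tan(u-\tfrac{\pi}{2})$, and the matching constants are $(\psi_C,\gamma_C)=(\tfrac{\pi}{2}\alpha,\tfrac{\pi}{2}\beta)$, i.e.\ $C=\tfrac{\pi}{2}(\alpha+\theta\beta)\in\mathbb{R}_\lambda$, not a purely $\theta$-valued constant as you assert. Your remark about the constant branch (\ref{conscoeff}) arising as a limit is extraneous: the theorem only identifies $r_{\text{FGMPP}}$ with one point in the $(\psi_C,\gamma_C)$-family.
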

\begin{proof}
We start by considering the $i=0$ case. Since
\begin{equation*}
K_{\alpha \beta} \equiv \frac{\alpha}{\alpha^2-\lambda \beta^2}(J_a \otimes P^a + P_a \otimes J^a) - \frac{\beta}{\alpha^2-\lambda \beta^2} (\lambda J_a \otimes J^a + P_a \otimes P^a),
\end{equation*}
for a general element $x_0 =\gamma_0 J_0^* + \psi_0 P_0^*$ in $\mathfrak{h}^*_{\lambda,0}$ we have 
\begin{equation*}
    K^\lor_{\alpha \beta}(x_0)= \xi(\gamma_0,\psi_0,\alpha,\beta) J_0 + \zeta(\gamma_0,\psi_0,\alpha,\beta)P_0
\end{equation*}
with 
\begin{equation*}
\xi(\gamma_0,\psi_0,\alpha,\beta)= \psi_0 \left( \frac{ \alpha}{\alpha^2-\lambda \beta^2} \right) - \gamma_0 \left( \frac{ \beta \lambda}{\alpha^2-\lambda \beta^2} \right)
\end{equation*}
and 
\begin{equation*}
\zeta(\gamma_0,\psi_0,\alpha,\beta)= \gamma_0 \left( \frac{ \alpha}{\alpha^2-\lambda \beta^2} \right) - \psi_0 \left( \frac{ \beta }{\alpha^2-\lambda \beta^2} \right).
\end{equation*}
Note these are precisely the arguments of the functions $B$ and $C$ in (\ref{bcoeflong}) and (\ref{ccoeflong}). \\\\
Thus, according to (\ref{commut}), we get  
\begin{equation*}
\text{ad}[K_{\alpha \beta}^{\lor}(x_0)]|_{\mathfrak{h}^\perp_{\lambda,0}}: \begin{pmatrix}  J_1 \\ J_2 \\ P_1 \\ P_2 \end{pmatrix} \mapsto \xi \begin{pmatrix}  -J_2 \\ J_1 \\  -P_2 \\ P_1 \end{pmatrix} + \zeta \begin{pmatrix}  -P_2 \\ P_1  \\ -\lambda J_2 \\ \lambda J_1 \end{pmatrix} = \begin{pmatrix} -( \xi J_2 + \zeta P_2)  \\ \xi J_1 +\zeta P_1 \\ -(\xi P_2 + 	\zeta \lambda J_2) \\ \xi P_1 + \zeta \lambda J_1   \end{pmatrix},
\end{equation*}
where we have omitted the dependence of $\xi$ and $\zeta$ on $\gamma_0,\psi_0,\alpha$ and $\beta$. Hence, the matrix representation of $\text{ad}[K_{\alpha \beta}^{\lor}(x_0)]|_{\mathfrak{h}^\perp_{\lambda,0}}$ in the (standard) basis $\{J_1,J_2,P_1,P_2\}$ of $\mathfrak{h}_{\lambda,0}^{\perp}$ is given by 
\begin{equation}\label{admatrix}
\text{ad}[K_{\alpha \beta}^{\lor}(x_0)]|_{\mathfrak{h}^\perp_{\lambda,0}}= \begin{pmatrix} 0 &  \xi & 0 & \zeta \lambda \\
-\xi & 0 & -\zeta\lambda & 0 \\ 0 & \zeta  & 0 & \xi \\ -\zeta  & 0 & -\xi & 0 \end{pmatrix}.
\end{equation}
As indicated in Appendix C, by the Weierstrass factorization theorem, the function $g(z)$ is equal to
\begin{equation}\label{expancoth}
g(z)= \frac{1}{z} + \sum_{n=1}^\infty  \frac{2z}{(\pi n)^2+z^2}.
\end{equation}
Therefore, given that 
\begin{equation*}
\frac{\text{ad}[K_{\alpha \beta}^{\lor}(x_0)]|_{\mathfrak{h}^\perp_{\lambda,0}}}{( \pi n)^2+ \left(\text{ad}[K_{\alpha \beta}^{\lor}(x_0)]|_{\mathfrak{h}^\perp_{\lambda,0}} \right)^2}=\begin{pmatrix} 0 & F_n(\zeta,\xi,\lambda) & 0  & \lambda G_n(\zeta,\xi,\lambda) \\ -F_n(\zeta,\xi,\lambda) & 0 &  -\lambda G_n(\zeta,\xi,\lambda) & 0 \\ 0 &  G_n(\zeta,\xi,\lambda) & 0  & F_n(\zeta,\xi,\lambda) \\   -G_n(\zeta,\xi,\lambda) & 0 & -F_n(\zeta,\xi,\lambda) & 0 \end{pmatrix}
\end{equation*}
where
\begin{equation*}
F_n(\zeta,\xi,\lambda)  =\frac{\xi[( \pi n)^2-(\xi^2-\lambda \zeta^2)]}{[( \pi n)^2-(\xi^2+\lambda \zeta^2)]^2-(2 \zeta \xi)^2 \lambda}, \quad 
G_n(\alpha,\psi,\lambda)  =\frac{\zeta[( \pi n)^2+(\xi^2-\lambda \zeta^2)]}{[(\pi n)^2-(\xi^2+\lambda \zeta^2)]^2-(2 \zeta \xi)^2 \lambda}     
\end{equation*}
and the matrix representation of the inverse of $\text{ad}[K_{\alpha \beta}^{\lor}(x_0)]|_{\mathfrak{h}^\perp_{\lambda,0}}$ is  \begin{equation*}
\frac{1}{\text{ad}[K_{\alpha \beta}^{\lor}(x_0)]|_{\mathfrak{h}^\perp_{\lambda,0}}}= \frac{1}{\xi^2- \lambda \zeta^2} \begin{pmatrix} 0 & -\xi & 0 & \lambda \zeta \\  \xi & 0 & - \lambda \zeta & 0 \\ 0 & \zeta & 0 & -\xi \\ -\zeta  & 0 & \xi & 0 \end{pmatrix},
\end{equation*}
then we obtain the matrix representation of $g(\text{ad}[K_{\alpha \beta}^{\lor}(x_0)]|_{\mathfrak{h}^\perp_{\lambda,0}})$ is 
\begin{equation}\label{mat}
g \left(\text{ad}[K_{\alpha \beta}^{\lor}(x_0)]|_{\mathfrak{h}^\perp_{\lambda,0}}\right)= \begin{pmatrix}  0 & F(\zeta,\xi,\lambda) & 0 &  \lambda G(\zeta,\xi,\lambda) \\ -F(\zeta,\xi,\lambda) & 0 &  -\lambda G(\zeta,\xi,\lambda) & 0 \\ 0 &  G(\zeta,\xi,\lambda) & 0 & F(\zeta,\xi,\lambda) \\ -G(\zeta,\xi,\lambda) & 0 & -F(\zeta,\xi,\lambda) & 0  \end{pmatrix}
\end{equation}
with 
\begin{equation*}
\begin{split}
F(\zeta,\xi,\lambda) & = -\frac{\xi}{\xi^2-\lambda \zeta^2} + 2\sum_{n=1}^{\infty}F_n(\zeta,\xi,\lambda)\\
& = -\frac{\xi}{\xi^2-\lambda \zeta^2} + \sum_{n=1}^{\infty} \frac{2\xi[(\pi n)^2-(\xi^2-\lambda \zeta^2)]}{[(\pi n)^2-(\xi^2+\lambda \zeta^2)]^2-(2 \zeta \xi)^2 \lambda} \\
& = \frac{1}{2}\left[ -\frac{1}{\xi + \theta \zeta} +  2\sum_{n=1}^\infty \frac{\xi + \theta \zeta}{(\pi n)^2-(\xi + \theta \zeta)^2} \right] + \frac{1}{2}\left[ -\frac{1}{\xi - \theta \zeta} +  2\sum_{n=1}^\infty \frac{\xi - \theta \zeta}{( \pi n)^2-(\xi - \theta \zeta)^2} \right] 
\end{split}
\end{equation*}
and 
\begin{equation*}
\begin{split}
G(\zeta,\xi,\lambda) & = \frac{\zeta}{\xi^2- \lambda \zeta^2} + 2\sum_{n=1}^{\infty}G_n(\zeta,\xi,\lambda) \\
& = \frac{\zeta}{\xi^2- \lambda \zeta^2} + \sum_{n=1}^{\infty} \frac{2 \zeta[(\pi n)^2+(\xi^2-\lambda \zeta^2)]}{[(\pi n)^2-(\xi^2+\lambda \zeta^2)]^2-(2 \zeta \xi)^2 \lambda} \\
& = \frac{1}{2\theta} \left[-\frac{1}{\xi + \theta \zeta} + 2 \sum_{n=1}^\infty \frac{\xi + \theta \zeta}{(\pi n)^2-(\xi + \theta \zeta)^2}  \right] -\frac{1}{2\theta} \left[-\frac{1}{\xi - \theta \zeta} + 2 \sum_{n=1}^\infty \frac{\xi - \theta \zeta}{(\pi n)^2-(\xi - \theta \zeta)^2}  \right],
\end{split}
\end{equation*}
where implicitly we have extended $\mathbb{R}$ to the ring $\mathbb{R}_\lambda$ (\ref{Rlambda}). \\\\
As shown in Appendix D, using again the Weierstrass factorization theorem and properties of $\mathbb{R}_\lambda$, the functions $F(\zeta,\xi,\lambda)$ and $G(\zeta,\xi,\lambda)$ over $\mathbb{R}_\lambda$ are  given by the following functions for the different signs of $\lambda$: 
\begin{itemize}
    \item For $\lambda=0$, then      \begin{subequations}
  \begin{align}
F(\zeta,\xi,0) & = \tan \left(\xi-\frac{\pi}{2} \right), \label{eqf:1}  \\
G(\zeta,\xi,0) & = \frac{\zeta}{2 \cos^2 \left(\xi- \frac{\pi}{2} \right)}  \label{eqg:2}
\end{align}
\end{subequations}
\item For $\lambda \neq 0$, then  
\begin{subequations}
\begin{align}
F(\zeta,\xi,\lambda) &  = -\frac{1}{2} \left[ \cot \left( \xi+ \theta \zeta  \right) + \cot \left( \xi - \theta \zeta \right) \right], \label{eqFF:1}  \\
G(\zeta,\xi,\lambda) & = -\frac{1}{2 \theta} \left[ \cot \left( \xi+ \theta \zeta \right) - \cot \left( \xi- \theta \zeta \right) \right]  \label{eqFF:2}
\end{align}
\end{subequations}
which for $\lambda<0$ are given by
\begin{subequations}
\begin{align}
F(\zeta,\xi,\lambda) & =  \frac{-\sin(2\xi)}{-\cos(2\xi)+ \cosh(2\sqrt{|\lambda|}\zeta)}, \\
G(\zeta,\xi,\lambda) & = \frac{1}{ \sqrt{|\lambda|}} \frac{\sinh(2 \sqrt{|\lambda|}\zeta)}{-\cos(2\xi)+ \cosh(2\sqrt{|\lambda|}\zeta)},
\end{align}
\end{subequations}
and for $\lambda>0$
\begin{subequations}
\begin{align}
F(\zeta,\xi,\lambda) & =  \frac{-\sin(2 \xi)}{-\cos (2\xi)+\cos(2\sqrt{\lambda}\zeta)}, \\
G(\zeta,\xi,\lambda) & = \frac{1}{ \sqrt{\lambda}} \frac{\sin(2 \sqrt{\lambda} \zeta)}{-\cos(2\xi)+ \cos(2\sqrt{\lambda}\zeta)}.
\end{align}
\end{subequations}
\end{itemize}
Finally, replacing (\ref{mat}) in (\ref{FGP}), we obtain the skew-symmetric part of the canonical FGMPP dynamical $r$-matrix associated to $(\mathfrak{g}_\lambda,\mathfrak{h}_{\lambda,0},K_{\alpha \beta})$ is given by 
\begin{equation*}
\begin{split}
\rho_{\text{FGMPP}}(x_0)& = \langle T_a, g(\text{ad}[K_{\alpha \beta}^{\lor}(x_0)])J_1    \rangle_s T^a \otimes P^1 + \langle T_a, g(\text{ad}[K_{\alpha \beta}^{\lor}(x_0)])J_2    \rangle_s T^a \otimes P^2 \\
& \quad + \langle T_a, g(\text{ad}[K_{\alpha \beta}^{\lor}(x_0)])P_1    \rangle_s T^a \otimes J^1 + \langle T_a, g(\text{ad}[K_{\alpha \beta}^{\lor}(x_0)])P_2    \rangle_s T^a \otimes J^2 \\
& = \tilde{b}(\zeta,\xi,\lambda)(J^1 \otimes P^2 -J^2 \otimes P^1 - P^2 \otimes J^1 + P^1 \otimes J^2) \\
& \qquad + \tilde{c}(\zeta,\xi,\lambda)(P^1 \otimes P^2 - P^2 \otimes P^1 -\lambda J^2 \otimes J^1 + \lambda J^1 \otimes J^2)  
\end{split}
\end{equation*}
where 
\begin{equation}\label{btildecoeflong}
    \tilde{b}(\zeta,\xi,\lambda)  = \frac{\alpha}{\alpha^2-\lambda \beta^2} F(\zeta,\xi,\lambda) -\lambda \frac{\beta}{\alpha^2-\lambda \beta^2} G(\zeta,\xi,\lambda)
\end{equation}
and 
\begin{equation}\label{ctildecoeflong}
    \tilde{c}(\zeta,\xi,\lambda)  = \frac{\alpha}{\alpha^2-\lambda \beta^2} F (\zeta,\xi,\lambda)- \frac{\beta}{\alpha^2-\lambda \beta^2} G (\zeta,\xi,\lambda)
\end{equation}
concluding the \textit{canonical} FGMPP dynamical $r$-matrix associated to $(\mathfrak{g}_\lambda,\mathfrak{h}_{\lambda,0},K_{\alpha \beta})$ is given by
\begin{equation}\label{FGMPPans}
r_{\text{FGMPP}}(x_0)= K_{\alpha \beta} + \tilde{b}(\xi,\zeta,\lambda)(P^1 \wedge J^2 - P^2 \wedge J^1) + \tilde{c}(\xi,\zeta,\lambda)(\lambda J^1 \wedge J^2 + P^1 \wedge P^2),
\end{equation}
which coincides precisely with the solution $r_d(\psi_0,\gamma_0)$ in Theorem \ref{MainTheo2} by choosing $\Psi_0=\frac{\pi}{2} \alpha$ and $\Gamma_0= \frac{\pi}{2} \beta$. \\\\
In a completely analogous way if we consider a generic element $x_1 =\gamma_1 J_1^* + \psi_1 P_1^*$ in $\mathfrak{h}^*_{\lambda,1}$, we obtain for $g(\text{ad}[K_{\alpha \beta}^{\lor}(x_1)]|_{\mathfrak{h}^\perp_{\lambda,1}})$ a matrix similar to (\ref{mat}) but with $F$ and $G$ now given by
\begin{subequations}
  \begin{align}
F(\zeta,\xi,0) & = \tanh \left(\xi - \frac{\pi}{2}  \right), \label{eqfL:1}  \\
G(\zeta,\xi,0) & = \frac{\zeta}{2 \cosh^2 \left(\xi- \frac{\pi}{2} \right)}  \label{eqgL:2}
\end{align}
\end{subequations}
for $\lambda=0$,       
\begin{subequations}
\begin{align}
F(\zeta,\xi,\lambda) & =  \frac{\sinh(2\xi-\pi)}{\cosh(2\xi-\pi)+ \cosh(2\sqrt{|\lambda|}\zeta)}, \\
G(\zeta,\xi,\lambda) & = \frac{1}{ \sqrt{|\lambda|}} \frac{\sin(2 \sqrt{|\lambda|}\zeta)}{\cosh(2\xi-\pi)+ \cos(2\sqrt{|\lambda|}\zeta)},
\end{align}
\end{subequations}
for $\lambda<0$, and
\begin{subequations}
\begin{align}
F(\zeta,\xi,\lambda) & =  \frac{\sinh(2 \xi-\pi)}{\cosh (2\xi-\pi)+\cosh(2\sqrt{\lambda}\zeta)}, \\
G(\zeta,\xi,\lambda) & = \frac{1}{ \sqrt{\lambda}} \frac{\sinh(2 \sqrt{\lambda} \zeta)}{\cosh(2\xi-\pi)+ \cosh(2\sqrt{\lambda}\zeta)}.
\end{align}
\end{subequations}
for $\lambda>0$. \\\\Thus, evaluating (\ref{FGP}) in this case leads to a classical dynamical $r$-matrix of the form 
\begin{equation}\label{FGMPPans1}
r_{\text{FGMPP}}(x_1)= K_{\alpha \beta} + \tilde{b}(\xi,\zeta,\lambda)(P^2 \wedge J^0 - J^2 \wedge P^0) + \tilde{c}(\xi,\zeta,\lambda)(\lambda J^2 \wedge J^0 + P^2 \wedge P^0),
\end{equation}
with $\tilde{b}$ and $\tilde{c}$ defined exactly as before, which coincides with the solution $r_d(\psi_1,\gamma_1)$ in Theorem \ref{MainTheo2} by choosing again $\Psi_0=\frac{\pi}{2} \alpha$ and $\Gamma_0=\frac{\pi}{2} \beta$.  
\end{proof}
\section{ Conclusion and outlook}

In this paper we gave a full classification of the classical dynamical $r$-matrices up to gauge equivalence  for the Lie algebras $\mathfrak{g}_\lambda$ of the local isometry groups of the maximally symmetric Euclidean and Lorentzian spaces in three dimensions.  The classical dynamical $r$-matrices for vanishing cosmological constant in the Lorentzian setting were obtained previously by Meusburger and Sch\"{o}nfeld in \cite{Meusburger:2012wc}, but our results for  classical dynamical $r$-matrices for the general triple ($\mathfrak{g}_\lambda$, $\mathfrak{h}_\lambda$, $K_{\alpha \beta}$) are new. 

It is interesting to compare our results with the analogous study of (non-dynamical) classical $r$-matrices in \cite{Osei:2017ybk} for the family  $\mathfrak{g}_\lambda$. The general ansatz used here is a generalisation  the one used in that paper, where three families of solutions were identified, associated with the Lie bialgebra structure of a classical double, a generalised bicrossproducct (or $\kappa$-Poincar\'e algebra) or a generalised complexification (in the  same sense as used in this paper) of the standard $\mathfrak{sl}(2,\mathbb{R})$ bialgebra structure. Here we found that, of these three, only the dynamical version of the generalised complexification solves both the dynamical classical Yang-Baxter equation and the equivariance condition required for dynamical $r$-matrices.  In particular,  there is no dynamical version of the $r$-matrices associated to the family of classical double structures identified in \cite{Osei:2017ybk}. It is interesting  that the  equivariance constraint  effectively makes the  full classification of classical dynamical $r$-matrices for $\mathfrak{g}_\lambda$ easier than the classification of non-dynamical $r$-matrices for that family of Lie algebras (which indeed has not yet been achieved). 

While our results can be summarised conveniently as generalised complexifications of known results for  $\mathfrak{sl}(2,\mathbb{R})$ or $\mathfrak{su}(2)$, our derivations and proofs do not make essential use of this point of view. In future work, it may be interesting to see  which parts of standard holomorphic function theory extend when $\mathbb{C}$ is replaced by the ring $\mathbb{R}_\lambda$, and if the resulting machinery is sufficient to establish our results directly by `generalised holomorphic methods'.  It would then also be interesting if this point of view can be usefully adopted in the quantization.

As remarked, the quantization of the Hamiltonian Chern-Simons theory (i.e. over manifolds homeomorphic to $\mathbb{R} \times \Sigma_{g,n}$) reduces to the quantization of a constrained Poisson space $\mathcal{P}_{\text{ext}}^{g,n}$, whose Poisson structure is defined in terms of a classical $r$-matrix $r$. In the so-called combinatorial quantization approach (see \cite{alekseev1995combinatorial} and \cite{alekseev1996combinatorial}) the quantization is performed before imposing the constraints. It  relies on  the existence  of a quantum $R$-matrix  which quantizes $r$, i.e.  which has an expansion
\begin{equation*}
    R(\hbar) = 1 + \hbar r + \mathcal{O}(\hbar^2).
\end{equation*}
Meanwhile in the approach presented here, the constraints are imposed before quantization, leading to the appearance of a classical dynamical $r$-matrix $r^d$ describing the Poisson structure of the (constrained) space $\mathcal{P}^{g,n}$. Hence a dynamical combinatorial quantization (i.e. where the classical $r$-matrix is replaced by a classical dynamical $r$-matrix $r^d$, see \cite{Spies2020}) is expected to lead to a quantization of the Hamiltonian Chern-Simons theory. In analogy to the non-dynamical case, this quantization scheme requires  a quantum dynamical $R$-matrix that quantizes $r^d$. In our case, where the Poisson structure is defined in terms of classical dynamical $r$-matrices gauge equivalent to the  classical dynamical $r$-matrix, the quantization requires the existence of a quantum dynamical $R$-matrix, i.e.       
\begin{equation*}
R(x,\hbar)= 1 + \hbar r(x) + \mathcal{O}(\hbar^2).
\end{equation*}
The quantization of the AM classical dynamical $r$-matrix discussed at the start of Section \ref{Sec4paper1},  was addressed by Enriquez and Etingof in \cite{Enriquez2003QuantizationOA} and the extension of this  approach to the quantization of the FGMPP classical dynamical $r$-matrix seems possible and interesting.  Further developments in this direction could lead to a complete formulation of quantum character varieties, with potentially interesting links to  the categorical approaches \cite{ben2018quantum} and \cite{keller:tel-04062148}), and a formal understanding of quantum of  3d gravity (see \cite{Buffenoir:2005zi}). Also, extensions of the results presented here to the setting of  supersymmetric Chern-Simons  theory (see \cite{Mikovic:2000cx}) and Super Character Varieties (see \cite{Aghaei:2018cbn}) seem interesting. For the family of Lie algebras $\mathfrak{g}_\lambda$ relevant for 3d gravity, the point of view of generalised complexification, which proved so useful here, may provide additional  insights and connections.


%

\textbf{Acknowledgments.} We are grateful to Catherine Meusburger for helpful discussions and explanations regarding the physical interpretation of the classical dynamical $r$-matrices in the setting of 3d  Chern-Simons gravity. We would also like to thank her for sharing with us unpublished results concerning extensions (for $\Lambda_C \neq 0$ and both signatures) of classical dynamical $r$-matrices derived previously in \cite{Meusburger:2012wc}, which coincide with (\ref{Cathshar0})-(\ref{Cathshar0}) for $\alpha=2$ and $\psi_C=\gamma_C=0$. The work of J.C.M.P is supported by a James Watt scholarship from Heriot-Watt University. 

\section*{Appendix A: Action of \texorpdfstring{$G$}{} over the space \texorpdfstring{$\text{Dyn}(\mathfrak{g},K)$}{}}\label{AppA} 
Let $\mathfrak{g}$ be a Lie algebra and $K \in (S^2 \mathfrak{g})^{\mathfrak{g}}$. In this Appendix we prove that if $\mathfrak{h}$ and $\mathfrak{h}'$ are two conjugate Lie subalgebras of $\mathfrak{g}$, say $\mathfrak{h}'=\text{Ad}(g)(\mathfrak{h})$ for $g \in G$, then the sets of classical dynamical $r$-matrices $\text{Dyn}(\mathfrak{g},\mathfrak{h},K)$ and $\text{Dyn}(\mathfrak{g},\mathfrak{h}',K)$ are in bijection, via a natural map constructed using the adjoint action of $G$ over $\mathfrak{g} \otimes \mathfrak{g}$ and its coadjoint action over $\mathfrak{g}^*$.  \\\\ 
\textbf{Lemma A.} Let $\mathfrak{h}$ and $\mathfrak{h}'$ be conjugate Lie subalgebras of $\mathfrak{g}$, say by $g \in G$. Then the map 
\begin{equation}\label{mapdynfin}
\begin{split}
    \text{Dyn}(\mathfrak{g},\mathfrak{h},K)  & \to \text{Dyn}(\mathfrak{g},\mathfrak{h}',K)  \\
    r(x) & \mapsto \text{Ad}(g) \otimes \text{Ad}(g) r(\text{Ad}^*(g^{-1})x)
    \end{split}
\end{equation}
is well-defined and bijective. \\\\
\begin{proof}
Take $r \in \text{Dyn}(\mathfrak{g},\mathfrak{h},K)$ and let $g \in G$ such that $\text{Ad}(g)(\mathfrak{h})=\mathfrak{h}'$. Define 
\begin{equation*}
    r'(x)=\text{Ad}(g) \otimes \text{Ad}(g)r(\text{Ad}^*(g^{-1})x) \qquad \text{for } x \in (\mathfrak{h}')^* 
\end{equation*}
where $\text{Ad}^*: G \to \text{Aut}(\mathfrak{g}^*)$ is the coadjoint action of the Lie group $G$ over the dual of the Lie algebra $\mathfrak{g}^*$. \\\\
Writing $r$ in terms of a basis $\{T_a\}_{a=1,\cdots, \dim \mathfrak{g}}$ of $\mathfrak{g}$, we get $r(x)=r^{ab}(x)T_a \otimes T_b$ for all $x \in \mathfrak{h}^*$ and by direct computation we obtain
\begin{equation*}
\begin{split}
    [r'_{12}(x),r'_{23}(x)] & = [r^{ab}(\text{Ad}^*(g^{-1})x) \text{Ad}(g)T_a \otimes \text{Ad}(g)T_b \otimes 1, r^{cd}(\text{Ad}^*(g^{-1})x) 1 \otimes \text{Ad}(g)T_c \otimes \text{Ad}(g)T_d] \\
    & = r^{ab}(\text{Ad}^*(g^{-1})x)r^{cd}(\text{Ad}^*(g^{-1})x) \text{Ad}(g)T_a \otimes [\text{Ad}(g)T_b,\text{Ad}(g)T_c] \otimes \text{Ad}(g)T_d \\
    & = \text{Ad}(g)^{\otimes 3} [r_{12}(\text{Ad}^*(g^{-1})x),r_{23}(\text{Ad}^*(g^{-1})x)].
    \end{split}
\end{equation*}
Similarly 
\begin{equation*}
\begin{split}
[r'_{13}(x),r'_{23}(x)] & =\text{Ad}(g)^{\otimes 3} [r_{13}(\text{Ad}^*(g^{-1})x),r_{23}(\text{Ad}^*(g^{-1})x)], \\  
[r'_{12}(x),r'_{13}(x)] & =\text{Ad}(g)^{\otimes 3} [r_{12}(\text{Ad}^*(g^{-1})x),r_{13}(\text{Ad}^*(g^{-1})x)]    
\end{split}
\end{equation*}
and consequently
\begin{equation}\label{Brackinv}
\begin{split}
    [[r'(x),r'(x)]] & = \text{}[[\text{Ad}(g) \otimes \text{Ad}(g)r(\text{Ad}^*(g^{-1})x),\text{Ad}(g) \otimes \text{Ad}(g)r(\text{Ad}^*(g^{-1})x)]]  \\
    & = \text{Ad}(g)^{\otimes 3}  [[r(\text{Ad}^*(g^{-1})x),r(\text{Ad}^*(g^{-1})x)]].
\end{split}
\end{equation}
Also, by direct computation 
\begin{equation*}
\begin{split}
    {h'_i}^{(1)} \frac{\partial r'_{23}}{\partial h'^i} & = (\text{Ad}(g)h_i)^{(1)} (\text{Ad}(g) \otimes \text{Ad}(g)) \frac{\partial r_{23}}{\partial h^i}(\text{Ad}^*(g^{-1})x) \\
    & = \text{Ad}(g)^{\otimes 3} h_i^{(1)}\frac{\partial r_{23}}{\partial h^i}(\text{Ad}^*(g^{-1})x)
\end{split}
\end{equation*}
and analogously
\begin{equation*}
    {h'_i}^{(2)} \frac{\partial r'_{13}}{\partial h'^i} = \text{Ad}(g)^{\otimes 3} h_i^{(2)}\frac{\partial r_{13}}{\partial h^i}(\text{Ad}^*(g^{-1})x), \qquad
    {h'_i}^{(3)} \frac{\partial r'_{12}}{\partial h'^i} = \text{Ad}(g)^{\otimes 3} h_i^{(3)}\frac{\partial r_{12}}{\partial h^i}(\text{Ad}^*(g^{-1})x).
\end{equation*}
Hence
\begin{equation*}
\begin{split}
\text{Alt}(dr')(x) & =  \sum_{i = 1}^{\dim \mathfrak{h}'}  {h'_i}^{(1)} \frac{\partial r'_{23}}{\partial h'^i}(x) - {h'_i}^{(2)} \frac{\partial r'_{13}}{\partial h'^i}(x) + {h'_i}^{(3)} \frac{\partial r'_{12}}{\partial h'^i}(x) \\
& = \text{Ad}(g)^{\otimes 3} \left[ h_i^{(1)}\frac{\partial r_{23}}{\partial h^i}(\text{Ad}^*(g^{-1})x) - h_i^{(1)}\frac{\partial r_{23}}{\partial h^i}(\text{Ad}^*(g^{-1})x) + h_i^{(1)}\frac{\partial r_{23}}{\partial h^i}(\text{Ad}^*(g^{-1})x) \right] \\
& = \text{Ad}(g)^{\otimes 3} \text{Alt}(dr)(\text{Ad}^*(g^{-1})x)
\end{split}
\end{equation*}
and so, by combining with (\ref{Brackinv}), we conclude $r'$ is indeed a solution of the CDYBE, i.e.
\begin{equation*}
\begin{split}
\text{CDYB}(r')(x) & =[[r'(x),r'(x)]]+ \text{Alt}(dr)(x) \\
& = \text{Alt}(g)^{\otimes 3}([[r(\text{Ad}^*(g^{-1})x),r(\text{Ad}^*(g^{-1})x)]]+ \text{Alt}(dr)(\text{Ad}^*(g^{-1})x)) \\
& = \text{Alt}(g)^{\otimes 3} (\text{CDYBE}(r)(\text{Ad}^*(g^{-1})x))   \\
& = -\text{Alt}(g)^{\otimes 3}[[K,K]] \\
& = -[[K,K]].
\end{split}
\end{equation*}
Since the subspace $S^2(\mathfrak{g})$ of $\mathfrak{g} \otimes \mathfrak{g}$ is invariant under the Adjoint action of the Lie group $G$, the fact $\text{Sym}(r)=K \in (S^2 \mathfrak{g})^{\mathfrak{g}}$ implies  
\begin{equation*}
\text{Sym}(r')=\text{Sym}(r)=K.  
\end{equation*}
Finally, take $h' \in \mathfrak{h}'$ and let $h \in \mathfrak{h}$ be such that $h'=\text{Ad}(g)h$. Then we get 
\begin{equation*}
\begin{split}
\frac{d }{ds}\bigg|_{s=0} r'( \text{Ad}^*(e^{sh'}) x)  & = \text{Ad}(g) \otimes \text{Ad}(g) \frac{d }{ds}\bigg|_{s=0} r( \text{Ad}^*(g^{-1}) \text{Ad}^*(e^{s h'}) x) \\
& = \text{Ad}(g) \otimes \text{Ad}(g) \frac{d }{ds}\bigg|_{s=0} r( \text{Ad}^*(e^{sh})\text{Ad}^*(g^{-1})x), \\
\end{split}
\end{equation*}
where we have used $\text{Ad}^*(e^{sh'})=\text{Ad}^*(g)\text{Ad}^*(e^{sh})\text{Ad}^*(g^{-1})$, and 
\begin{equation*}
\begin{split}
[r'(x),h' \otimes 1 + 1 \otimes h'] & =  [\text{Ad}(g)\otimes \text{Ad}(g) r(\text{Ad}^*(g^{-1})x), \text{Ad}(g)h \otimes 1 + 1 \otimes \text{Ad}(g)h] \\
& = \text{Ad}(g) \otimes \text{Ad}(g) [r(\text{Ad}^*(g^{-1})x), h \otimes 1 + 1 \otimes h].
\end{split}
\end{equation*}
Therefore
\begin{equation*}
    \frac{d }{ds}\bigg|_{s=0} r'( \text{Ad}^*(e^{sh'}) x) +  [r'(x),h' \otimes 1 + 1 \otimes h'] 
 \end{equation*}
equals
 \begin{equation*}
 \text{Ad}(g) \otimes \text{Ad}(g) \left( \frac{d }{ds}\bigg|_{s=0} r( \text{Ad}^*(e^{sh})\text{Ad}^*(g^{-1})x) + [r(\text{Ad}^*(g^{-1})x), h \otimes 1 + 1 \otimes h] \right) =0,      
 \end{equation*}
i.e. the $\mathfrak{h}'$-equivariance of $r'$. \\\\
The previous computations show the map (\ref{mapdynfin}) is well-defined and is clearly bijective (the inverse is obtained  by replacing $g$ with $g^{-1}$). 
\end{proof}
\section*{Appendix B: Conjugacy classes of Cartan Subalgebras of \texorpdfstring{$\mathfrak{g}_\lambda$}{}}\label{AppB}
One of the main results of this paper is the complete description of the set $\text{Dyn}^\mathfrak{C}(\mathfrak{g}_\lambda,K_{\alpha \beta})$ (Theorem \ref{MainTheo2}) and the moduli space $\mathcal{M}^{\mathfrak{C}}(\mathfrak{g}_\lambda,K_{\alpha \beta})$ (Lemma \ref{Lemma3}) of Cartan classical dynamical $r$-matrices associated to $(\mathfrak{g}_\lambda,K_{\alpha \beta})$. Lemma A is essential to achieve this since reduces the task to finding classical dynamical $r$-matrices for representatives of $\mathfrak{C}_{\mathfrak{g}_\lambda}^{\text{Ad}}$ (the conjugacy classes of Cartan subalgebras of $\mathfrak{g}_\lambda$). \\\\ As stated in Subsection 3.1 the cardinality of $\mathfrak{C}_{\mathfrak{g}_\lambda}^{\text{Ad}}$ is at most two for all $\lambda \in \mathbb{R}$. More precisely, in the case where the cardinal is one all the Cartan subalgebras are conjugate equivalent to $\mathfrak{h}_{\lambda,0}$ (the Cartan subalgebra generated by $J_0$ and $P_0$), while in the case it is two they could be conjugate equivalent to $\mathfrak{h}_{\lambda,0}$ or $\mathfrak{h}_{\lambda,1}$ (the Cartan subalgebra generated by $J_1$ and $P_1$). \\\\
In this Appendix we provide a detailed description of the Cartan subalgebras of the simple ($\mathfrak{so}(3,1)$), semisimple ($\mathfrak{so}(4)$ and $\mathfrak{so}(2,2)$) and indecomposable non-solvable ($\mathfrak{iso}(3)$ and $\mathfrak{iso}(2,1)$) six dimensional real Lie algebras.  
\subsection*{Simple case: \texorpdfstring{$\mathfrak{so}(3,1)$}{}}
By $\mathfrak{so}(3,1)_\lambda$ ,with $\lambda<0$, we denote the six dimensional real algebra generated by $\{J_0,J_1,J_2,P_0,P_1,P_2\}$ such that
\begin{equation*}
    [J_a,J_b]=\epsilon_{abc}J^c, \qquad [J_a,P_b]= \epsilon_{abc}P^c, \qquad [P_a,P_b]=\lambda \epsilon_{abc}J^c.
\end{equation*}
All these Lie algebras are isomorphic to $\mathfrak{so}(3,1)_{-1}$ via the map $J_a \to J_a$ and $P_a \to \sqrt{|\lambda|}P_a$. In standard literature $\mathfrak{so}_{-1}(3,1)$ is simply denoted by $\mathfrak{so}(3,1)$ and is the only (up to isomorphism) simple six dimensional real Lie algebra. This follows from the fact $\mathfrak{so}(3,1)$ is isomorphic to $\mathfrak{sl}(2,\mathbb{C})$ (viewed as a real Lie algebra of dimension six). \\\\
The Lie algebra $\mathfrak{so}(3,1)$ has a representation as linear maps  $\text{End}(\mathbb{R}^4)$, via the traceless matrices of the form
\begin{equation}
    m^iJ_i + s^iP_i = \begin{pmatrix} 0 & -m^0 & -m^1 & s^2  \\ m^0 & 0 & m^2 & s^1 \\ m^1 & -m^2 & 0 & s^0 \\ s^2 & s^1 & s^0  & 0 \end{pmatrix}.
\end{equation}
It is straightforward to check that if the eigenvalues of $m^iJ_i+s^i P_i$ are \textit{all} zero then there exists $M \in \text{SO}(3,1)$ such that 
\begin{equation}
M(m^i J_i + s^i P_i)M^{-1} \in  \mathbb{R}_{\neq 0}(P_2-J_0),
\end{equation}
while if \textit{at least one} is not zero there exists $\tilde{M} \in \text{SO}(3,1)$ such that
\begin{equation}\label{selfnorm}
\tilde{M}(m^i J_i + s^i P_i)\tilde{M}^{-1} \in \{m J_0 + s P_0 \ | \ (m,s) \in \mathbb{R}^2, (m,s) \neq (0,0) \}.
\end{equation}
Clearly the Lie subalgebra spanned by $J_2$ and $P_2$ is nilpotent (indeed it is abelian) and by (\ref{selfnorm}) self-normalizing, then we conclude this Lie subalgebra is a Cartan subalgebra of $\mathfrak{so}(3,1)$ and so $\text{rnk}(\mathfrak{so}(3,1))$ is two. \\\\ 
Since the two dimensional Lie subalgebras of $\mathfrak{so}(3,1)$ are Abelian and the normalizer of $P_2-J_0$ is spanned by $P_0-J_2$ and $P_1$, we get that there exist only one conjugacy class of Cartan subalgebras of $\mathfrak{so}(3,1)$ with representative $\text{Span}\{J_0,P_0\}$.

\subsection*{Semisimple cases: \texorpdfstring{$\mathfrak{so}(4)$}{} and \texorpdfstring{$\mathfrak{so}(2,2)$}{}}
In the case when $\lambda>0$, the Lie algebras $\mathfrak{g}_\lambda$
  \begin{equation*}
    [J_a,J_b]=\epsilon_{abc}J^c, \qquad [J_a,P_b]= \epsilon_{abc}P^c, \qquad [P_a,P_b]=\lambda \epsilon_{abc}J^c
\end{equation*}
result to be semi-simple for both the Euclidean and Lorentzian cases. To see this we consider the following six alternative generators 
\begin{equation*}
J_a^{\pm}= \frac{1}{2} \left(J_a \pm \frac{1}{\sqrt{\lambda}}P_a \right) \qquad \text{for } a=0,1,2.
\end{equation*}
Since this new set of generators satisfies the commutation relations 
\begin{equation*}
    [J^{\pm}_a,J^{\pm}_b]= \epsilon_{abc}(J^{\pm})^c, \qquad [J^{+}_a,J^{-}_b]=0, \qquad \text{for } a,b=0,1,2, 
\end{equation*}
then we conclude in the Euclidean case
\begin{equation}
    \mathfrak{so}(4) \cong \mathfrak{su}(2) \oplus \mathfrak{su}(2),
\end{equation} 
while in the Lorentzian case 
\begin{equation}
    \mathfrak{so}(2,2) \cong \mathfrak{sl}(2,\mathbb{R}) \oplus \mathfrak{sl}(2,\mathbb{R});
\end{equation} 
showing explicitly that for $\lambda>0$ the Lie algebras $\mathfrak{g}_\lambda$ could be factorised as the direct sum of two simple three dimensional Lie subalgebras. \\\\
\textbf{Proposition A} \cite{Sugiura1959ConjugateCO}\textbf{.} Let $\mathfrak{g}$ be a semisimple real Lie algebra, with simple decomposition given by
\begin{equation*}
    \mathfrak{g}=  \oplus_{i=1}^n \mathfrak{g}_n 
\end{equation*}
and $\mathfrak{h}$ a Cartan subalgebra of $\mathfrak{g}$. Then $\mathfrak{h}$ is decomposable as a direct sum 
\begin{equation*}
    \mathfrak{h}= \oplus_{i=1}^n \mathfrak{h}_n 
\end{equation*}
such that
\begin{enumerate}
    \item $\mathfrak{h}_i= \mathfrak{h} \cap \mathfrak{g}_i$ for all $i=1,\cdots,n$.
    \item $\mathfrak{h}_i$ is a Cartan subalgebra of the simple Lie algebra $\mathfrak{g}_i$ for $i=1,\cdots,n$.
\end{enumerate}

This proposition indicates that the set of Cartan subalgebras of a semisimple algebra $\mathfrak{g}$ is contained in the set of subalgebras generated by taking the direct sum of Cartan subalgebras of the simple factors of $\mathfrak{g}$. Therefore, since the adjoint actions of $\text{SO}(4)$ and $\text{SO}(2,2)$ factor through the actions of each copy of $\text{SU}(2)$ and $\text{SL}(2,\mathbb{R})$ over each factor of $\mathfrak{su}(2) \oplus \mathfrak{su}(2)$ and $\mathfrak{sl}(2,\mathbb{R}) \oplus \mathfrak{sl}(2,\mathbb{R})$, respectively, the conjugacy classes of Cartan subalgebras of $\mathfrak{so}(4)$ and $\mathfrak{so}(2,2)$ are given by the direct sum of conjugacy classes of Cartan subalgebras of $\mathfrak{su}(2)$ and $\mathfrak{sl}(2,\mathbb{R})$, respectively.      
Hence, the problem reduces to determining  first the conjugacy classes of Cartan subalgebras of real forms of $\mathfrak{sl}(2,\mathbb{C})$, which are known to have rank one: in the case of $\mathfrak{su}(2)$ there exists just one conjugacy class of Cartan subalgebras with representative spanned by $J_0$, while $\mathfrak{sl}(2,\mathbb{R})$ has two conjugacy classes of Cartan subalgebras with representatives spanned by $J_0$ and $J_1$.  \\\\
In the Euclidean case there is clearly  just one conjugacy class of Cartan Lie subalgebras with  
\begin{equation*}
    \text{Span}\{J_0^+\} \oplus \text{Span}\{J_0^-\}  \cong \text{Span}\{ J_0, P_0 \},
\end{equation*}
as a representative. Meanwhile, in the Lorentzian  one can show that  there exist four conjugacy classes of Cartan subalgebras, with 
\begin{equation*}
    \text{Span} \{J_0^+\}  \oplus \text{Span} \{J_0^- \}  \cong \text{Span}\{J_0, P_0 \},
    \end{equation*}
\begin{equation*}
   \text{Span} \{J_1^+\}  \oplus \text{Span} \{J_1^- \} \cong \text{Span}\{J_1, P_1 \},
\end{equation*}
\begin{equation*}
   \text{Span} \{J_0^+\}  \oplus \text{Span} \{J_1^- \} \cong \text{Span}\{J_0+P_0, J_1-P_1 \},
\end{equation*}
and
\begin{equation*}
   \text{Span} \{J_1^+\}  \oplus \text{Span} \{J_0^- \} \cong \text{Span}\{J_1+P_1, J_0-P_0 \}
\end{equation*}
as representatives.


\subsection*{Semidirect sum cases: \texorpdfstring{$\mathfrak{iso}(3)$}{} and \texorpdfstring{$\mathfrak{iso}(2,1)$}{}}
Finally in the $\lambda=0$, the Lie algebras $\mathfrak{g}_\lambda$
 \begin{equation*}
    [J_a,J_b]=\epsilon_{abc}J^c, \qquad [J_a,P_b]= \epsilon_{abc}P^c, \qquad [P_a,P_b]=0
\end{equation*}
are isomorphic to the semidirect sums $\mathfrak{so}(3) \ltimes_{\text{ad}^*} \mathfrak{so}(3)^*$ and $\mathfrak{sl}(2,\mathbb{R}) \ltimes_{\text{ad}^*} \mathfrak{sl}(2,\mathbb{R})^*$ for the Euclidean and Lorentzian cases, respectively.\\\\
\textbf{Proposition B} \cite{1d57cda6-5aff-3896-b816-ef14376cd807}\textbf{.} Let $\mathfrak{k}$ be a Lie algebra that acts over a nilponent Lie algebra $V$ via $\phi: \mathfrak{k} \to \text{Der}(V)$. If $\mathfrak{h}$ is a Cartan subalgebra of $\mathfrak{k}$, then 
\begin{equation*}
    \mathfrak{h} \ltimes_\phi V^0(\mathfrak{h})
\end{equation*}
is a Cartan subalgebra of the Lie algebra $\mathfrak{g} \equiv \mathfrak{k} \ltimes_\phi V$, where 
\begin{equation*}
    V^0(\mathfrak{h})\equiv \{v \in V \ | \ \forall x \in \mathfrak{h}, \ \exists n \in \mathbb{Z}_{>0} \text{ such that } \phi(x)^nv=0 \}.  
\end{equation*}

For our cases of interest, $\mathfrak{k}=\mathfrak{so}(3)$ or $\mathfrak{k}=\mathfrak{sl}(2,\mathbb{R})$ with $V=\mathbb{R}^3 \cong \mathfrak{so}(3)^*$ or $V=\mathbb{R}^{1,2} \cong \mathfrak{sl}(2,\mathbb{R})^*$, respectively, such that the action $\phi$ is given by the coadjoint action, i.e. 
\begin{equation*}
\begin{split}
\text{ad}^*: \mathfrak{k} & \to \text{Der}(\mathfrak{k}^*) \\
x & \mapsto [x,\cdot]
\end{split}
\end{equation*}
in both cases. \\\\
From this proposition it follows immediately
\begin{equation*}
   \langle  J_i \rangle \ltimes_{\text{ad}^*} \langle P_i \rangle  \qquad \text{ for } i=0,1,2
\end{equation*}
are Cartan subalgebras of $\mathfrak{iso}(3)$ and $\mathfrak{iso}(2,1)$, showing in this way both Lie algebras have rank two. \\\\
\textbf{Proposition C.} The Cartan subalgebras of the Lie algebras $\mathfrak{iso}(3)$ and $\mathfrak{iso}(2,1)$ are generated by sets of the form 
    \begin{equation}\label{cartaniso2}
    \{m_aP^a, m_a J^a + n_a P^a  \ | \ \bm{m}, \bm{n} \in \mathbb{R}^{1,2}, \bm{m}^2 \neq 0, \bm{m} \cdot \bm{n} = 0  \}  
    \end{equation}
\begin{proof}
Start considering the most general set of two elements in $\mathfrak{g}_\lambda$ i.e. 
\begin{equation}\label{basis}
    \{ \bm{m}\bm{P} + \bm{n}\bm{J},  \bm{k}\bm{P}+\bm{\ell}\bm{J} \}
\end{equation}
where we use the notation $\bm{m}\bm{P}$ ($\bm{\ell}\bm{J}$) to denote in a compact manner the elements $m_aP^a$ ($\ell _aJ^a$) in the Lie algebra. \\\\
Since 
\begin{equation*}
[\bm{m}\bm{P} + \bm{n}\bm{J}, \bm{k}\bm{P}+\bm{\ell}\bm{J}] = (\bm{m} \wedge \bm{\ell} + \bm{n} \wedge \bm{k})\bm{P} + (\bm{n} \wedge \bm{\ell})\bm{J}
\end{equation*}
in order for (\ref{basis}) to generate a Lie subalgebra, we require: 
\begin{enumerate}
\item $\bm{n} \wedge \bm{\ell} \in \text{Span}\{\bm{n} , \bm{\ell} \}$: Since $\langle \bm{n} \wedge \bm{\ell}, \bm{n} \rangle = \langle \bm{n} \wedge \bm{\ell}, \bm{\ell} \rangle = 0$, this condition holds if and only if $\bm{n} \wedge \bm{\ell}=0$, which implies (both in the Euclidean and Lorentzian settings) $\bm{\ell} = C \bm{n}$ for some $C \in \mathbb{R}$.
\item $\bm{m} \wedge \bm{\ell} + \bm{n} \wedge \bm{k} \in \text{Span}\{\bm{m},\bm{k}\}$: This condition holds if and only if there exist $A,B \in \mathbb{R}$ such that 
\begin{equation*}
    \bm{m} \wedge \bm{\ell} + \bm{n} \wedge \bm{k} = \bm{m} \wedge (C \bm{n}) + \bm{n} \wedge \bm{k}= (C \bm{m}-\bm{k}) \wedge \bm{n}= A \bm{k} + B \bm{m}
\end{equation*}
\end{enumerate}

From above we know $\mathfrak{g}^{(1)} \equiv [\mathfrak{g},\mathfrak{g}]$ is generated by 
\begin{equation*}
    (A \bm{k} +  B\bm{m})  \bm{P}
\end{equation*}
Similarly, by direct computation, we get $\mathfrak{g}^{(2)} \equiv [[\mathfrak{g},\mathfrak{g}],\mathfrak{g}]$ is generated by 
\begin{equation*}
 ((A \bm{k} + B \bm{m} ) \wedge \bm{n}) \bm{P}.   
\end{equation*}
Indeed, by induction, we have that any term $\mathfrak{g}^{(n)}$ in the lower central series is one dimensional and generated by 
\begin{equation*}
    (\cdots(((A \bm{k}+ B \bm{m}) \underbrace{\wedge \bm{n}) \wedge \bm{n}) \wedge \cdots \wedge \bm{n}}_{(n-1)-\text{times}})\bm{P}
\end{equation*}
Hence, the condition 
\begin{equation*}
 (C \bm{m} - \bm{k}) \wedge \bm{n}=0   
\end{equation*}
is required in order to have a Nilpotent Lie subalgebra. Since the condition $C \bm{m}-\bm{k}=0$ provides a one dimensional Lie subalgebra, we conclude that any the two dimensional Nilpotent Lie subalgebras of $\mathfrak{iso}(3)$ or $\mathfrak{iso}(2,1)$ are generated by sets of the form 
\begin{equation}\label{isosub}
    \{\bm{m}\bm{P}+\bm{n}\bm{J}, (C \bm{m}-D \bm{n})\bm{P}+ C \bm{n}\bm{J} \ | \ C,D \in \mathbb{R} \}
\end{equation}
Finally, assume there exists $\bm{s}P + \bm{t}J$ in the normalizer of the Lie algebra generated by (\ref{isosub}) that does not belong to it. The brackets of this element with the generators of the Lie algebra are given by 
\begin{equation*}
  [\bm{m}\bm{P}+\bm{n}\bm{J}, \bm{s}\bm{P}+\bm{t}\bm{J}]=(\bm{m} \wedge \bm{t} + \bm{n} \wedge \bm{s})\bm{P} + (\bm{n} \wedge \bm{t})\bm{J}  
\end{equation*}
and 
\begin{equation*}
 [(C \bm{m}-D \bm{n})\bm{P}+C\bm{n}\bm{J}, \bm{s}\bm{P}+\bm{t}\bm{J}]=[(C \bm{m}-D \bm{n}) \wedge \bm{t} + (C \bm{n} \wedge \bm{s})]\bm{P} + (C\bm{n} \wedge \bm{t})\bm{J}   
\end{equation*}
The fact the right hand side belongs to the Lie algebra generated by (\ref{isosub}) implies
    \begin{equation*}
        \bm{t}=E \bm{n} \qquad \text{for } E \in \mathbb{R}
    \end{equation*}
Hence it follows that $\bm{s}\bm{P} + \bm{t}\bm{J}$ must be of the form $\bm{s}\bm{P} + E\bm{n}\bm{J}$, such that the previous two Lie brackets reduce then to 
\begin{equation*}
  [\bm{m}\bm{P}+\bm{n}\bm{J}, \bm{s}\bm{P}+E\bm{n}\bm{J}]=(E \bm{m} \wedge \bm{n} + \bm{n} \wedge \bm{s})\bm{P}   
\end{equation*}
and 
\begin{equation*}
 [(C \bm{m}-D \bm{n})\bm{P}+C\bm{n}\bm{J}, \bm{s}\bm{P}+E\bm{n}\bm{J}]=[E(C \bm{m}-D \bm{n}) \wedge \bm{n} + (C \bm{n} \wedge \bm{s})]\bm{P}    
\end{equation*}

Here we split in 2 cases:
\begin{enumerate}
    \item If $\bm{n}^2 = 0$, then we find that if we take
    \begin{equation*}
        \bm{s}=E \bm{m} - D \bm{n} + G \bm{m} \wedge \bm{n}
    \end{equation*}
    then 
    \begin{equation*}
  [\bm{m}\bm{P}+\bm{n}\bm{J}, \bm{s}\bm{P}+E\bm{n}\bm{J}]= - G\langle \bm{m},\bm{n} \rangle \bm{n}\bm{P}
\end{equation*}
and 
\begin{equation*}
 [(C \bm{m}-D \bm{n})\bm{P}+C\bm{n}\bm{J}, \bm{s}\bm{P}+E\bm{n}\bm{J}]= - C G \langle \bm{m},\bm{n} \rangle \bm{n}\bm{P}
\end{equation*}
concluding that in this case is possible to find an element in the normalizer that indeed does not belong to the Lie subalgebra generated by (\ref{isosub}), say 
\begin{equation*}
    (E \bm{m}- F \bm{n} + G \bm{m} \wedge \bm{n})\bm{P} + E \bm{n}\bm{J}
\end{equation*}
with $G \neq 0$.
\item If $\bm{n}^2 \neq 0$ and $C \neq 0$, then the only possibility is
\begin{equation*}
\bm{s}=E \bm{m}-F \bm{n}
\end{equation*}
and so an element in the normalizer must be of the form 
\begin{equation*}
 (E \bm{m}- F \bm{n})\bm{P} + E \bm{n}\bm{J}    
\end{equation*}
which clearly belongs to the span of (\ref{isosub}).
\end{enumerate}
\end{proof}
In the Euclidean framework, by conjugating with the right element in $\text{ISO}(3)$, any element in (\ref{cartaniso2}) could be mapped into the element with $\bm{n}=\bm{0}$ and  $\bm{m} \in \text{Span}\{\bm{e}_0\}$. Analogously for the Lorentzian signature, by conjugating with the proper element in $\text{ISO}(3)$ any element in (\ref{cartaniso2}) could be mapped either the element with $\bm{n}=\bm{0}$ and $\bm{m} \in \text{Span}\{\bm{e}_0\}$ (if $\bm{m}^2>0$) or the one with $\bm{n}=\bm{0}$ and  $\bm{m} \in \text{Span}\{ \bm{e}_1 \}$ (if $\bm{m}^2<0$). Hence any Cartan subalgebra of $\mathfrak{iso}(3)$ is conjugate-equivalent to $\text{Span}\{J_0,P_0\}$, while any Cartan subalgebra of $\mathfrak{iso}(2,1)$ is conjugate-equivalent to $\text{Span}\{J_0,P_0\}$ or $\text{Span}\{J_1,P_1\}$.

\section*{Appendix C: Weierstrass factorization theorem}\label{AppC}
In Section \ref{Sec4paper1} we made use of some expansions of certain (meromorphic) functions, in order to recognize that the dynamical generalized complexifications (which include the particular $\Lambda=0$ case found before in \cite{Meusburger:2012wc}) are dynamical gauge equivalent to the dynamical $r$-matrices studied by Feher, Gabor, Marshall, Palla and Pusztai in the setting of WZNW and Calogero-Moser models. \\\\
All the expansions are derived from a "\textit{well-known}" result in complex analysis known as the \textit{Weierstrass factorization Theorem} (see e.g. \cite{Entire}). It states the following: If $f: \mathbb{C} \to \mathbb{C}$ is an entire function with a zero at $z=0$ of order $m$ and with non-zero zeros $\{a_n\}$ (including multiplicities), then there exist an entire function $g:\mathbb{C} \to \mathbb{C}$ and a sequence of integers $\{p_k\}$, such that 
\begin{equation*}
    f(z)=z^m e^{g(z)} \prod_{k=1}^{\infty} E_{p_k}\left( \frac{z}{a_k} \right) 
\end{equation*}
where $E_n$ are the \textit{Weierstrass elementary factors}, given by
\begin{equation*}
E_n(z) = \begin{cases} (1-z) & \text{if } n=0 \\ (1-z) \exp \left( \sum\limits_{m=1}^n \frac{z^m}{m} \right) & \text{otherwise} \end{cases}.
\end{equation*}
For example (see \cite{Entire}), the trigonometric function $\sin(z)$ can be \textit{factorized} as
\begin{equation*}
\sin(z)=z \prod_{k=1}^\infty \left( 1-\frac{z^2}{k^2 \pi^2} \right).
\end{equation*}
and by by taking $\log$ and differentiating both sides 
we get 
\begin{equation}\label{Weierscot}
\cot(z)  - \frac{1}{z}  = 2 \sum_{n=1}^{\infty}\frac{z}{z^2-(\pi n)^2},
\end{equation}
which is used in the derivation of the functions $F(\xi,\zeta,\lambda)$ (\ref{eqFF:1}) and $G(\xi,\zeta,\lambda)$ (\ref{eqFF:2}) for the $\lambda \neq 0$ cases. Then, by replacing $z \mapsto iz$, we conclude
\begin{equation}\label{Weierscoth}
 \coth(z) - \frac{1}{z}  = 2 \sum_{n=1}^{\infty}\frac{z}{(\pi n)^2+z^2}
\end{equation}
which is precisely the representation of the function $g(z)$ used in (\ref{expancoth}). \\\\
Similarly, but using instead the Weierstrass expansion of the function $\cos(z)$ (see \cite{Entire}), we obtain 
\begin{equation}\label{Weierstan}
\tan \left(z \right) = -\frac{1}{z-\pi/2} + \sum_{n=1}^{\infty} \frac{2(z-\pi/2)}{( \pi n)^2-(z-\pi/2)^2}
\end{equation}
used in (\ref{eqf:1}) to find a compact form of $F(\xi,\zeta,0)$. \\\\
Finally, simply by taking the derivative of the previous expansion, we conclude 
\begin{equation}\label{Weierscos2}
\begin{split}
\frac{1}{ 2 \cos^2(z)}=  \left[ \frac{1}{(z-\pi/2)^2} + 2 \sum_{n=1}^{\infty}\frac{(z-\pi/2)^2+(\pi n)^2}{[(z-\pi/2)^2-(\pi n)^2]^2} \right] 
\end{split}
\end{equation}
used in (\ref{eqg:2}) for $G(\xi,\zeta,0)$. 

\section*{Appendix D: Functional coefficients for FGMPP classical dynamical \texorpdfstring{$(\mathfrak{g}_\lambda,\mathfrak{h}_{\lambda,0},K_{\alpha \beta})$}{} \texorpdfstring{$r$}{}-matrices}
In this brief appendix, we give some details on how the coefficients $F(\xi,\zeta,\lambda)$ and $G(\xi,\zeta,\lambda)$ in (\ref{FGMPPans}), for $\lambda \neq 0$, were obtained from (\ref{eqFF:1}) and (\ref{eqFF:2}). \\\\ 
In the case  $\lambda <0$ we express $\theta= \sqrt{|\lambda|}i$ such that $i^2=-1$ (formal symbol). By splitting the function $\cot(z)$ with $z \in \mathbb{R}_\lambda$ into its real and $i$- parts, we get 
\begin{equation*}
\begin{split}
\cot(\psi + \theta \gamma) & =\frac{\sin (\psi) \cos(\psi)}{\sin^2(\psi)+ \sinh^2(\sqrt{|\lambda|} \gamma)} - i \frac{ \sinh(\sqrt{|\lambda|} \gamma)\cosh(\sqrt{|\lambda|} \gamma)}{\sin^2( \psi)+ \sinh^2(\sqrt{|\lambda|} \gamma)} \\
& = \frac{\sin(2 \psi)}{\cosh(2 \sqrt{|\lambda|} \gamma)-\cos(2 \psi)} + i \frac{\sinh(2 \sqrt{|\lambda|} \gamma)}{\cos(2 \psi)-\cosh(2 \sqrt{|\lambda|} \gamma)},
\end{split}
\end{equation*}
and so, from (\ref{eqFF:1}), we obtain 
\begin{equation*}
\begin{split}
F(\xi,\zeta,\lambda) & = -\frac{1}{2} \left[ \cot \left( \xi+ \theta \zeta \right) + \cot \left( \xi - \theta \zeta \right) \right] \\
& = - \frac{\sin(2 \xi)}{-\cos(2 \xi)+ \cosh(2 \sqrt{|\lambda|} \zeta)},
\end{split}
\end{equation*}      
while from (\ref{eqFF:2}), 
\begin{equation*}
\begin{split}
G(\xi,\zeta,\lambda) & = -\frac{1}{2 \theta} \left[ \cot \left( \xi+ \theta \zeta \right) - \cot \left( \xi - \theta \zeta \right) \right] \\
& =  \frac{1}{2 \sqrt{|\lambda|}} \frac{\sinh(2\sqrt{|\lambda|} \zeta)}{-\cos(2\xi)+ \cosh(2\sqrt{|\lambda|}\zeta)}.
\end{split}
\end{equation*}      
Similarly, in the $\lambda>0$ case we express $\theta= \sqrt{\lambda} i$ such that $i^2=1$ (formal symbol). In this case, the splitting of the function $\cot(z)$ with $z \in \mathbb{R}_\lambda$ into its real and $i$- parts is given by
\begin{equation*}
\begin{split}
\cot(\psi + \theta \gamma) & =  \frac{\sin(2 \psi)}{\cos(2\sqrt{\lambda}\alpha)-\cos (2 \psi)} + i \frac{\sin(2 \sqrt{\lambda} \gamma)}{\cos(2 \psi)- \cos(2 \sqrt{\lambda} \gamma)} 
\end{split}
\end{equation*}
so in this case, from (\ref{eqFF:1}) we get 
\begin{equation*}
\begin{split}
F(\xi,\zeta,\lambda) & = - \frac{\sin(2 \xi)}{-\cos (2 \xi)+\cos(2 \sqrt{\lambda}\zeta)},
\end{split}
\end{equation*}
and from (\ref{eqFF:2}) 
\begin{equation*}
\begin{split}
G(\xi,\zeta,\lambda) & = \frac{1}{2
\sqrt{\lambda}} \frac{\sin(2 \sqrt{\lambda} \zeta)}{-\cos(2 \xi)+ \cos(2 \sqrt{\lambda}\zeta)}.
\end{split}
\end{equation*}


\printbibliography 

\end{document}